\newcommand{\Order}{\mathrm{O}}
\newcommand{\defeq}{\stackrel{\mbox{\scriptsize{\normalfont\rmfamily def. }}}{=}}
\newcommand{\shortqed}{\hfill \mbox{$\blacksquare$} \smallskip}
\renewcommand{\Vec}[1]{\mbox{\boldmath $#1$}}
\newcommand{\MDG}{\mathcal{G}} 
\newcommand{\ME}{\mathcal{E}} 
\newcommand{\dsig}{\Psi_\sigma} 
\newcommand{\Z}{Z}
\newcommand{\I}{{\cal I}} 
\newcommand{\dtv}{{\cal D}_{\rm tv}}
\newcommand{\dpw}{{\cal D}_{\rm pw}}
\newcommand{\tmix}{\tau}
\newcommand{\GCD}{{\rm GCD}}
\newtheorem{theorem}{Theorem}[section]
\newtheorem{lemma}[theorem]{Lemma}
\newtheorem{corollary}[theorem]{Corollary}
\newtheorem{proposition}[theorem]{Proposition}
\newtheorem{observation}[theorem]{Observation}
\title{Deterministic Random Walks for Rapidly Mixing Chains
}
\author{
 Takeharu Shiraga\footnote{
   Graduate School of Information Science and Electrical Engineering, 
   Kyushu University, Fukuoka, Japan\protect\\ 
  {\ttfamily \{takeharu.shiraga,yamauchi,kijima,mak\}@inf.kyushu-u.ac.jp}} \and 
 Yukiko Yamauchi\footnotemark[1] \and 
 Shuji Kijima\footnotemark[1] \and
 Masafumi Yamashita\footnotemark[1]
}
\begin{document}
\makeatletter
\makeatother
\maketitle
\begin{abstract}
The rotor-router model is a deterministic process analogous to a
simple random walk on a graph.
 This paper is concerned with a generalized model, {\em
functional-router model},
   which imitates a Markov chain possibly containing irrational transition probabilities.
 We investigate the discrepancy of the number of tokens at a single vertex
    between the functional-router model and its corresponding Markov chain, and 
   give an upper bound in terms of the mixing time of the Markov chain. 

\smallskip
\noindent
{\bf Key words}: 
  rotor-router model, 
  Markov chain Monte Carlo, 
  mixing time. 
\end{abstract}

\section{Introduction}
 The \textit{rotor-router model}, also known as the {\em Propp machine}, 
  is a deterministic process analogous to a random walk on a graph~\cite{PDDK96, CS06, KKM12}. 
 In the model\footnote{See Section~\ref{sec:rotor-router}, for the detail of the rotor-router model. }, 
   tokens distributed over vertices are deterministically served to neighboring vertices 
    by rotor-routers equipped on vertices, instead of traveling on the graph at random. 
 Doerr et al.~\cite{CDST07, DF09} first called the rotor-router model {\em deterministic random walk}, 
  meaning a ``derandomized, hence {\em deterministic}, version of a {\em random walk}.''

\paragraph{Single vertex discrepancy for multiple-walk.}
 Cooper and Spencer~\cite{CS06} investigated 
  the rotor-router model (with multiple tokens, in precise; {\em multiple-walk}) on $\mathbb{Z}^n$, and  
   gave an analysis on the discrepancy on a single vertex: 
 they showed a bound that $|\chi^{(t)}_v - \mu^{(t)}_v| \leq c_n$, 
  where 
   $\chi^{(t)}_v$ (resp.\ $\mu^{(t)}_v$) denotes 
   the number (resp.\ the expected number) of tokens on vertex $v \in \mathbb{Z}^n$ 
   in a rotor-router model (resp.\ in the corresponding random walk) at time $t$ 
   on the condition that $\mu^{(0)}_v = \chi^{(0)}_v$ for any $v$, and 
   $c_n$ is a constant depending only on $n$ but independent of the total number of tokens in the system. 
 Cooper et al.~\cite{CDST07} showed $c_1 \simeq 2.29$, and  
 Doerr and Friedrich~\cite{DF09} showed that 
  $c_2$ is about 7.29 or 7.83 depending on the routing rules. 
 On the other hand, 
 Cooper et al.~\cite{CDFS10} gave 
  an example of $|\chi^{(t)}_v - \mu^{(t)}_v| = \Omega(\sqrt{kt})$ on infinite $k$-regular trees, 
  the example implies that the discrepancy can get infinitely large as increasing the total number of tokens. 

 Motivated by a derandomization of Markov chains, 
 Kijima et al.~\cite{KKM12} are concerned with multiple-walks 
  on general finite multidigraphs $(V, {\cal A})$, and 
  gave a bound $|\chi^{(t)}_v - \mu^{(t)}_v| = \Order(|V||{\cal A}|)$ 
  in case that corresponding Markov chain is ergodic, reversible and lazy. 
 They also gave some examples of $|\chi^{(t)}_v - \mu^{(t)}_v|=\Omega({|\cal A}|)$. 
 Kajino et al.~\cite{KKM13} 
  sophisticated the approach by~\cite{KKM12}, and 
  gave a bound in terms of the second largest eigenvalue and eigenvectors of the corresponding Markov chain, 
  for an arbitrary irreducible finite Markov chain, 
    which may not be lazy, reversible nor aperiodic. 

 In the context of load balancing, 
  Rabani et al.~\cite{RSW98} are concerned with a deterministic algorithm 
  similar to the rotor-router model corresponding to Markov chains with {\em symmetric} transition matrices, and 
  gave a bound $\Order(\Delta \log(|V|)/(1-\lambda_*))$, where 
   $\Delta$ denotes the maximum degree of the transition diagram and 
   $\lambda_*$ denotes the second largest eigenvalue of the transition matrix.  

 For some specific finite graphs, 
  such as hypercubes and tori, 
  some bounds on the discrepancy in terms of logarithm of the size of transition diagram 
  are known. 
 For $n$-dimensional hypercube, 
  Kijima et al.~\cite{KKM12} gave a bound $\Order(n^3)$, and 
  Kajino et al.~\cite{KKM13} improved the bound to $\Order(n^2)$. 
 Recently, 
  Akbari and Berenbrink~\cite{AB13} gave a bound $\Order(n^{1.5})$, 
  using results by Friedrich et al.~\cite{FGS12}. 
 Akbari and Berenbrink~\cite{AB13} also gave a bound $\Order(1)$ 
  for constant dimensional tori.  
 Those analyses highly depend on the structures of the specific graphs, and 
  it is difficult to extend the technique to other combinatorial graphs.  
 Kijima et al.~\cite{KKM12} gave rise to a question 
  if there is a deterministic random walk 
   for {\#}P-complete problems, such as $0$-$1$ knapsack solutions, bipartite matchings, etc., 
  such that $|\chi^{(t)}_v - \mu^{(t)}_v|$ is bounded by a polynomial in the input size.  

\paragraph{Other topics on deterministic random walk.}
 As a highly related topic, 
 Holroyd and Propp~\cite{HP10} analyzed 
   ``hitting time'' of the rotor-router model 
   with a {\em single token} ({\em single-walk}) on finite simple graphs, and  
  gave a bound $|\nu^{(t)}_v - t \pi_v| = \Order(|V||{\cal A}|)$ 
  where $\nu^{(t)}_v$ denotes the frequency of visits of the token at vertex $v$ in $t$ steps, and 
  $\pi$ denotes the stationary distribution of the corresponding random walk. 
 Friedrich and Sauerwald~\cite{FS10} 
   studied the cover time of a single-walk version of the rotor-router model 
   for several basic finite graphs such as tree, star, torus, hypercube and complete graph. 
Recently, Kosowski and Pajak~\cite{KP14} studied the cover time of a multiple tokens version of the rotor-router model. 

 Holroyd and Propp~\cite{HP10} also proposed a generalized model called {\em stack walk}, 
   which is the first model of deterministic random walk for {\em irrational} transition probabilities, as far as we know. 
 While Holroyd and Propp~\cite{HP10} indicated the existence of routers 
   approximating irrational transition probabilities well, 
 Angel et al.~\cite{AJJ10} gave a routing algorithm based on the ``shortest remaining time (SRT)'' rule. 
 Shiraga et al.~\cite{shiraga}, that is a preliminary work of this paper, 
  independently proposed another model based on the van der Corput sequence, 
  motivated by irrational transition probabilities, too. 

As another topic on the rotor-router model, 
  the aggregation model has been investigated~\cite{LP05, Kleber05, LP08, LP09}. 
 For a random walk, 
  tokens in the Internal Diffusion-Limited Aggregation (IDLA) model on $\mathbb{Z}^n$ 
   asymptotically converge to the Euclidean ball~\cite{LBG92}, and  
 Jerison, Levine and Sheffield~\cite{JLS12} recently showed  
  the fluctuations from circularity are $\Order(\log t)$ after $t$ steps.
 For the rotor-router model, 
  Levine and Peres~\cite{LP05, LP08, LP09} showed that 
  tokens in the rotor-router aggregation model also form the Euclidean ball, and  
  showed several bounds for the fluctuations. 
 Kleber~\cite{Kleber05} gave some computational results. 

 Doerr et al.~\cite{DFS08} showed that 
  information spreading by the rotor-router model is faster than the one by a random walk 
 on some specific graphs, 
  namely trees with the common depth and the common degree of inner vertices, and 
  random graphs with restricted connectivity.  
 Doerr et al.~\cite{DFKS09} 
   gave some computational results for this phenomena. 
 There is much other research on information spreading 
  by the rotor-router model on some graphs~\cite{ADHP09, Doerr09, DFS09a, DFS09b, DHL09, HF09}.

\paragraph{Our Results.}
 This paper is concerned with the {\em functional-router} model (of multiple-walk ver.), 
   which is a generalization of the rotor-router model. 
 While the rotor-router model is an analogy with a simple random walk on a graph, 
  the functional-router model imitates a Markov chain 
  possibly containing irrational transition probabilities. 
 In the functional-router model, 
  a configuration of $M$ tokens over a finite set $V = \{1, \ldots, N\}$ 
  is deterministically updated by functional-routers defined on vertices\footnote{
    See Section~\ref{sec:fr-model}, for the detail of the functional-router model.}. 
 Let $\chi^{(t)} = (\chi^{(t)}_1, \ldots, \chi^{(t)}_N) \in \mathbb{Z}_{\geq 0}^N$ denote 
   the configuration at time $t = 0, 1, 2, \ldots$, 
   i.e., $\sum_{v \in V} \chi^{(t)}_v = M$. 
 For comparison, 
  let $\mu^{(0)} = \chi^{(0)}$, and 
  let $\mu^{(t)} = \mu^{(0)}P^t$ for a transition matrix $P$ corresponding to the functional-router model, 
 then 
  $\mu^{(t)}\in \mathbb{R}_{\geq 0}^N$ denotes the expected configuration of $M$ tokens 
  independently according to $P$ for $t$ steps. 
 A main contribution of the paper is to show that  
   $|\chi^{(t)}_v - \mu^{(t)}_v| \leq 6 (\pi_{\max}/\pi_{\min}) t^* \Delta$ 
   holds for any $v \in V$ at any time $t$ 
   in case that the corresponding transition matrix $P$ is {\em ergodic} and {\em reversible}, 
  where 
  $\pi_{\max}$ and $\pi_{\min}$ are respectively the maximum/minimum values of the stationary distribution $\pi$ of $P$, 
  $t^*$ is the {\em mixing rate} of $P$, and 
  $\Delta$ is the maximum degree of the transition diagram. 

 An example of a random walk containing irrational transition probabilities is 
  the $\beta$-random walk devised by Ikeda et al.~\cite{IKY09}, 
  which achieves an $\Order(N^2)$ hitting time and an $\Order(N^2 \log N)$ cover time {\em for any graphs}. 
 Another example should be the Markov chain Monte Carlo (MCMC), 
   such as Gibbs samplers for the Ising model (cf.\ \cite{Sinclair93, PW96}), 
   reversible Markov chains for queueing networks~(cf.~\cite{KM08}), etc.

\paragraph{Organization}
 This paper is organized as follows.  
 In Section~\ref{sec:MCMC}, 
   we briefly review MCMC, 
    as a preliminary. 
 In Section~\ref{sec:modelresult}, 
  we describe the functional-router model and our main theorem.   
 In Section~\ref{sec:main}, 
   we prove the main theorem. 
 In Section~\ref{sec:routingmodel}, 
   we present four particular functional-router models, and give detailed analyses on them. 
 In Section~\ref{sec:applications}, 
   we show some examples of the bounds 
    for some Markov chains over the combinatorial objects, 
   which are known to be rapidly mixing. 

\section{Preliminaries: Markov Chain Monte Carlo}\label{sec:MCMC}
 As a preliminary step of explaining the functional-router model, 
  this section briefly reviews the Markov chain Monte Carlo (MCMC). 
 See e.g., \cite{Sinclair93, LPW08, MT06} for details of MCMC. 

 Let $V \defeq \{1, \ldots, N\}$ be a finite set, and 
  suppose that we wish to sample from $V$ with a probability 
  proportional to a given positive vector 
  $f=(f_1, \ldots, f_N) \in \mathbb{R}_{\geq 0}^{N}$; 
  for example, we are concerned with {\em uniform} sampling of $0$-$1$ knapsack solutions in Section~\ref{sec:knapsack}, 
   where $V$ denotes the set of $0$-$1$ knapsack solutions and $f_v = 1$ for each $v \in V$. 
 The idea of a Markov chain Monte Carlo (MCMC) is 
   to sample from a limit distribution of a Markov chain 
   which is equal to the target distribution $f/\|f\|_1$ 
   where $\|f\|_1 = \sum_{v \in V} f_v$ is the normalizing constant. 

 Let $P \in \mathbb{R}_{\geq 0}^{N \times N}$ be 
   a transition matrix of a Markov chain with the state space $V$, 
  where $P_{u, v}$ denotes the transition probability from $u$ to $v$ ($u, v \in V$). 
 A transition matrix $P$ is {\em irreducible} if $P^t_{u, v} > 0$ for any $u$ and $v$ in $V$, and 
  is {\em aperiodic} if $\GCD\{ t \in \mathbb{Z}_{>0} \mid P^t_{x, x} > 0\} = 1$ holds for any $x \in V$, 
  where $P^t_{u, v}$ denotes the $(u, v)$ entry of $P^t$, the $t$-th power of $P$. 
 An irreducible and aperiodic transition matrix is called {\em ergodic}. 
 It is well-known for a ergodic $P$, 
  there is a unique {\em stationary distribution} $\pi \in \mathbb{R}_{\geq 0}^{N}$, 
   i.e., $\pi P = \pi$, 
  and the limit distribution is $\pi$, 
   i.e., $\xi P^{\infty} = \pi$ for any probability distribution $\xi\in \mathbb{R}_{\geq 0}^{N}$ on $V$. 

 An ergodic Markov chain defined by a transition matrix $P \in \mathbb{R}_{\geq 0}^{N \times N}$ is {\em reversible} 
  if the {\em detailed balance equation} 
\begin{eqnarray}\label{eq:db}
  f_u P_{u, v} = f_v P_{v, u}
\end{eqnarray}
  holds for any $u, v \in V$. 
 When $P$ satisfies the detailed balance equation, 
  it is not difficult to see that $fP=f$ holds, 
  meaning that $f/\|f\|_1$ is the limit distribution (see e.g., ~\cite{LPW08}). 
Let $\xi$ and $\zeta$ be a distribution on $V$, 
then the {\em total variation distance} $\dtv$ between $\xi$ and $\zeta$ is defined  by  
\begin{eqnarray}
\label{def:TV}
\dtv(\xi, \zeta)
\defeq \max_{A\subset V} \left| \sum_{v\in A}(\xi_v-\zeta_v )\right|
=\frac{1}{2} \left\|\xi-\zeta\right\|_1. 
\label{def:dtv}
\end{eqnarray}
 Note that $\dtv(\xi, \zeta) \leq 1$, since $\|\xi\|_1$ and $\|\zeta\|_1$ are equal to one, respectively. 
The {\em mixing time} of a Markov chain is defined by 
\begin{eqnarray}
 \tau(\varepsilon) \defeq 
 \max_{v \in V} \min \left\{ t \in \mathbb{Z}_{\geq 0} \mid \dtv(P^t_{v, \cdot}, \pi) \leq \varepsilon \right\}
\label{def:mix}
\end{eqnarray}
 for any $\varepsilon > 0$, 
  where  $P^t_{v, \cdot}$ denotes the $v$-th row vector of $P^t$; 
  i.e., $P^t_{v, \cdot}$ denotes the distribution of 
   a Markov chain at time $t$ 
   stating from the initial state $v \in V$. 
 In other words, 
   the distribution $P^t_{v, \cdot}$ of the Markov chain after $\tau(\varepsilon)$ transition 
   satisfies $\dtv(P^t_{v, \cdot}, \pi) \leq \varepsilon$, 
  meaning that we obtain an approximate sample from the target distribution. 

For convenience, let
 $h(t) \defeq \max_{w\in V}\dtv\left( P^t_{w, \cdot}, \pi \right)$ 
 for $t \geq 0$, then 
 it is well-known that 
  $h$ satisfies a kind of {\em submultiplicativity}. 
 We will use the following proposition in our analysis in Section~\ref{sec:main}. 
 See Appendix~\ref{sec:RMC} for the proof (cf. \cite{LPW08,MT06}). 
\begin{proposition}\label{prop:dltimes}
  For any integers $\ell$  $(\ell\geq 1)$ and 
  $k$ $(0 \leq k < \tau(\gamma))$, 
\begin{eqnarray*}
h \left(\ell \cdotp \tau(\gamma)+k \right)\leq \frac{1}{2}(2\gamma )^\ell
\end{eqnarray*}
 holds for any $\gamma$ $(0<\gamma<1/2)$. 
\shortqed
\end{proposition}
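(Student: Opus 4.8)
The plan is to prove a sharp submultiplicative inequality and then iterate it. The natural auxiliary quantity is the \emph{pairwise} distance $\bar{d}(t)\defeq\max_{x,y\in V}\dtv(P^t_{x,\cdot},P^t_{y,\cdot})$, which is easier to make submultiplicative than $h$ itself because it compares two genuinely updatable distributions rather than a distribution against the fixed point $\pi$. First I would record two elementary comparisons: on one hand $h(t)\le\bar{d}(t)$, obtained by writing $\pi=\sum_{y}\pi_y P^t_{y,\cdot}$ (since $\pi P^t=\pi$) and applying convexity of $\dtv$; on the other hand $\bar{d}(t)\le 2h(t)$ by the triangle inequality through $\pi$. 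I would also note monotonicity: because $P$ is stochastic, $\|wP\|_1\le\|w\|_1$ for every signed row vector $w$, hence $\dtv(\xi P,\zeta P)\le\dtv(\xi,\zeta)$; applied to $\xi=P^t_{v,\cdot},\ \zeta=\pi$ this shows each $\dtv(P^t_{v,\cdot},\pi)$, and thus $h$, is non-increasing, and applied pairwise it shows $\bar{d}$ is non-increasing. Monotonicity together with the definition of $\tau(\gamma)$ gives $h(\tau(\gamma))\le\gamma$, and therefore $\bar{d}(\tau(\gamma))\le 2\gamma$.

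The heart of the argument is the single inequality
\begin{eqnarray*}
\dtv(\xi P^t,\zeta P^t)\le\dtv(\xi,\zeta)\cdot\bar{d}(t)\qquad\text{for all distributions }\xi,\zeta\text{ on }V,
\end{eqnarray*}
from which both submultiplicativities follow. To prove it (assuming $\xi\ne\zeta$), write $\xi-\zeta=\dtv(\xi,\zeta)\,(\alpha-\beta)$, where $\alpha,\beta$ are the normalized positive and negative parts of $\xi-\zeta$; these are genuine probability distributions of equal $\ell_1$-mass $\dtv(\xi,\zeta)$ because $\xi-\zeta$ sums to zero. Then $(\xi-\zeta)P^t=\dtv(\xi,\zeta)\,(\alpha P^t-\beta P^t)$, so $\dtv(\xi P^t,\zeta P^t)=\dtv(\xi,\zeta)\,\dtv(\alpha P^t,\beta P^t)$, and expanding $\alpha P^t=\sum_w\alpha_w P^t_{w,\cdot}$, $\beta P^t=\sum_{w'}\beta_{w'}P^t_{w',\cdot}$ and using convexity of $\dtv$ through the product weights $\alpha_w\beta_{w'}$ gives $\dtv(\alpha P^t,\beta P^t)\le\bar{d}(t)$. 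Taking $\zeta=\pi$ (so $\zeta P^t=\pi$) and maximizing over $\xi=P^s_{x,\cdot}$ yields the mixed bound $h(s+t)\le h(s)\,\bar{d}(t)$; taking $\zeta=P^s_{y,\cdot}$ and maximizing over both $x$ and $y$ yields the pure submultiplicativity $\bar{d}(s+t)\le\bar{d}(s)\,\bar{d}(t)$.

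Finally I would assemble these. Writing $\ell\tau(\gamma)+k=\tau(\gamma)+\bigl((\ell-1)\tau(\gamma)+k\bigr)$ and applying the mixed inequality once, followed by submultiplicativity and monotonicity of $\bar{d}$,
\begin{eqnarray*}
h\bigl(\ell\tau(\gamma)+k\bigr)\le h(\tau(\gamma))\,\bar{d}\bigl((\ell-1)\tau(\gamma)+k\bigr)\le\gamma\cdot\bar{d}(\tau(\gamma))^{\,\ell-1}\le\gamma\,(2\gamma)^{\ell-1}=\tfrac12(2\gamma)^\ell,
\end{eqnarray*}
where $k\ge 0$ is used so that $(\ell-1)\tau(\gamma)+k\ge(\ell-1)\tau(\gamma)$. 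The factor $\tfrac12$ is exactly what the mixed inequality buys: spending one factor as $h(\tau(\gamma))\le\gamma$ rather than $\bar{d}(\tau(\gamma))\le 2\gamma$ saves the extra $2$. I expect the main obstacle to be the submultiplicativity step, specifically justifying the positive/negative decomposition and the convexity bound $\dtv(\alpha P^t,\beta P^t)\le\bar{d}(t)$ cleanly, since everything else is the triangle inequality, stochasticity of $P$, and bookkeeping; some care is also needed to check that the chain closes in the edge case $\ell=1$, where it degenerates to $h(\tau(\gamma)+k)\le\gamma\cdot\bar{d}(k)\le\gamma$ using $\bar{d}(k)\le 1$.
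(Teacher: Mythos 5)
Your proof is correct and follows essentially the same route as the paper's: the same pairwise distance $\bar{d}$ (the paper's $\bar{h}$), the same convexity lemma $\dtv(\xi P^t,\zeta P^t)\le\bar{h}(t)$, the same positive/negative-part decomposition giving $\dtv(\xi P^t,\zeta P^t)\le\dtv(\xi,\zeta)\,\bar{h}(t)$, and the same mixed/pure submultiplicativities assembled as $h(\tau)\,\bar{h}(\tau)^{\ell-1}\le\gamma(2\gamma)^{\ell-1}$. The only cosmetic difference is bookkeeping: you absorb the offset $k$ via monotonicity of $\bar{d}$ (which you prove explicitly, and which also cleanly justifies $h(\tau(\gamma))\le\gamma$), whereas the paper peels it off first using $\bar{h}(k)\le 1$.
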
 
Since the submultiplicativity, $t^* \defeq \tau(1/4)$, called {\em mixing rate}, is often used as a characterization of $P$. 
%

\section{Model and Main Results}\label{sec:modelresult}
A {\em functional-router model} is 
   a  deterministic process analogous to a multiple random walk. 
 Roughly speaking, 
  a router defined on each vertex $u$ deterministically serves tokens to $v$ 
  at a rate of $P_{u, v}$ in a functional-router model, 
  while tokens on a vertex $u$ moves to a neighboring vertex $v$ with probability $P_{u, v}$ in a (multiple) random walk.

 To get the idea, 
  let us start 
   with explaining the rotor-router model (see e.g., ~\cite{CS06, KKM12}), 
   which corresponds to a simple random walk on a graph. 

\subsection{Rotor-router model}\label{sec:rotor-router}
 Let $\MDG=(V, \ME)$ be a simple undirected graph\footnote{
  In Section~\ref{sec:roter}, we are concerned with a model on multidigraphs. 
}, where $V=\{1, \ldots, N\}$. 
 Let ${\cal N}(v)$ denote 
  the neighborhood of $v \in V$. 
 For convenience, 
  let $\delta(v)=|{\cal N}(v)|$. 
 Let $\chi^{(0)} \in \mathbb{Z}_{\geq 0}^{N}$ be an initial configuration of tokens, and  
 let $\chi^{(t)} \in \mathbb{Z}_{\geq 0}^{N}$ denote the configuration of tokens 
   at time~$t \in \mathbb{Z}_{\geq 0}$ in the rotor-router model. 
 A configuration $\chi^{(t)}$ 
   is updated by {\em rotor-routers} on vertices, as follows. 
 Without loss of generality, 
   we may assume that an ordering $u_0, \ldots, u_{\delta(v)-1}$ is defined on ${\cal N}(v)$ for each $v \in V$. 
 Then, a rotor-router $\sigma_v \colon \mathbb{Z}_{\geq 0} \to {\cal N}(v)$ on $v \in V$ is defined by 
\begin{eqnarray}
\sigma_v(j) \defeq u_{i \bmod \delta(v)}
\label{def]rotor}
\end{eqnarray}
 for $j \in \mathbb{Z}_{\geq 0}$. 
 Let
\begin{eqnarray*}
 Z_{v, u}^{(t)} \defeq 
 \left| \left\{j \in \{0, \ldots, \chi^{(t)}_{v}-1 \} 
 \mid  \sigma_v \left(  j + \textstyle\sum_{s=0}^{t-1} \chi^{(s)}_v \right) = u \right\} \right|
\end{eqnarray*}
  for $v, u \in V$, 
 where $Z_{v, u}^{(t)}$ denotes the number of tokens served from $v$ to $u$ in the update. 
 Then, $\chi^{(t+1)}$ is defined by 
\begin{eqnarray*}\textstyle
 \chi_u^{(t+1)} 
  \defeq \sum_{v \in V} Z_{v, u}^{(t)}
\end{eqnarray*}
  for each $u \in V$. 

 It is not difficult to see that 
\begin{eqnarray*}
 \frac{| \{j \in \{0, \ldots, z-1 \} \mid  \sigma_v(j) = u \} |}{z} 
&\xrightarrow{ z \to \infty}&
 \frac{1}{\delta(v)}
\end{eqnarray*}
  holds, which implies that  
  the ``outflow ratio'' 
   $\sum_{s=0}^t Z_{v, u}^{(s)} / \sum_{s=0}^t \chi^{(s)}_v$ of tokens at $v$ to $u$ 
 approaches asymptotically to $1/\delta(v)$ as $t$ increasing. 
 Thus, the rotor-router hopefully approximates a distribution of tokens by a random walk. 

\subsection{Functional-router model}\label{sec:fr-model}
 Let $P \in \mathbb{R}_{\geq 0}^{N \times N}$ be 
   a transition matrix of a Markov chain with a state space $V \defeq \{1,\ldots,N\}$, 
  where $P_{u,v}$ denotes the transition probability from $u$ to $v$ ($u, v \in V$). 
 Note that $P_{u,v}$ may be irrational\footnote{  
     e.g., $P_{u,v} = \sqrt{5}/10$, $\exp(-10)$, $\sin(\pi/3)$, etc.~are allowed.
  }.
 In this paper, we assume that $P$ is {\em ergodic} and {\em reversible} (see Section~\ref{sec:MCMC}). 
 Let $\mu^{(0)} = (\mu^{(0)}_1, \ldots, \mu^{(0)}_N) \in \mathbb{Z}_{\geq 0}^N$ 
   denote an initial configuration of $M$ tokens over $V$, and 
 let $\mu^{(t)} \in \mathbb{R}_{\geq 0}^N$ 
   denote the {\em expected} configuration of tokens 
   independently according to $P$ at time $t \in \mathbb{Z}_{\geq 0}$, 
   i.e., $\|\mu^{(t)}\|_1 = M$ and $ \mu^{(t)} =  \mu^{(0)} P^t$. 

 Let $\MDG=(V, \ME)$ be the transition digram of~$P$, 
  meaning that $\ME = \{(u, v) \in V^2 \mid P_{u, v} > 0\}$. 
 Note that $\ME$ may contain self-loop edges, and also 
 note that $|\ME| \leq N^2$ holds. 
 Let ${\cal N}(v)$ denote 
  the (out-)neighborhood\footnote{
 Since $P$ is reversible, 
  $u \in {\cal N}(v)$ if and only if $v \in {\cal N}(u)$, and then 
  we abuse ${\cal N}(v)$ for in-neighborhood of $v \in V$. 
} of $v \in V$, 
  i.e., ${\cal N}(v) = \{ u \in V \mid P_{v, u}>0 \}$, and  
 let $\delta(v)=|{\cal N}(v)|$. 
 Note that $v \in {\cal N}(v)$ if $P_{v,v} > 0$. 

 Let $\chi^{(0)}=\mu^{(0)}$, and 
 let $\chi^{(t)} \in \mathbb{Z}_{\geq 0}^N$ denote the configuration of tokens 
   at time~$t \in \mathbb{Z}_{\geq 0}$ in the functional-router model. 
 A configuration $\chi^{(t)}$ 
   is updated by {\em functional-routers} 
   $\sigma_v \colon \mathbb{Z}_{\geq 0} \to {\cal N}(v)$ defined on each $v \in V$ 
  to imitate $P_{v, u}$. 
To be precise, let
\begin{eqnarray}
 \I_{v, u}[z, z') \defeq \left|\left\{ j \in \{z, \ldots, z'-1\} \mid \sigma_v(j)=u \right\} \right|
 \label{def:Ivu}
\end{eqnarray}
  for $v, u \in V$ and for any $z, z' \in \mathbb{Z}_{\geq 0}$ satisfying $z <z'$, 
  for convenience. 
 Then, the functional router $\sigma_v$ on $v\in V$ is designed to minimize 
\begin{eqnarray*}
\left|\frac{\I_{v, u}[0, z)}{z} - P_{v,u} \right|
\end{eqnarray*}
 for $z \in \mathbb{Z}_{\geq 0}$. 
See Section~\ref{sec:routingmodel} for some specific functional-routers. 
 Let
\begin{eqnarray}\textstyle
 \Z_{v, u}^{(t)} = \I_{v, u}\left[ \sum_{s=0}^{t-1} \chi_v^{(s)}, \sum_{s=0}^{t} \chi_v^{(s)} \right) 
 \label{def:Zvut}
\end{eqnarray}
  for $v, u \in V$, 
 where $Z_{v, u}^{(t)}$ denotes the number of tokens served from $v$ to $u$ in the update. 
 Then, $\chi^{(t+1)}$ is defined by 
\begin{eqnarray}\textstyle
 \chi_u^{(t+1)} \defeq \sum_{v \in V} Z_{v, u}^{(t)}
 \label{eq:Zvut-vsum}
\end{eqnarray}
 for each $u \in V$. 

 We in Section~\ref{sec:routingmodel} give some specific functional-routers, 
  in which the ``outflow ratio'' 
   $\sum_{s=0}^t Z_{v, u}^{(s)} / \sum_{s=0}^t \chi^{(s)}_v$ from $v$ to $u$ 
 approaches asymptotically to $P_{v, u}$ as $t$ increases, 
 meaning that the functional-router hopefully approximate a distribution of tokens by a random walk. 


 Figure~\ref{fig:one} shows an example of the time evolution of a functional router model. 
 In the example, $V=\{1, 2\}$ and the initial configuration of tokens is $\chi^{(0)}=(7, 0)$. 
 According to the functional router $\sigma_1$ defined in the figure, 
\begin{eqnarray*}
  \I_{1, 1}[0, 7) &=& \left|\left\{ j \in \{0, \ldots, 6\} \mid \sigma_1(j)=1 \right\} \right|=4 \quad \mbox{and} \\
  \I_{1, 2}[0, 7) &=& \left|\left\{ j \in \{0, \ldots, 6\} \mid \sigma_1(j)=2 \right\} \right|=3, 
\end{eqnarray*}
  and then the configuration of tokens is $\chi^{(1)}=(4, 3)$ at time 1.  
 In a similar way, 
\begin{eqnarray*}
  \I_{1, 1}[7, 11) &=& \left|\left\{ j \in \{7, \ldots, 10\} \mid \sigma_1(j)=1 \right\} \right|=3, \\
  \I_{1, 2}[7, 11) &=& \left|\left\{ j \in \{7, \ldots, 10\} \mid \sigma_1(j)=2 \right\} \right|=1, \\
  \I_{2, 1}[0, 3) &=& \left|\left\{ j \in \{ 0,1,2 \} \mid \sigma_1(j)=1 \right\} \right|=2, \quad \mbox{and} \\
  \I_{2, 2}[7, 3) &=& \left|\left\{ j \in \{ 0,1,2 \} \mid \sigma_1(j)=2 \right\} \right|=1, 
\end{eqnarray*}
  provides $\chi^{(2)}=(5, 2)$. 
\begin{figure}[t]
 \begin{center}
  \includegraphics[width=16.6cm]{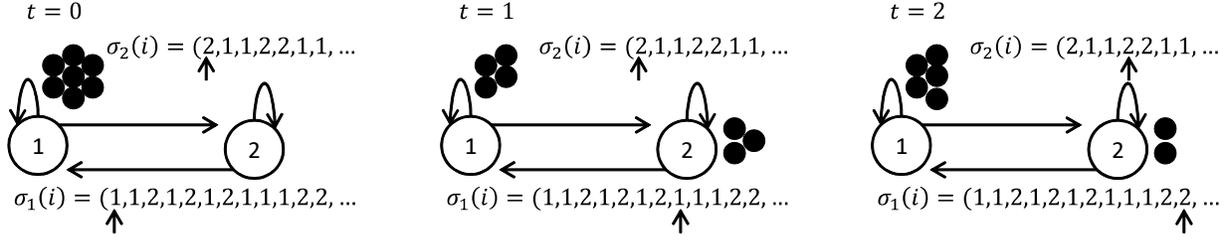}
 \end{center}
 \caption{An example of a functional router model}
 \label{fig:one}
\end{figure}

\subsection{Main results}\label{subsec:main_results}
%
Our goal is to estimate the discrepancy $|\chi^{(T)}_w - \mu^{(T)}_w |$ 
for $w \in V$ and $T \geq 0$ for the functional router model described in Section \ref{sec:fr-model}. 
%
%
For convenience, let 
\begin{eqnarray}
\label{def:dsigma}
\dsig = \max_{\substack{v\in V, \, u \in {\cal N}(v), \, t \geq 0}}\left| \Z_{v, u}^{(t)}-\chi_v^{(t)} P_{v, u}\right|
\end{eqnarray}
depending on a functional-router model $\sigma$, then the following is our main theorem. 
\begin{theorem}
\label{thm:mixupper-vertexds}
 Let $P \in \mathbb{R}_{\geq 0}^{N \times N}$ be a transition matrix of 
  a {\em reversible} and ergodic Markov chain with a state space $V$, where $\pi$ denotes the stationary distribution of $P$ and $\tau(\gamma)$ denotes the mixing time of $P$ for any $\gamma\ (0<\gamma<1/2)$. 
 Then, the discrepancy between $\chi^{(T)}$ and $\mu^{(T)}$ satisfies
\begin{eqnarray*}
\left|\chi_w^{(T)}-\mu_w^{(T)}\right|
\leq \dsig \frac{2(1-\gamma )}{1-2\gamma }\tau(\gamma )\frac{\pi_w}{\pi_{\min}}\Delta 
\end{eqnarray*}
for any $w\in V$, $T\geq 0$ and $\gamma\ (0 < \gamma < 1/2)$, where $\Delta$ denotes the maximum degree of the transition diagram of $P$, {\rm i.e.} $\Delta=\max_{v\in V}\delta(v)$. 
\end{theorem}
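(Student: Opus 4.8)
The plan is to track, at each time step, the gap between the deterministic routing and the idealized linear (Markov) update, and then propagate these one-step errors forward using the mixing of $P$. Writing the update \eqref{eq:Zvut-vsum} as $\chi^{(t+1)} = \chi^{(t)}P + \mathbf{E}^{(t)}$, where the row vector $\mathbf{E}^{(t)}$ has entries
\begin{eqnarray*}
\mathbf{E}^{(t)}_u = \sum_{v\in V}\left(\Z^{(t)}_{v,u} - \chi^{(t)}_v P_{v,u}\right),
\end{eqnarray*}
I would first record two elementary facts about $\mathbf{E}^{(t)}$. Only $v$ with $u\in{\cal N}(v)$ contribute (otherwise both $\Z^{(t)}_{v,u}$ and $P_{v,u}$ vanish) and, by reversibility, there are exactly $\delta(u)\le\Delta$ such $v$, each contributing at most $\dsig$ in absolute value; hence $|\mathbf{E}^{(t)}_u|\le\delta(u)\dsig\le\Delta\dsig$. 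Moreover, since tokens are conserved ($\sum_u\Z^{(t)}_{v,u}=\chi^{(t)}_v$ and $\sum_u P_{v,u}=1$), the error vector is balanced: $\sum_{u\in V}\mathbf{E}^{(t)}_u=0$. This zero-sum property is what lets the mixing of $P$ damp the accumulated error.

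Next I would unroll the recursion. Since $\chi^{(0)}=\mu^{(0)}$ and $\mu^{(T)}=\mu^{(0)}P^T$, telescoping gives
\begin{eqnarray*}
\chi^{(T)}-\mu^{(T)} = \sum_{t=0}^{T-1}\mathbf{E}^{(t)}P^{\,T-1-t},
\end{eqnarray*}
so the $w$-th coordinate is $\sum_{t=0}^{T-1}\sum_{u\in V}\mathbf{E}^{(t)}_u\,(P^{\,T-1-t})_{u,w}$. Using $\sum_u\mathbf{E}^{(t)}_u=0$ I may subtract $\pi_w$ from each $(P^{s})_{u,w}$ at no cost, and then invoke reversibility of $P^s$ (detailed balance extends to powers) to convert the column deviation into a row deviation:
\begin{eqnarray*}
(P^{s})_{u,w}-\pi_w = \frac{\pi_w}{\pi_u}\left((P^{s})_{w,u}-\pi_u\right).
\end{eqnarray*}
This is the crux of the argument, and the main obstacle: the bound must be expressed through $\dtv(P^{s}_{w,\cdot},\pi)$, a \emph{row} distance, since that is exactly what the mixing time $\tau$ controls, and reversibility is the only thing that turns the column of $P^s$ above into such a row (it is also the source of the factor $\pi_w/\pi_{\min}$). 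Combining the two facts with $\pi_w/\pi_u\le\pi_w/\pi_{\min}$ and $\sum_u|(P^{s})_{w,u}-\pi_u|=2\dtv(P^{s}_{w,\cdot},\pi)\le 2h(s)$ yields, for each term with $s=T-1-t$,
\begin{eqnarray*}
\left|\left(\mathbf{E}^{(t)}P^{s}\right)_w\right| \le 2\Delta\,\dsig\,\frac{\pi_w}{\pi_{\min}}\,h(s).
\end{eqnarray*}

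Finally I would sum the resulting geometric-type tail. Reindexing $s=T-1-t$ and extending to infinity (legitimate since $h\ge0$) gives $|\chi^{(T)}_w-\mu^{(T)}_w|\le 2\Delta\,\dsig\,(\pi_w/\pi_{\min})\sum_{s\ge0}h(s)$, and Proposition~\ref{prop:dltimes} controls the tail: splitting $s=\ell\tau(\gamma)+k$ with $0\le k<\tau(\gamma)$, using the trivial bound $h(s)\le1$ on the $\ell=0$ block and $h(\ell\tau(\gamma)+k)\le\frac12(2\gamma)^\ell$ for $\ell\ge1$, I obtain
\begin{eqnarray*}
\sum_{s\ge0}h(s)\le\tau(\gamma)+\frac{\tau(\gamma)}{2}\cdot\frac{2\gamma}{1-2\gamma}=\frac{1-\gamma}{1-2\gamma}\,\tau(\gamma),
\end{eqnarray*}
which, substituted back, produces exactly $\dsig\,\frac{2(1-\gamma)}{1-2\gamma}\tau(\gamma)\frac{\pi_w}{\pi_{\min}}\Delta$. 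Apart from the reversibility step that trades a column of $P^s$ for a row so that $\tau$ applies, the remainder is bookkeeping of the one-step error and summation of the mixing tail.
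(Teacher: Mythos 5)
Your proposal is correct and follows essentially the same route as the paper: your unrolled identity $\chi^{(T)}-\mu^{(T)}=\sum_{t=0}^{T-1}\mathbf{E}^{(t)}P^{T-1-t}$ together with the zero-sum observation is exactly the paper's Lemma~\ref{lemm:maindisc} in vector form, the detailed-balance flip $(P^{s})_{u,w}-\pi_w=\frac{\pi_w}{\pi_u}\left((P^{s})_{w,u}-\pi_u\right)$ is the paper's Proposition~\ref{prop:reversible} step, and your block decomposition of $\sum_{s\ge 0}h(s)$ via Proposition~\ref{prop:dltimes} reproduces Lemma~\ref{lemm:dtsum} verbatim. All constants check out, so no gaps.
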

We remark that 
\begin{eqnarray}
\label{eq:dsigmabound}
 \dsig 
 \leq 
 \max_{\substack{v\in V, \ u\in {\cal N}(v), \\ z, z' \in \mathbb{Z}_{\geq 0} \, \mbox{s. t. }\, z'>z }}
 \left| \I_{v, u}[z, z')-(z'-z)  P_{v,u}\right|
\end{eqnarray}
 holds, 
 since 
\begin{eqnarray*}\textstyle
 \Z_{v, u}^{(t)}-\chi_v^{(t)} P_{v,u} 
 = \textstyle
  \I_{v, u}\left[ \sum_{s=0}^{t-1} \chi_v^{(s)}, \sum_{s=0}^{t} \chi_v^{(s)} \right) 
- \left(\sum_{s=0}^{t} \chi_v^{(s)} - \sum_{s=0}^{t-1} \chi_v^{(s)} \right) P_{v,u}
\end{eqnarray*}
 holds by the definition. 
 For instance, the {\em SRT router}, which we will introduce in Section~\ref{sec:greedy}, satisfies $\dsig\leq 2$, and we obtain the following, from Theorem~\ref{thm:mixupper-vertexds}. 
 %
\begin{theorem}
\label{thm:mixupper-vertexgreedy}
 Let $P \in \mathbb{R}_{\geq 0}^{N \times N}$ be a transition matrix of 
  a reversible and ergodic Markov chain with a state space $V$, where $\pi$ denotes the stationary distribution of $P$ and $t^*$ denotes the mixing rate of $P$. 
For a SRT router model, the discrepancy between $\chi^{(T)}$ and $\mu^{(T)}$ satisfies
\begin{eqnarray*}
\left|\chi_w^{(T)}-\mu_w^{(T)}\right|
\leq \frac{6\pi_w}{\pi_{\min}} t^* \Delta
\end{eqnarray*}
for any $w\in V$ and $T\geq 0$, where $\Delta$ denotes the maximum degree of the transition diagram of $P$, {\rm i.e.} $\Delta=\max_{v\in V}\delta(v)$. 
\end{theorem}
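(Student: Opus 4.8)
The plan is to obtain Theorem~\ref{thm:mixupper-vertexgreedy} as an immediate specialization of the general bound in Theorem~\ref{thm:mixupper-vertexds}. The crucial point is that Theorem~\ref{thm:mixupper-vertexds} holds \emph{for every} $\gamma \in (0,1/2)$, so we are free to select the value of $\gamma$ that aligns the mixing time $\tau(\gamma)$ with the definition of the mixing rate $t^* = \tau(1/4)$ recorded at the end of Section~\ref{sec:MCMC}. First I would set $\gamma = 1/4$, which is admissible since $1/4 \in (0,1/2)$; then $\tau(\gamma) = \tau(1/4) = t^*$ by definition.

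Next I would evaluate the prefactor at this choice of $\gamma$. The quantity $\frac{2(1-\gamma)}{1-2\gamma}$ becomes $\frac{2(1-1/4)}{1-1/2} = \frac{3/2}{1/2} = 3$. Substituting the SRT-router bound $\dsig \leq 2$ (stated in the remark following Theorem~\ref{thm:mixupper-vertexds} and to be established in Section~\ref{sec:greedy}), the right-hand side of Theorem~\ref{thm:mixupper-vertexds} collapses to
\begin{eqnarray*}
\dsig \cdot \frac{2(1-\gamma)}{1-2\gamma}\,\tau(\gamma)\,\frac{\pi_w}{\pi_{\min}}\,\Delta
\ \leq\ 2 \cdot 3 \cdot t^* \cdot \frac{\pi_w}{\pi_{\min}}\,\Delta
\ =\ \frac{6\pi_w}{\pi_{\min}}\, t^* \Delta,
\end{eqnarray*}
which is exactly the claimed inequality, valid for any $w \in V$ and $T \geq 0$. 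This completes the derivation.

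I want to stress that there is essentially no obstacle \emph{within} this corollary: once Theorem~\ref{thm:mixupper-vertexds} is in hand, the result is pure arithmetic driven by the single structural input $\dsig \leq 2$. Consequently, the genuine work does not lie here but in verifying that the SRT router achieves this constant. Concretely, by the remark~\eqref{eq:dsigmabound} it suffices to show that the interval discrepancy $\left|\I_{v,u}[z,z') - (z'-z)P_{v,u}\right|$ is bounded by $2$ uniformly over all $v$, $u \in \mathcal{N}(v)$, and $z < z'$; this is a property of the specific routing rule and is deferred to Section~\ref{sec:greedy}. Granting that bound, the present theorem follows by the substitution above.
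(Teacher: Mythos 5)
Your proposal is correct and matches the paper's own proof: the paper likewise obtains Theorem~\ref{thm:mixupper-vertexgreedy} by substituting $\gamma = 1/4$ into Theorem~\ref{thm:mixupper-vertexds}, evaluating $\frac{2(1-1/4)}{1-2\cdot(1/4)} = 3$, and invoking the SRT bound $\dsig < 2$ from \eqref{bound:greedy} (itself a consequence of Theorem~\ref{thm:upper-const} via \eqref{eq:dsigmabound}, exactly as you indicate). The only cosmetic difference is that the paper's bound on $\dsig$ is strict, giving a strict inequality before the final equality, which is immaterial to the stated conclusion.
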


 See Section~\ref{sec:routingmodel} for detailed arguments on the bounds of $\dsig$ for some specific functional routers. 

\section{Analysis of the Point-wise Distance}\label{sec:main}
 This section proves Theorem~\ref{thm:mixupper-vertexds}. 
 Our proof technique is similar to previous works~\cite{CS06, KKM12, RSW98}, in some part. 
To begin with, we establish the following key lemma. 
\begin{lemma}
\label{lemm:maindisc}
 Let $P \in \mathbb{R}_{\geq 0}^{N \times N}$ be a transition matrix of 
  a reversible and ergodic Markov chain with a state space $V$, 
  and let $\pi$ be the stationary distribution of $P$. 
 Then, 
\begin{eqnarray*}
\chi ^{(T)}_w-\mu ^{(T)}_w=\sum_{t=0}^{T-1}\sum_{u\in V}\sum_{v\in {\cal N}(u)}\left( \Z^{(t)}_{v, u}-\chi^{(t)}_v P_{v, u}\right) \left( P^{T-t-1}_{u, w}-\pi_{w} \right) 
\end{eqnarray*}
holds for any $w\in V$ and for any $T\geq 0$.
\end{lemma}
\begin{proof}
Remark that
\begin{eqnarray}
\chi_w^{(T)} - \mu_w^{(T)}
 \ =\ \left( \chi^{(T)}-\mu^{(0)}  P^T \right)_w
 \ =\ \left(\chi^{(T)} P^0-\chi^{(0)} P^T \right)_w
\label{eq:vchi-vmu}
\end{eqnarray}
 holds where the last equality follows the assumption $\chi^{(0)}=\mu^{(0)}$. 
 It is not difficult to see that 
\begin{eqnarray*}
\chi^{(T)} P^0-\chi^{(0)} P^T
&=& \left(\chi^{(T)} P^0  -\chi^{(T-1)} P^{1}\right)
   +\left(\chi^{(T-1)} P^1  -\chi^{(T-2)} P^{2}  \right)+ \nonumber\\
&&\cdots +\left(\chi^{(2)} P^{T-2}  -\chi^{(1)} P^{T-1}  \right)
  +\left(\chi^{(1)} P^{T-1}  -\chi^{(0)} P^{T} \right)\nonumber\\
&=& \sum_{t=0}^{T-1}\left( \chi^{(t+1)} P^{T-t-1} - \chi^{(t)} P^{T-t} \right)
\end{eqnarray*}
holds, thus we have
\begin{eqnarray}
 (\ref{eq:vchi-vmu})
&=& \sum_{t=0}^{T-1}\left(\left( \chi^{(t+1)} P^{T-t-1}\right)_w -\left( \chi^{(t)} P^{T-t} \right)_w\right) \nonumber\\
&=& \sum_{t=0}^{T-1}\left(\sum_{u\in V}\chi_u^{(t+1)} P^{T-t-1}_{u, w} -\sum_{u\in V}( \chi^{(t)} P)_u  P^{T-t-1}_{u, w}\right) \nonumber\\
&=& \sum_{t=0}^{T-1}\sum_{u\in V} \left( \chi_u^{(t+1) }-( \chi^{(t)} P)_u\right)  P^{T-t-1}_{u, w}.  
\label{eq:vchi-vmu2}
\end{eqnarray}
 While 
  $\sum_{u\in V} \left( \chi_u^{(t+1) }-( \chi^{(t)} P)_u\right)  P^{T-t-1}_{u, w}$ 
   in \eqref{eq:vchi-vmu2} 
   may not be $0$ in general, 
  remark that 
\begin{eqnarray*}
 \sum_{u\in V} \left( \chi_u^{(t+1) }-( \chi^{(t)} P)_u\right)
&=&\sum_{u\in V} \chi_u^{(t+1)}- \sum_{u\in V}\sum_{v\in V} \chi^{(t)}_v P_{v, u} \\
&=&\sum_{u\in V} \chi_u^{(t+1)}- \sum_{v\in V} \chi^{(t)}_v \sum_{u\in V} P_{v, u} \\
&=& M-M 
\ =\ 0
\end{eqnarray*}
 holds for any $t \geq 0$. 
Hence
\begin{eqnarray}
(\ref{eq:vchi-vmu2})
 &=& \sum_{t=0}^{T-1}\sum_{u\in V} \left( \chi_u^{(t+1) }-( \chi^{(t)} P)_u\right)P^{T-t-1}_{u, w}
    -\sum_{t=0}^{T-1}\sum_{u\in V} \left( \chi_u^{(t+1) }-( \chi^{(t)} P)_u\right)\pi_w\nonumber\\
 &=& \sum_{t=0}^{T-1}\sum_{u\in V}\left(\chi_u^{(t+1)}-(\chi^{(t)}P)_u\right) \left( P^{T-t-1}_{u, w}-\pi_w \right) 
\label{eq:vchi-vmu3}
\end{eqnarray}
 holds. 
Since $P$ is reversible, $Z_{v, u}^{(t)}=0$ for any $v \notin {\cal N}(u)$ and 
$\chi_u^{(t+1)} = \sum_{v \in V} Z_{v, u}^{(t)} = \sum_{v \in {\cal N}(u)} Z_{v, u}^{(t)}$ holds by definition \eqref{eq:Zvut-vsum}. 
Thus, 
\begin{eqnarray*}
(\ref{eq:vchi-vmu3})
&=&
 \sum_{t=0}^{T-1}\sum_{u\in V}
  \left(\sum_{v\in {\cal N}(u)} Z_{v, u}^{(t)} - \sum_{v\in {\cal N}(u)}\chi_v^{(t)}P_{v, u}\right) 
  \left( P^{T-t-1}_{u, w}-\pi_w \right) \\
&=&
 \sum_{t=0}^{T-1}\sum_{u\in V}\sum_{v\in V}
 \left( Z_{v, u}^{(t)}-\chi_v^{(t)}P_{v, u}\right) \left( P^{T-t-1}_{u, w}-\pi_w \right) 
\end{eqnarray*}
holds, and we obtain the claim.
\end{proof}
%
Now, we are concerned with {\em reversible} Markov chains, 
 and show Theorem~\ref{thm:mixupper-vertexds}. 
%
\begin{proof}[Proof of Theorem~\ref{thm:mixupper-vertexds}]
By Lemma \ref{lemm:maindisc} and \eqref{def:dsigma}, we obtain that
\begin{eqnarray}
\label{eq:mdisc1}
\left| \chi ^{(T)}_w-\mu ^{(T)}_w\right| 
 &\leq & \sum_{t=0}^{T-1} \sum_{u\in V} \sum_{v\in {\cal N}(u)}
         \left| Z_{v, u}^{(t)}-\chi_v^{(t)}P_{v, u}\right|  \left| P^{T-t-1}_{u, w}-\pi_w \right| \\ 
 &\leq & \dsig \sum_{t=0}^{T-1}\sum_{u\in V}\sum_{v\in {\cal N}(u)} \left| P^{T-t-1}_{u, w}-\pi_w \right| \nonumber \\
\label{eq:mdisc} &=& \dsig \sum_{t=0}^{T-1}\sum_{u\in V}\delta(u)\left| P^{t}_{u, w}-\pi_w \right| 
\end{eqnarray}
  holds. 
 Since $P$ is reversible, 
  $P^t_{u, w}=\frac{\pi_w}{\pi_u}P^t_{w, u}$ holds for any $w$ and $u$ in $V$ 
 (see Proposition~\ref{prop:reversible} in Appendix~\ref{sec:RMC}). 
 Thus
\begin{eqnarray}
\label{eq:mdisc3}
(\ref{eq:mdisc})
 &=&
  \dsig \sum_{t=0}^{T-1}\sum_{u\in V} \delta(u)
  \left| \frac{\pi_w}{\pi_u}\left( P^{t}_{w, u}-\pi_u\right)\right| \nonumber \\
 &\leq& 
  \dsig \Delta \frac{\pi_w}{\pi_{\min}}  \sum_{t=0}^{T-1}\sum_{u\in V} 
  \left| P^{t}_{w, u}-\pi_u \right| \nonumber \\
 &=&
  2\dsig \Delta \frac{\pi_w}{\pi_{\min}}\sum_{t=0}^{T-1}
  \dtv\left( P^t_{w, \cdot}, \pi \right) 
\label{eq:mdisc4}
\end{eqnarray}
  where the last equality follows 
  the fact that 
   $\sum_{u\in V}|P^t_{w, u}-\pi_u| = 2\dtv\left( P^t_{w, \cdot}, \pi \right)$, 
   by the definition \eqref{def:dtv} of the total variation distance. 
By Proposition \ref{prop:dltimes}, we obtain the following. 
\begin{lemma}
\label{lemm:dtsum}
For any $v\in V$ and for any $T>0$, 
\begin{eqnarray*}
 \sum_{t=0}^{T-1} \dtv\left( P^t_{v, \cdot}, \pi \right) 
 \leq \frac{1-\gamma }{1-2\gamma }\, \tau(\gamma )
\end{eqnarray*}
holds for any $\gamma$ $(0<\gamma <1/2)$. 
\end{lemma}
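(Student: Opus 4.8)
The plan is to discard the per-state information and work with the uniform quantity $h(t) = \max_{w\in V}\dtv\left(P^t_{w,\cdot},\pi\right)$. Since $\dtv\left(P^t_{v,\cdot},\pi\right) \leq h(t)$ for every fixed $v$, the left-hand side is bounded by $\sum_{t=0}^{T-1} h(t)$. Proposition~\ref{prop:dltimes} forces $h$ to decay geometrically, so all partial sums are dominated by the convergent series $\sum_{t=0}^{\infty} h(t)$. Passing to this infinite sum is the key move: it simultaneously removes the dependence on $T$ and reduces the whole lemma to estimating one geometric-type series.

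To evaluate $\sum_{t=0}^{\infty} h(t)$, I would partition the indices into consecutive blocks of length $\tau(\gamma)$, writing each $t$ uniquely as $t = \ell \cdot \tau(\gamma) + k$ with $\ell \geq 0$ and $0 \leq k < \tau(\gamma)$, so that
\[
\sum_{t=0}^{\infty} h(t) = \sum_{\ell=0}^{\infty}\sum_{k=0}^{\tau(\gamma)-1} h\left(\ell\cdot\tau(\gamma)+k\right).
\]
For every block with $\ell \geq 1$, Proposition~\ref{prop:dltimes} applies verbatim and yields $h\left(\ell\cdot\tau(\gamma)+k\right) \leq \tfrac{1}{2}(2\gamma)^\ell$, so that block contributes at most $\tau(\gamma)\cdot\tfrac{1}{2}(2\gamma)^\ell$. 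The only point requiring care is the initial block $\ell = 0$, which Proposition~\ref{prop:dltimes} does not cover; there I simply invoke the trivial estimate $h(t) \leq 1$ (valid because every total variation distance is at most one), giving a contribution of at most $\tau(\gamma)$ from the $\ell = 0$ block.

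Summing the geometric tail $\sum_{\ell=1}^{\infty}(2\gamma)^\ell = \frac{2\gamma}{1-2\gamma}$, which converges precisely because $0 < 2\gamma < 1$, the two contributions combine to
\[
\sum_{t=0}^{\infty} h(t) \leq \tau(\gamma)\left(1 + \frac{1}{2}\cdot\frac{2\gamma}{1-2\gamma}\right) = \tau(\gamma)\cdot\frac{1-\gamma}{1-2\gamma},
\]
which is exactly the asserted bound. I do not anticipate any genuine obstacle: the argument is a clean block-decomposition plus geometric-series estimate, and the only subtleties are treating the initial block ($\ell = 0$) separately from the range covered by Proposition~\ref{prop:dltimes}, and checking that the final algebra collapses to $\frac{1-\gamma}{1-2\gamma}$.
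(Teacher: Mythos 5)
Your proposal is correct and matches the paper's own proof essentially step for step: the same reduction to $h(t)=\max_{w\in V}\dtv\left(P^t_{w,\cdot},\pi\right)$, the same passage to the infinite sum, the same block decomposition $t=\ell\cdot\tau(\gamma)+k$ with the trivial bound $h(t)\leq 1$ on the $\ell=0$ block and Proposition~\ref{prop:dltimes} on the blocks with $\ell\geq 1$, and the same geometric-series algebra yielding $\frac{1-\gamma}{1-2\gamma}\,\tau(\gamma)$. No gaps; if anything, your explicit inequality $\dtv\left(P^t_{v,\cdot},\pi\right)\leq h(t)$ is slightly more careful than the paper's first line, which writes it as an equality.
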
 

\begin{proof}
 Let $h(t) = \max_{w\in V}\dtv\left( P^t_{w, \cdot}, \pi \right)$, for convenience. 
 Then, $h(t)$ is at most $1$ for any $t \geq 0$, by the definition \eqref{def:dtv} of the total variation distance. 
 By Proposition \ref{prop:dltimes}, 
\begin{eqnarray*}
 \sum_{t=0}^{T-1} \dtv\left( P^t_{w, \cdot}, \pi \right) 
 &=& \sum_{t=0}^{T-1}h(t) 
 \ \leq \ \sum_{t=0}^{\infty}h(t) 
 \ =\ 
  \sum_{\ell =0}^{\infty} \sum_{k =0}^{\tau(\gamma)-1}
  h(\ell \cdotp \tau(\gamma)+k) \\
 &=& 
  \sum_{k =0}^{\tau(\gamma)-1} h(k) + \sum_{\ell =1}^{\infty} 
  \sum_{k =0}^{\tau(\gamma)-1} h(\ell \cdotp \tau(\gamma)+k) 
\ \leq \  
  \sum_{k =0}^{\tau(\gamma)-1} 1 + \sum_{\ell =1}^{\infty} 
  \sum_{k =0}^{\tau(\gamma)-1} \frac{1}{2}\left( 2\gamma \right)^{\ell} \\
& = &
  \tau(\gamma) + \sum_{\ell =1}^{\infty} \tau(\gamma) \frac{1}{2}\left( 2\gamma \right)^{\ell} 
\ =\ 
  \tau(\gamma)+\frac{\gamma }{1-2\gamma }\, \tau(\gamma)
\ =\ 
  \frac{1-\gamma }{1-2\gamma }\, \tau(\gamma)
\end{eqnarray*}
holds, and we obtain the claim.
\end{proof}

Now we obtain Theorem~\ref{thm:mixupper-vertexds} 
from \eqref{eq:mdisc3} and Lemma~\ref{lemm:dtsum}
\end{proof}

%
\section{Specific Functional-routers}\label{sec:routingmodel}
 This section shows some functional-router models, 
   namely {\em SRT router} in Section~\ref{sec:greedy}, 
    {\em billiard router} in Section~\ref{sec:billiard}, 
   {\em quasi-random router} in Section~\ref{sec:vander}, and  
   rotor-router on multigraph  in Section~\ref{sec:roter}. 
Using Theorem~\ref{thm:mixupper-vertexds}, we give upper bounds of $|\chi^{(T)}_w - \mu^{(T)}_w|$ for them. 
%
\subsection{SRT router}\label{sec:greedy}
 This section introduces {\em SRT router}, 
  which is originally given by Holroyd and Propp~\cite{HP10} and Angel et al.~\cite{AJJ10} by the name of stack-walk. 
 The SRT router $\sigma_v(i)$ ($i \in \mathbb{Z}_{\geq 0}$) on $v \in V$ is defined, as follows. 
Let 
\begin{eqnarray} 
T_i(v)=\{u\in {\cal N}(v)\mid \I_{v,u}[0,i)-(i+1)P_{v,u}<0\}. 
\end{eqnarray}
Then, let $\sigma_v(i)$ be $u^*\in T_i(v)$ minimizing the value 
\begin{eqnarray} 
\frac{\I_{v,u}[0,i)+1}{P_{v,u}}
\end{eqnarray}
in all $u\in T_i(v)$. 
If there are two or more such $u\in T_v(i)$, then let $u^*$ be arbitrary one of them. 

 Since $\sigma_v(i) \in T_i(v)$,  
  we can see that $\I_{v, u}[0, i+1) - (i+1)P_{v, u} < 1$ holds 
  for any $u$, $v$ and $i$, 
  by an induction on $i \in \mathbb{Z}_{\geq 0}$. 
 The following theorem is 
  due to Angel et al. \cite{AJJ10} and Tijdeman~\cite{T80}.
\begin{theorem}\cite{T80, AJJ10}
\label{thm:upper-const}
 For any transition matrix $P$, 
\begin{eqnarray*}
 \left|\I_{v, u}[0, z)- z\cdotp P_{v,u}\right| < 1
\end{eqnarray*}
 holds for any $v, u\in V$ and any $z \in \mathbb{Z}_{> 0}$. 
\end{theorem}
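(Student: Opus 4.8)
The statement splits into the two one-sided inequalities $\I_{v,u}[0,z) - z\cdot P_{v,u} < 1$ and $z\cdot P_{v,u} - \I_{v,u}[0,z) < 1$. The first (\emph{upper}) inequality is already the content of the paragraph preceding the theorem: since $\sigma_v(i)$ is always chosen from $T_i(v)$, a one-line induction on $i$ gives $\I_{v,u}[0,i+1) - (i+1)P_{v,u} < 1$ for every $u$, and putting $z=i+1$ finishes it. So all the real work is in the \emph{lower} inequality $z\cdot P_{v,u} - \I_{v,u}[0,z) < 1$, and that is what I would concentrate on. Note that the naive bound (sum the upper inequality over the other $\delta(v)-1$ neighbours, using $\sum_u(\I_{v,u}[0,z)-zP_{v,u})=0$) only yields $z P_{v,u}-\I_{v,u}[0,z)<\delta(v)-1$, so a genuinely finer argument is required to reach $<1$.

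My plan is to read the SRT rule as an \emph{earliest-deadline-first} scheduler. I regard the $k$-th emission of a symbol $u\in{\cal N}(v)$ as a unit job with deadline $D_{u,k}=k/P_{v,u}$ and release time $r_{u,k}=(k-1)/P_{v,u}$, with exactly one job run per step. Two observations make the correspondence exact: the quantity $(\I_{v,u}[0,i)+1)/P_{v,u}$ that the rule minimizes is precisely the deadline of the next job of $u$, and membership $u\in T_i(v)$ says exactly that this next job has already been released by step $i$ (its release time is $<i+1$). Under this dictionary the lower inequality $z\cdot P_{v,u}-\I_{v,u}[0,z)<1$ is equivalent to the assertion that \emph{every job is completed before its deadline}, i.e.\ the schedule produced by the rule misses no deadline.

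To show that the rule meets all deadlines I would argue by minimal counterexample. Suppose $z^\ast$ is the least time, and $u^\ast$ a symbol, with $\I_{v,u^\ast}[0,z^\ast)\le z^\ast P_{v,u^\ast}-1$; writing $m=\I_{v,u^\ast}[0,z^\ast)$, the job $(u^\ast,m+1)$ has deadline $\delta^\ast=(m+1)/P_{v,u^\ast}\le z^\ast$ yet is unplaced in $[0,z^\ast)$. Let $t_0$ be the last step before $z^\ast$ at which a job of deadline $>\delta^\ast$ is run (set $t_0=-1$ if there is none). The earliest-deadline property forces a clean structure: at step $t_0$ no job of deadline $\le\delta^\ast$ was released and still pending, for otherwise the rule would have run it; this propagates to show that every job run in $(t_0,z^\ast)$, together with the missing job $(u^\ast,m+1)$, has release time $>t_0$ and deadline $\le\delta^\ast$. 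That exhibits at least $z^\ast-t_0$ distinct jobs lying in the window $(t_0,\delta^\ast]$. On the other hand, because $\sum_{u\in{\cal N}(v)}P_{v,u}=1$, the number of jobs with release $>t_0$ and deadline $\le\delta^\ast$ is at most $\sum_u\max\!\left(0,(\delta^\ast-t_0)P_{v,u}-1\right)<\sum_u(\delta^\ast-t_0)P_{v,u}=\delta^\ast-t_0\le z^\ast-t_0$, where the strict inequality comes from the integer rounding on the $u^\ast$-term. These two counts contradict each other.

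The main obstacle, and the only place where the construction really matters, is the eligibility restriction to $T_i(v)$: I must be certain that the rule never runs a far-deadline job while a released, earlier-deadline job is still waiting, which is precisely what licenses the conclusion that all jobs in the critical window were released after $t_0$. Getting this release-time bookkeeping and the strict ``$<$'' from the floor functions correct is the delicate part; by contrast the degree $\delta(v)$ and the possible irrationality of the $P_{v,u}$ never obstruct anything, since only $\sum_u P_{v,u}=1$ is used. Alternatively, one may simply invoke the cited apportionment/scheduling results of Tijdeman and of Angel et al., which establish exactly this two-sided discrepancy bound.
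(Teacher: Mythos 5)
Your proposal is doing work that the paper itself never does: the paper imports Theorem~\ref{thm:upper-const} wholesale from Tijdeman \cite{T80} and Angel et al.\ \cite{AJJ10}, and the only argument it gives inline is the one-sided induction showing $\I_{v,u}[0,i+1)-(i+1)P_{v,u}<1$, which you correctly reproduce and correctly identify as the easy half (your observation that naive summation over the other neighbours only yields $\delta(v)-1$ for the lower side is exactly why the citation is needed). Your earliest-deadline-first reconstruction of the hard half is essentially the scheduling viewpoint of the cited works, and your dictionary is exact: $(\I_{v,u}[0,i)+1)/P_{v,u}$ is the deadline of the next job of $u$, membership $u\in T_i(v)$ is equivalent to that job having release time $<i+1$, and a violation $z\cdot P_{v,u}-\I_{v,u}[0,z)\geq 1$ is precisely a missed deadline. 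The minimal-counterexample structure with the last ``far-deadline'' step $t_0$ is also sound, including the crucial use of the eligibility restriction: since earlier jobs of a symbol have both earlier release and earlier deadline, a released-but-pending job with deadline $\leq\delta^\ast$ at step $t_0$ would force the rule to run something with deadline $\leq\delta^\ast$ there, contradicting the definition of $t_0$.

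One quantitative step is wrong as written, though repairable from inside your own argument. You bound the number of jobs of symbol $u$ with release $>t_0$ and deadline $\leq\delta^\ast$ by $\max\left(0,(\delta^\ast-t_0)P_{v,u}-1\right)$; this is false in general, since the number of integers $k$ with $t_0P_{v,u}+1<k\leq\delta^\ast P_{v,u}$ can exceed $(\delta^\ast-t_0)P_{v,u}-1$ when fractional parts conspire (e.g.\ $t_0P_{v,u}+1=1.9$ and $\delta^\ast P_{v,u}=2.05$ give one job against a claimed capacity of $0.15$), so the sum over $u$ need not stay below $\delta^\ast-t_0$. The fix is that your eligibility analysis actually yields the stronger conclusion release $\geq t_0+1$, not merely $>t_0$: ``released by step $t_0$'' means $r<t_0+1$, so a job not yet released then satisfies $r\geq t_0+1$. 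With this, the relevant $k$ satisfy $(k-1)/P_{v,u}\geq t_0+1$ and $k/P_{v,u}\leq\delta^\ast$, so the per-symbol count is at most $(\delta^\ast-t_0-1)P_{v,u}$, and summing over $u$ gives at most $\delta^\ast-t_0-1\leq z^\ast-t_0-1$, strictly below the $z^\ast-t_0$ distinct jobs you exhibited; no floor-function subtlety survives. You should also record explicitly that $T_i(v)\neq\emptyset$ at every step (if it were empty, summing $\I_{v,u}[0,i)\geq(i+1)P_{v,u}$ over $u\in{\cal N}(v)$ gives $i\geq i+1$), since your count tacitly assumes one job is run per step. With that repair your proof is complete and self-contained --- genuinely more than the paper provides, as it rests on the citation alone.
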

 Theorem \ref{thm:upper-const} was firstly given by Tijdeman \cite{T80}, 
   where he gave a slightly better bound 
  $\left|\I_{v, u}[0, z)- z\cdotp P_{v,u}\right|\leq 1-(2(\delta(v)-1))^{-1}$, in fact. 
 Angel et al.~\cite{AJJ10} rediscovered Theorem \ref{thm:upper-const} 
   in the context of deterministic random walk (see also~\cite{HP10}), 
  where they also showed a similar statement holds 
  even when the corresponding probability is time-inhomogeneous.  

 Theorem~\ref{thm:upper-const} and (\ref{eq:dsigmabound}) imply that 
\begin{eqnarray}
 \dsig 
 \leq 
 \max_{\substack{v\in V, \ u\in {\cal N}(v), \\ z, z' \in \mathbb{Z}_{\geq 0} \, \mbox{s. t. }\, z'>z }}
 \left| \I_{v, u}[z, z')-(z'-z)  P_{v,u}\right|< 2
\label{bound:greedy}
\end{eqnarray}
 holds for the SRT router model.

\begin{proof}[Proof of Theorem~\ref{thm:mixupper-vertexgreedy}]
By Theorem~\ref{thm:mixupper-vertexds} and (\ref{bound:greedy}), 
\begin{eqnarray*}
\left|\chi_w^{(T)}-\mu_w^{(T)}\right|
\leq \dsig \frac{2(1-\gamma )}{1-2\gamma }\tau(\gamma )\frac{\pi_w}{\pi_{\min}}\Delta 
< 2\cdot \frac{2\cdot(1-1/4)}{1-2\cdot (1/4)}\tau(1/4) \frac{\pi_w}{\pi_{\min}}\Delta 
= \frac{6\pi_w}{\pi_{\min}} t^* \Delta
\end{eqnarray*}
holds, and we obtain the claim. 
\end{proof}

\subsection{Billiard router}\label{sec:billiard}
{\em Billiard sequence} is known to be a balanced sequence (cf. \cite{SMK04}). 
This section presents a functional router based on the billiard sequence. 

The billiard sequence is given in a similar to the SRT router, but simpler. 
Let $\sigma_v(i)$ be $u^*\in {\cal N}(v)$ minimizing the value 
\begin{eqnarray*} 
\frac{\I_{v,u}[0,i)+1}{P_{v,u}}
\end{eqnarray*}
in all $u\in {\cal N}(v)$, and if there are two or more such $u\in {\cal N}(v)$, then let $u^*$ be arbitrary one of them. 
Then, the following theorem for the billiard sequence is known. 
\begin{lemma}\label{bound:billiardz}\cite{SMK04}
For any transition matrix $P$, 
\begin{eqnarray*}
\left| \I_{v, u}[z, z')-(z'-z)P_{v, u}\right|\leq 1+(\delta(v) -2)P_{v,u}
\end{eqnarray*}
 holds for any $v, u\in V$, and for any $z, z'\in \mathbb{Z}_{\geq 0}$ satisfying $z'>z$. 
\end{lemma}
%
Using Lemma~\ref{bound:billiardz}, we obtain an upper bound of $\Psi_\sigma$ for the billiard sequence. 
%
\begin{lemma}\label{bound:billiard}
$\dsig \leq \Delta -1$ holds for the billiard sequence. 
\end{lemma}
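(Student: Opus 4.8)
The plan is to chain the general reduction in~\eqref{eq:dsigmabound} with the per-interval bound of Lemma~\ref{bound:billiardz}, and then to collapse the resulting expression to $\Delta-1$ by a short sign analysis on the coefficient $\delta(v)-2$.

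First I would invoke~\eqref{eq:dsigmabound}, which already bounds $\dsig$ by the maximum, over all $v\in V$, $u\in{\cal N}(v)$ and integers $z'>z$, of the single-interval discrepancy $\left|\I_{v,u}[z,z')-(z'-z)P_{v,u}\right|$. This step removes the token dynamics from the picture entirely and reduces the claim to a purely combinatorial statement about the billiard sequence. Second, I would apply Lemma~\ref{bound:billiardz} termwise to obtain $\left|\I_{v,u}[z,z')-(z'-z)P_{v,u}\right|\le 1+(\delta(v)-2)P_{v,u}$, which is uniform in $z,z'$. Combining the two gives $\dsig\le\max_{v\in V,\,u\in{\cal N}(v)}\bigl(1+(\delta(v)-2)P_{v,u}\bigr)$, so it only remains to show this maximum is at most $\Delta-1$.

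The last step is the only place that needs care, and the subtlety is the sign of the coefficient $\delta(v)-2$. When $\delta(v)\ge 2$ this coefficient is nonnegative, so $P_{v,u}\le 1$ yields $1+(\delta(v)-2)P_{v,u}\le 1+(\delta(v)-2)=\delta(v)-1\le\Delta-1$; equivalently, the inequality $1+(\delta(v)-2)P_{v,u}\le\delta(v)-1$ is just $(\delta(v)-2)(P_{v,u}-1)\le 0$, which holds because $P_{v,u}\le 1$. The degenerate case $\delta(v)=1$ must be treated separately, since there the coefficient is $-1<0$: but a vertex with a single out-neighbor $u$ forces $P_{v,u}=1$ (the row of $P$ sums to one), so the bound evaluates to $1+(1-2)\cdot 1=0\le\Delta-1$, using $\Delta\ge 1$ (ergodicity guarantees every vertex has at least one out-neighbor). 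Taking the maximum over all $v$ and $u$ therefore gives $\dsig\le\Delta-1$, as desired. I do not anticipate any genuine obstacle beyond keeping track of this $\delta(v)=1$ corner case.
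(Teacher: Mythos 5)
Your proof is correct and follows exactly the route the paper intends: chaining the reduction \eqref{eq:dsigmabound} with Lemma~\ref{bound:billiardz} and using $P_{v,u}\le 1$ to get $1+(\delta(v)-2)P_{v,u}\le\delta(v)-1\le\Delta-1$ (the paper states the lemma with only this one-line derivation implicit). Your separate treatment of the $\delta(v)=1$ case, where the coefficient is negative but $P_{v,u}=1$ is forced, is a careful touch the paper glosses over, but it does not change the argument.
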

 Thus, we obtain Theorem~\ref{thm:mixupper-vertexbi}
  by Theorem~\ref{thm:mixupper-vertexds} and Lemma~\ref{bound:billiard}. 
\begin{theorem}
\label{thm:mixupper-vertexbi}
 Let $P \in \mathbb{R}_{\geq 0}^{N \times N}$ be a transition matrix of 
  a reversible and ergodic Markov chain with a state space $V$, where $\pi$ denotes the stationary distribution of $P$ and $t^*$ denotes the mixing rate of $P$. 
  For a billiard router model, the discrepancy between $\chi^{(T)}$ and $\mu^{(T)}$ satisfies
\begin{eqnarray*}
\left|\chi_w^{(T)}-\mu_w^{(T)}\right|
\leq \frac{3\pi_w}{\pi_{\min}} t^* \Delta(\Delta-1)
\end{eqnarray*}
  for any $w\in V$ and $T\geq 0$, where $\Delta$ denotes the maximum degree of the transition diagram of $P$, {\rm i.e.} $\Delta=\max_{v\in V}\delta(v)$. 
\end{theorem}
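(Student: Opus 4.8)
The plan is to reduce Theorem~\ref{thm:mixupper-vertexbi} to the general point-wise bound of Theorem~\ref{thm:mixupper-vertexds}, following exactly the template used for the SRT router in Theorem~\ref{thm:mixupper-vertexgreedy}. All of the reversibility and mixing-time machinery has already been absorbed into Theorem~\ref{thm:mixupper-vertexds}, so the only model-specific ingredient I need is an upper bound on $\dsig$ for the billiard sequence; everything else is a substitution.

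First I would establish $\dsig \leq \Delta - 1$ for the billiard router, which is the content of Lemma~\ref{bound:billiard} and is the only genuinely new estimate required. For this I invoke the per-block bound of Lemma~\ref{bound:billiardz}, namely $\left|\I_{v,u}[z,z')-(z'-z)P_{v,u}\right| \leq 1 + (\delta(v)-2)P_{v,u}$, together with the remark~\eqref{eq:dsigmabound} that $\dsig$ is dominated by the worst-case block discrepancy. Using $P_{v,u}\leq 1$ and $\delta(v)\leq \Delta$, I get $1 + (\delta(v)-2)P_{v,u} \leq \delta(v)-1 \leq \Delta-1$ whenever $\delta(v)\geq 2$; in the degenerate case $\delta(v)=1$ one has $P_{v,u}=1$, which makes the block discrepancy vanish, so the bound $\dsig \leq \Delta-1$ holds uniformly over $v,u$.

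Next I would feed $\dsig \leq \Delta-1$ into Theorem~\ref{thm:mixupper-vertexds} and specialize $\gamma = 1/4$, which collapses the prefactor $\frac{2(1-\gamma)}{1-2\gamma}$ to $3$ and identifies $\tau(1/4)=t^*$, giving
\begin{eqnarray*}
\left|\chi_w^{(T)}-\mu_w^{(T)}\right|
\leq \dsig\,\frac{2(1-\gamma)}{1-2\gamma}\,\tau(\gamma)\,\frac{\pi_w}{\pi_{\min}}\Delta
\leq (\Delta-1)\cdot 3\,t^*\,\frac{\pi_w}{\pi_{\min}}\Delta
= \frac{3\pi_w}{\pi_{\min}}\,t^*\,\Delta(\Delta-1),
\end{eqnarray*}
which is precisely the claimed bound.

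I expect the only real obstacle to lie in the first step, that is, in controlling $\dsig$ through the balanced-sequence estimate of Lemma~\ref{bound:billiardz}; once that discrepancy bound is secured, the passage to the point-wise bound is a mechanical substitution into Theorem~\ref{thm:mixupper-vertexds} with no work beyond the arithmetic simplification at $\gamma=1/4$. In particular, no further spectral or combinatorial input about $P$ is needed at this stage, and the only place where the billiard router (as opposed to the SRT router) enters is through the factor $\Delta-1$ in place of the constant $2$.
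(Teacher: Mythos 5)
Your proposal is correct and follows exactly the paper's own route: the paper likewise derives $\dsig \leq \Delta-1$ (Lemma~\ref{bound:billiard}) from the balanced-sequence estimate of Lemma~\ref{bound:billiardz} via the remark~\eqref{eq:dsigmabound}, and then substitutes into Theorem~\ref{thm:mixupper-vertexds} with $\gamma=1/4$ so that $\frac{2(1-\gamma)}{1-2\gamma}\tau(\gamma)=3t^*$. Your explicit treatment of the degenerate case $\delta(v)=1$ (where $P_{v,u}=1$ forces the block discrepancy to vanish) is a detail the paper leaves implicit, but it changes nothing in the argument.
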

In fact, we obtain a better bound for the billiard router model as follows, by analyzing carefully. See Appendix~\ref{appendix:billiard} for the proof. 
\begin{theorem}
\label{thm:mixupper-vertexbi2}
 Let $P \in \mathbb{R}_{\geq 0}^{N \times N}$ be a transition matrix of 
  a reversible and ergodic Markov chain with a state space $V$, where $\pi$ denotes the stationary distribution of $P$ and $t^*$ denotes the mixing rate of $P$. 
  For a billiard router model, the discrepancy between $\chi^{(T)}$ and $\mu^{(T)}$ satisfies
\begin{eqnarray*}
\left|\chi_w^{(T)}-\mu_w^{(T)}\right|
\leq \frac{6\pi_w}{\pi_{\min}} t^* (\Delta-1)
\end{eqnarray*}
  for any $w\in V$ and $T\geq 0$, where $\Delta$ denotes the maximum degree of the transition diagram of $P$, {\rm i.e.} $\Delta=\max_{v\in V}\delta(v)$. 
\end{theorem}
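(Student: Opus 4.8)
The plan is to re-run the proof of Theorem~\ref{thm:mixupper-vertexds}, but to avoid the wasteful step in which the per-edge error is replaced by the global maximum $\dsig\le\Delta-1$. Instead I would retain the sharper per-edge bound of Lemma~\ref{bound:billiardz}, namely $|\Z^{(t)}_{v,u}-\chi^{(t)}_vP_{v,u}|\le 1+(\delta(v)-2)P_{v,u}$, and carry it through the summation. Starting from inequality~\eqref{eq:mdisc1}, this turns the inner sum over $v\in{\cal N}(u)$ into $\sum_{v\in{\cal N}(u)}\bigl(1+(\delta(v)-2)P_{v,u}\bigr)=\delta(u)+\sum_{v\in{\cal N}(u)}(\delta(v)-2)P_{v,u}$, which is then multiplied by $|P^{T-t-1}_{u,w}-\pi_w|$ and summed over $t$ and $u$.

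The decisive step is to apply reversibility twice. First, exactly as in~\eqref{eq:mdisc3}, I use $P^{s}_{u,w}=\tfrac{\pi_w}{\pi_u}P^{s}_{w,u}$ to rewrite $|P^{T-t-1}_{u,w}-\pi_w|=\tfrac{\pi_w}{\pi_u}|P^{T-t-1}_{w,u}-\pi_u|$, which produces a factor $1/\pi_u$ in front of the bracketed degree term. Second, and this is the point of the refinement, I apply the detailed balance equation~\eqref{eq:db} in the form $P_{v,u}/\pi_u=P_{u,v}/\pi_v$ to the degree-weighted part, so that $\tfrac{1}{\pi_u}\sum_{v\in{\cal N}(u)}(\delta(v)-2)P_{v,u}=\sum_{v\in{\cal N}(u)}(\delta(v)-2)\tfrac{P_{u,v}}{\pi_v}$. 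Crucially, the \emph{column} sum of transition probabilities into $u$ has been converted into a \emph{row} sum out of $u$; bounding $\delta(v)-2\le\Delta-2$ and $1/\pi_v\le 1/\pi_{\min}$ and using $\sum_{v\in{\cal N}(u)}P_{u,v}=1$ then controls this term by $(\Delta-2)/\pi_{\min}$, a constant rather than a second factor of $\Delta$.

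Combining the two parts, the $u$-dependent coefficient standing in front of $|P^{T-t-1}_{w,u}-\pi_u|$ is bounded uniformly by $\tfrac{\delta(u)}{\pi_u}+\tfrac{\Delta-2}{\pi_{\min}}\le\tfrac{2(\Delta-1)}{\pi_{\min}}$. Pulling this constant out and using $\sum_{u\in V}|P^{t}_{w,u}-\pi_u|=2\dtv(P^t_{w,\cdot},\pi)$ from~\eqref{def:dtv} leaves the quantity $\sum_{t=0}^{T-1}\dtv(P^t_{w,\cdot},\pi)$, which Lemma~\ref{lemm:dtsum} bounds by $\tfrac{1-\gamma}{1-2\gamma}\tau(\gamma)$; setting $\gamma=1/4$ and collecting the constants ($4\cdot\tfrac32=6$) yields exactly $\tfrac{6\pi_w}{\pi_{\min}}t^*(\Delta-1)$, as claimed.

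I expect the main obstacle to be recognizing that the extra factor of $\Delta$ in Theorem~\ref{thm:mixupper-vertexbi} is an artifact of summing the constant ``$1$'' of the billiard bound over the $\delta(u)$ in-neighbours of $u$, and that the remedy is not to reorganize the order of summation but to route the degree-weighted term $(\delta(v)-2)P_{v,u}$ through detailed balance: only after the substitution $P_{v,u}/\pi_u=P_{u,v}/\pi_v$ does the offending sum collapse against the stochasticity $\sum_v P_{u,v}=1$. Checking that this substitution is legitimate term-by-term even when $\delta(v)-2<0$ (so that the negative summands only improve the estimate), and that $\Delta\ge 2$ so the coefficient $\Delta-2$ is nonnegative when it is pulled out, is the one place that requires a little care.
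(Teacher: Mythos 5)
Your proposal is correct and follows essentially the same route as the paper's proof in Appendix~\ref{appendix:billiard}: both split the billiard bound $1+(\delta(v)-2)P_{v,u}$ into its constant and degree-weighted parts, use detailed balance to turn the in-flow sum $\sum_{v\in{\cal N}(u)}P_{v,u}$ into the out-flow sum $\sum_{v\in{\cal N}(u)}P_{u,v}=1$, flip $P^{T-t-1}_{u,w}$ to $P^{T-t-1}_{w,u}$ via reversibility, and close with Lemma~\ref{lemm:dtsum} at $\gamma=1/4$ to collect the constant $6$. The only cosmetic difference is the order of the two reversibility substitutions (you rewrite $|P^{T-t-1}_{u,w}-\pi_w|$ first and then route $1/\pi_u$ through detailed balance, while the paper carries $\pi_u$ along and flips at the end), and your cautionary remarks about $\delta(v)=1$ and $\Delta\geq 2$ correspond to sign conditions the paper uses implicitly in the same step.
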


\subsection{Quasi-random router}\label{sec:vander}
 This section gives a router $\sigma$ 
  based on the {\em van der Corput sequence}~\cite{JGV35,N78}, 
 which is a well-known low-discrepancy sequence. 

The van der Corput sequence $\psi \colon \mathbb{Z}_{\geq 0} \to [0, 1)$ is defined as follows.  
Suppose $i \in \mathbb{Z}_{> 0}$ 
  is represented in binary as $i=\sum^{\lfloor \lg i \rfloor}_{j=0} \beta_j(i) \cdotp 2^j$ 
  using $\beta_j(i) \in \{0, 1\}$ ($j \in \{0, 1, \ldots, \lfloor \lg i \rfloor\}$). 
 Then, we define  
\begin{eqnarray}
  \psi(i) \defeq \sum^{\lfloor \lg i \rfloor}_{j=0} \beta_j(i) \cdotp 2^{-(j+1)}
\label{eq:def-psi}
\end{eqnarray}
and $\psi(0) \defeq 0$. 
 For example, 
$\psi(1) = 1 \times 1/2 = 1/2$, 
$\psi(2) = 0 \times 1/2 + 1 \times 1/4 = 1/4$, 
$\psi(3) = 1 \times 1/2 + 1 \times 1/4 = 3/4$, 
$\psi(4) = 0 \times 1/2 + 0 \times 1/4 + 1 \times 1/8 = 1/8$, 
$\psi(5) = 1 \times 1/2 + 0 \times 1/4 + 1 \times 1/8 = 5/8$, 
$\psi(6) = 0 \times 1/2 + 1 \times 1/4 + 1 \times 1/8 = 3/8$, 
  and so on. 
 Clearly, $\psi(i) \in [0, 1)$ holds for any (finite) $i \in \mathbb{Z}_{\geq 0}$. 


Now, given $i \in \mathbb{Z}_{> 0}$, we define $\sigma_v(i)$ as follows.
 Without loss of generality, 
  we may assume that an ordering $u_1, \ldots, u_{\delta(v)}$ is defined on ${\cal N}(v)$ for $v \in V$. 
 Then, 
  we define the functional-router $\sigma_v \colon \mathbb{Z}_{\geq 0} \to {\cal N}(v)$ on $v \in V$ 
  such that $\sigma_v(i) = u_k \in {\cal N}(v)$ satisfies that 
\begin{eqnarray*} \textstyle
  \sum_{j=1}^{k-1}P_{v, u_j} \leq  \psi(i) < \sum_{j=0}^{k}P_{v, u_j}  
\end{eqnarray*}
  for $k \in \{1, \ldots, \delta(v)\}$, 
 where $\sum_{j=1}^0 P_{v, u_j}=0$, for convenience. 

The following theorem is due to van der Corput \cite{JGV35}. 
\begin{theorem}\label{thm:upper-logM}\cite{JGV35}
 For any transition matrix $P$, 
\begin{eqnarray*}
 \left|\I_{v, u}[0, z) - z\cdotp P_{v, u}\right| 
 \leq \lg(z+1)
\end{eqnarray*}
 holds for any $v, u\in V$ and any $z \in \mathbb{Z}_{> 0}$. 
\end{theorem}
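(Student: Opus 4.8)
The plan is to read Theorem~\ref{thm:upper-logM} as a discrepancy bound for the van der Corput sequence on an arbitrary subinterval of $[0,1)$. By construction $\sigma_v(i)=u_k$ holds exactly when $\psi(i)$ lies in the interval $I_k \defeq [a_k,b_k)$ with $a_k=\sum_{j=1}^{k-1}P_{v,u_j}$ and $b_k=\sum_{j=1}^{k}P_{v,u_j}$, so that $|I_k|=b_k-a_k=P_{v,u_k}$. Hence $\I_{v,u_k}[0,z)=|\{i\in\{0,\ldots,z-1\}:\psi(i)\in I_k\}|$, and the assertion becomes the statement that, for the fixed interval $I=I_k$ of length $\ell=|I|$, the counting function of $\psi$ deviates from its expectation $z\ell$ by at most $\lg(z+1)$.

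First I would establish the self-similar behaviour of $\psi$ on dyadic blocks. Writing $i=s\cdot 2^j+t$ with $0\le t<2^j$, the low $j$ bits of $i$ are those of $t$ and the higher bits are those of $s$; applying the bit-reversal definition \eqref{eq:def-psi} then gives $\psi(i)=\psi_j(t)+2^{-j}\psi(s)$, where $\psi_j(t)=\sum_{p=0}^{j-1}t_p\,2^{-(p+1)}$. As $t$ ranges over $\{0,\ldots,2^j-1\}$ the value $\psi_j(t)$ ranges bijectively over the grid $\{k/2^j:k=0,\ldots,2^j-1\}$. Consequently, over any dyadic-aligned block $B=\{s\cdot 2^j,\ldots,(s+1)2^j-1\}$ the images $\{\psi(i):i\in B\}$ form a uniform grid $\{k/2^j+\theta:k=0,\ldots,2^j-1\}$ of spacing $2^{-j}$, shifted by the fixed offset $\theta=2^{-j}\psi(s)\in[0,2^{-j})$. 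Because $\theta<2^{-j}$, every such point lies in $[0,1)$, so a point $k/2^j+\theta$ belongs to $I=[a,b)$ iff $k\in[2^j(a-\theta),2^j(b-\theta))$, with the constraint $0\le k<2^j$ automatically satisfied (no wraparound). The number of integers in that half-open interval differs from its length $2^j\ell$ by strictly less than $1$, so $|\,|\{i\in B:\psi(i)\in I\}|-2^j\ell\,|<1$ for every aligned block of size $2^j$.

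Next I would decompose and sum. Writing $z=\sum_{l=1}^{r}2^{j_l}$ in binary with $j_1>\cdots>j_r$ and $r=s(z)$ the number of one-bits, the set $\{0,\ldots,z-1\}$ splits into $r$ consecutive blocks, the $l$-th of size $2^{j_l}$ and starting at $\sum_{m<l}2^{j_m}$, which is a multiple of $2^{j_l}$ since all earlier exponents exceed $j_l$; hence each block is dyadic-aligned. Summing the per-block estimate over these $r$ blocks yields $|\,\I_{v,u_k}[0,z)-z\ell\,|<r=s(z)$. Finally, since the smallest integer with $s$ one-bits is $2^{s}-1$, we have $z\ge 2^{s(z)}-1$, i.e.\ $s(z)\le\lg(z+1)$, which gives the claimed bound.

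The main obstacle is the structural lemma of the second paragraph: one must verify carefully that $\psi$ restricted to a dyadic-aligned block is \emph{exactly} a uniformly shifted grid and, crucially, that the shift $\theta$ is strictly smaller than the spacing $2^{-j}$, since this is what confines all points to $[0,1)$ and guarantees that counting the points of an \emph{arbitrary} (non-dyadic) interval $I$ incurs no boundary error beyond the single unavoidable $\pm 1$. Once this is in place, the binary decomposition into aligned blocks and the elementary inequality $s(z)\le\lg(z+1)$ complete the argument routinely.
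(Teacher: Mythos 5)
Your proof is correct, and it is a genuine sharpening of the route the paper takes in Appendix~B.2. Both arguments rest on the same two ingredients---the dyadic self-similarity $\psi(i)=\psi(i\bmod 2^k)+2^{-k}\psi(\lfloor i/2^k\rfloor)$ (the paper's Lemma~\ref{lemmf2}) and a binary decomposition of $\{0,\ldots,z-1\}$ into blocks of dyadic size---but they differ in how the blocks are chosen and hence in the per-block error. The paper's Lemma~\ref{lemmIvu1} counts points of $\Phi[z_0,z_0+2^k)$ for an \emph{arbitrary} start $z_0$, where the offsets $\theta(i)$ vary from point to point within the block, and this costs an error strictly less than $2$ per block; summing over the at most $\lfloor\lg z\rfloor+1$ blocks gives Lemma~\ref{theoIvu3}, i.e.\ an error bound $2\lfloor\lg z\rfloor+2$, which suffices for the windowed bound of Lemma~\ref{bound:vandercz} but does \emph{not} by itself reproduce the constant in Theorem~\ref{thm:upper-logM} (the paper leaves that last step implicit with ``it is not difficult to see''). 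You instead decompose $[0,z)$ from the most significant bit down, so every block $\{s\cdot 2^j,\ldots,(s+1)2^j-1\}$ is aligned; there the offset $\theta=2^{-j}\psi(s)$ is \emph{constant} across the block and satisfies $0\le\theta<2^{-j}$, so the images form a uniformly shifted grid, the count of an arbitrary subinterval errs by strictly less than $1$ per block, and the total error is below the popcount $s(z)\le\lg(z+1)$. This is exactly what is needed to recover van der Corput's stated constant, so your argument in effect supplies the missing sharpening. The trade-off is scope: your aligned-block lemma applies only to prefixes $\I_{v,u}[0,z)$, whereas the paper's weaker per-block estimate with varying shifts is what handles arbitrary windows $\I_{v,u}[z,z')$---the quantity that actually controls $\dsig$ via \eqref{eq:dsigmabound}---at the cost of the factor $2$ in Lemma~\ref{bound:vandercz}.
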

 More sophisticated bounds are found in~\cite{N78}. 
 Carefully examining Theorem~\ref{thm:upper-logM}, 
  we obtain the following lemma. 
  See Appendix~\ref{appendix:vander} for the proof. 
%
\begin{lemma}\label{bound:vandercz}
For any transition matrix $P$, 
\begin{eqnarray*}
\left| \I_{v, u}[z, z')-(z'-z)P_{v, u}\right|\leq 2\lg(z'-z+1)
\end{eqnarray*}
 holds for any $v, u\in V$, and for any $z, z'\in \mathbb{Z}_{\geq 0}$ satisfying $z'>z$. 
\end{lemma}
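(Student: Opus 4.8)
The plan is to \emph{re-examine} the digit structure behind Theorem~\ref{thm:upper-logM} rather than to invoke it as a black box. A naive split $\I_{v,u}[z,z') = \I_{v,u}[0,z') - \I_{v,u}[0,z)$ followed by the triangle inequality only yields the bound $\lg(z+1)+\lg(z'+1)$, which is far too weak: the two prefix discrepancies may add rather than cancel, and the result depends on $z$ and $z'$ individually instead of on the length $z'-z$ alone. The starting observation is that, by the definition of the quasi-random router, $\I_{v,u}[z,z')$ equals the number of indices $i \in \{z,\ldots,z'-1\}$ for which the van der Corput value $\psi(i)$ lands in the fixed subinterval $J_u = [\sum_{j=1}^{k-1} P_{v,u_j},\, \sum_{j=1}^{k} P_{v,u_j})$ of $[0,1)$, whose length is exactly $P_{v,u}$ (where $u = u_k$). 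Thus $\I_{v,u}[z,z') - (z'-z)P_{v,u}$ is precisely the \emph{local} discrepancy of the sequence $(\psi(i))$ over the index block $[z,z')$ against the target interval $J_u$, and it suffices to bound this by $2\lg(z'-z+1)$ uniformly over all intervals of length $P_{v,u}$.

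The key step is an exact equidistribution property on \emph{aligned dyadic blocks}. Writing $i = m\cdot 2^k + r$ with $0 \le r < 2^k$ and reversing binary digits gives the identity $\psi(m\cdot 2^k + r) = \psi(r) + 2^{-k}\psi(m)$, so that as $r$ ranges over $\{0,\ldots,2^k-1\}$ the values $\psi(m\cdot 2^k + r)$ form the set $\{\, j\cdot 2^{-k} + 2^{-k}\psi(m) : j = 0,\ldots,2^k-1 \,\}$, i.e.\ exactly one point in each dyadic subinterval $[j2^{-k},(j+1)2^{-k})$, all carrying the common offset $2^{-k}\psi(m) \in [0,2^{-k})$. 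I would then show that for \emph{any} target interval of length $\lambda$ the number of these $2^k$ points inside it differs from $2^k\lambda$ by strictly less than $1$, since that count equals the number of integers in a half-open interval of length $2^k\lambda$, which always lies within $1$ of $2^k\lambda$. Hence on every aligned dyadic block the local discrepancy against $J_u$ is strictly less than $1$, \emph{independently of the depth $k$, the position $m$, and the target $J_u$} --- this is the uniformity that the naive argument fails to capture.

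It then remains to decompose the arbitrary index block $[z,z')$ into maximal aligned dyadic blocks and sum. Intuitively one pivots at a multiple of $2^{k^*}$ lying in $[z,z')$, where $2^{k^*} \le z'-z < 2^{k^*+1}$, splitting $[z,z')$ into a left part ending at that multiple and a right part beginning at it; each part is anchored at a multiple of $2^{k^*}$ and decomposes greedily (one block per set bit of its length), and a standard segment-tree estimate bounds the total number of maximal dyadic blocks of $[z,z')$ by $2\lg(z'-z+1)$. Summing the per-block bound (each strictly less than $1$) over these blocks yields $|\I_{v,u}[z,z') - (z'-z)P_{v,u}| < 2\lg(z'-z+1)$, as claimed. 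I expect the main obstacle to be the bookkeeping in this last step: pinning down the constant in the block count so that it lands exactly at $2\lg(z'-z+1)$ --- in particular handling short intervals, where the additive $1$ inside the logarithm is needed and where the cruder estimate $2\lfloor\lg(z'-z)\rfloor$ undercounts at length one --- and confirming that the strict per-block inequalities aggregate correctly across blocks whose discrepancies could in principle share a common sign.
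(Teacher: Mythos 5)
Your proposal is correct, and it is a genuine variant of the paper's argument rather than a reproduction of it: both rest on the digit-splitting identity $\psi(m\cdot 2^k+r)=\psi(r)+2^{-k}\psi(m)$ (Lemma~\ref{lemmf2}) and on cutting $[z,z')$ into blocks of dyadic length, but you distribute the factor of $2$ differently. The paper proves its per-block bound for blocks of $2^k$ \emph{consecutive indices starting anywhere} (Lemmas~\ref{lemmfxy2} and~\ref{lemmIvu1}): such a block still places exactly one point of $\psi$ in each depth-$k$ dyadic cell, but the offsets $\theta(i)$ vary from point to point, which costs discrepancy $<2$ per block; in exchange the decomposition is one-sided and trivial --- march rightward from $z$ through the binary digits of $n=z'-z$, using $s(n)$ blocks, where $s(n)$ is the number of set bits and $s(n)\le \lg(n+1)$ because the smallest integer with $s$ set bits is $2^s-1$, giving the total $<2\lg(n+1)$. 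You instead restrict to \emph{aligned} blocks $[m2^k,(m+1)2^k)$, where the common offset $2^{-k}\psi(m)$ turns the per-block count into a pure lattice-point computation with strict discrepancy $<1$ (your integer-counting step is sound, including the automatic clipping to $j\in\{0,\ldots,2^k-1\}$, since $x,y,\psi(m)\in[0,1)$ forces the relevant interval into $(-1,2^k)$), and you pay for this with a two-sided pivot decomposition. The bookkeeping you flagged does close: taking the pivot $p$ to be a multiple of $2^{k^*}$ in the interval, the two sides have lengths $\ell_L+\ell_R=n$ and decompose into $s(\ell_L)+s(\ell_R)\le \lg(\ell_L+1)+\lg(\ell_R+1)\le 2\lg(n/2+1)=2\lg(n+2)-2\le 2\lg(n+1)$ aligned blocks, the last inequality because $(n+2)/(n+1)\le 2$; the strict per-block bounds then sum via the triangle inequality (common signs are immaterial, since you bound by the sum of absolute per-block discrepancies) to strictly less than $2\lg(n+1)$, marginally stronger than the stated $\le$. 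Net comparison: the paper's route buys a simpler decomposition with roughly half as many blocks and avoids any pivot; your route buys a cleaner, case-free per-block lemma that isolates exactly the equidistribution property of the van der Corput sequence. The final constants coincide because the factor $2$ merely migrates from the per-block bound to the block count.
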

%
 Lemma~\ref{bound:vandercz} suggests the following lemma. 
%
\begin{lemma}\label{bound:vanderc}
$\dsig \leq 2\lg(M+1)$ holds for the van der Corput sequence. 
\end{lemma}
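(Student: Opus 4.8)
The plan is to push the interval estimate of Lemma~\ref{bound:vandercz} through the definition \eqref{def:Zvut} of $Z_{v,u}^{(t)}$, rather than through the generic bound \eqref{eq:dsigmabound} (which ranges over unbounded interval lengths and so gives nothing directly). The crucial observation is that the interval appearing in $Z_{v,u}^{(t)}$ has length exactly $\chi_v^{(t)}$: writing $z=\sum_{s=0}^{t-1}\chi_v^{(s)}$ and $z'=\sum_{s=0}^{t}\chi_v^{(s)}$, we have $z'-z=\chi_v^{(t)}$.

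First I would apply Lemma~\ref{bound:vandercz} to this particular pair $(z,z')$, obtaining
\begin{eqnarray*}
\left| Z_{v, u}^{(t)}-\chi_v^{(t)} P_{v, u}\right| \leq 2\lg\left(\chi_v^{(t)}+1\right)
\end{eqnarray*}
for every $v\in V$, $u\in{\cal N}(v)$ and $t$ with $\chi_v^{(t)}>0$. Next I would invoke the conservation of tokens: since $\chi^{(t)}\in\mathbb{Z}_{\geq 0}^N$ and $\sum_{v\in V}\chi_v^{(t)}=M$ for all $t$, each coordinate satisfies $\chi_v^{(t)}\leq M$, so monotonicity of $\lg$ upgrades the right-hand side to $2\lg(M+1)$. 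Taking the maximum over $v$, $u$ and $t$ and recalling the definition \eqref{def:dsigma} of $\dsig$ then gives $\dsig\leq 2\lg(M+1)$.

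The one point requiring a separate word is the degenerate case $\chi_v^{(t)}=0$, in which Lemma~\ref{bound:vandercz} is inapplicable (it assumes $z'>z$); but then no token leaves $v$, so $Z_{v,u}^{(t)}=0=\chi_v^{(t)}P_{v,u}$ and the corresponding term in \eqref{def:dsigma} vanishes, well within the claimed bound. I do not anticipate any genuine obstacle here: the argument is a direct substitution of the interval length $\chi_v^{(t)}$, the universal token bound $\chi_v^{(t)}\leq M$, and monotonicity of the logarithm, with all the analytic content already carried by Lemma~\ref{bound:vandercz}.
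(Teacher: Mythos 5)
Your proof is correct and follows exactly the route the paper intends (the paper leaves it implicit, saying only that Lemma~\ref{bound:vandercz} ``suggests'' the claim): by definition \eqref{def:Zvut} the interval in $Z_{v,u}^{(t)}$ has length $z'-z=\chi_v^{(t)}\leq M$ by token conservation, so Lemma~\ref{bound:vandercz} and monotonicity of $\lg$ give $\dsig\leq 2\lg(M+1)$. Your explicit handling of the empty-interval case $\chi_v^{(t)}=0$ is a detail the paper glosses over, and it is handled correctly.
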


By Theorem~\ref{thm:mixupper-vertexds} and Lemma \ref{bound:vanderc}, we obtain the following. 

\begin{theorem}
\label{thm:mixupper-vertexvc}
 Let $P \in \mathbb{R}_{\geq 0}^{N \times N}$ be a transition matrix of 
  a reversible and ergodic Markov chain with a state space $V$, where $\pi$ denotes the stationary distribution of $P$ and $t^*$ denotes the mixing rate of $P$. 
  For a quasi-random router model, the discrepancy between $\chi^{(T)}$ and $\mu^{(T)}$ satisfies
\begin{eqnarray*}
\left|\chi_w^{(T)}-\mu_w^{(T)}\right|
\leq \frac{6\pi_w}{\pi_{\min}} \lg(M+1) \cdotp t^* \Delta
\end{eqnarray*}
for any $w\in V$ and $T\geq 0$, where $\Delta$ denotes the maximum degree of the transition diagram of $P$, {\rm i.e.} $\Delta=\max_{v\in V}\delta(v)$ and $M$ denotes the total number of tokens on $V$. 
\end{theorem}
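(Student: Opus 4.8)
The plan is to treat this as a direct specialization of the master estimate in Theorem~\ref{thm:mixupper-vertexds}, exactly paralleling the proof of Theorem~\ref{thm:mixupper-vertexgreedy} for the SRT router. First I would invoke Theorem~\ref{thm:mixupper-vertexds}, which asserts for every $w\in V$, $T\geq 0$ and $\gamma\in(0,1/2)$ that
\begin{eqnarray*}
\left|\chi_w^{(T)}-\mu_w^{(T)}\right| \leq \dsig\,\frac{2(1-\gamma)}{1-2\gamma}\,\tau(\gamma)\,\frac{\pi_w}{\pi_{\min}}\,\Delta.
\end{eqnarray*}
Because this inequality is valid for each admissible $\gamma$, I am free to set $\gamma=1/4$; this turns $\tau(1/4)$ into the mixing rate $t^*$ and collapses the prefactor to $2(1-1/4)/(1-2\cdot 1/4)=3$.

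Next I would substitute the router-specific discrepancy bound supplied by Lemma~\ref{bound:vanderc}, namely $\dsig\leq 2\lg(M+1)$. Inserting this into the $\gamma=1/4$ instance of the displayed inequality gives $3\cdot 2\lg(M+1)\cdot t^*\,(\pi_w/\pi_{\min})\,\Delta$, which equals $6(\pi_w/\pi_{\min})\lg(M+1)\,t^*\Delta$ and is precisely the claimed bound. No further manipulation is required.

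Strictly speaking there is no obstacle internal to this theorem: all the substantive work has been relegated to Lemma~\ref{bound:vanderc} (itself derived from Lemma~\ref{bound:vandercz} and van der Corput's Theorem~\ref{thm:upper-logM}), which I am entitled to assume. The conceptual point worth emphasizing is why the bound carries an extra $\lg(M+1)$ factor relative to the $\Order((\pi_w/\pi_{\min})t^*\Delta)$ estimate of Theorem~\ref{thm:mixupper-vertexgreedy}: the van der Corput router's per-step discrepancy $|\Z_{v,u}^{(t)}-\chi_v^{(t)}P_{v,u}|$ grows like $\lg(\chi_v^{(t)}+1)$ rather than remaining $\Order(1)$, and the conservation $\chi_v^{(t)}\leq M$ caps this at $\lg(M+1)$. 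Thus the logarithmic dependence on the token count is intrinsic to this router and, within the present framework, unavoidable.
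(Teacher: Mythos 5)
Your proposal is correct and matches the paper's own proof exactly: the paper likewise obtains Theorem~\ref{thm:mixupper-vertexvc} by combining Theorem~\ref{thm:mixupper-vertexds} at $\gamma = 1/4$ (yielding the factor $3\,t^*$) with the bound $\dsig \leq 2\lg(M+1)$ of Lemma~\ref{bound:vanderc}, which in turn rests on Lemma~\ref{bound:vandercz} together with the observation that each window length $\chi_v^{(t)}$ is at most $M$. Your closing remark correctly identifies the source of the extra $\lg(M+1)$ factor, so nothing is missing.
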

 Though the bound depends on $\log M$,  
   $|\chi^{(t)}_v/M - \mu^{(t)}_v/M| = \Order(\log(M)/ M)$ holds in terms of $M$, 
   meaning that the discrepancy approaches asymptotically to zero 
     as increasing the number of tokens $M$. 

\subsection{Rotor-router on multidigraph}\label{sec:roter}
 The rotor-router model described in Section~\ref{sec:fr-model} 
   can be generally considered on digraphs with parallel edges (i.e., multidigraphs). 
 Kijima et al.~\cite{KKM12} and Kajino et al.~\cite{KKM13} are concerned with 
   the rotor-router model on finite multidigraphs. 
 Suppose that $P$ is a transition matrix with {\em rational} entries. 
 For each $v \in V$, 
  let 
  $\bar{\delta}(v) \in \mathbb{Z}_{\geq 0}$ be a common denominator (or the least common denominator) 
   of $P_{v,u}$ for all $u \in {\cal N}(v)$, 
  meaning that $\bar{\delta}(v) \cdotp P_{v,u}$ is integer for each $u\in {\cal N}(v)$. 
 We define a rotor-router 
   $\sigma_v(0), \sigma_v(1), \ldots , \sigma_v(\bar{\delta}(v)-1)$ arbitrarily 
  satisfying that 
\begin{eqnarray*}
  \left|\{ j\in[ 0, \ldots, \bar{\delta}(v)) \mid \sigma_v(j)=u \}\right| =\bar{\delta}(v)\cdotp P_{v,u}
\end{eqnarray*}
 for any $v\in V$ and $u \in {\cal N}(v)$. 
Then, $\sigma_v(i)$ is defined by 
\begin{eqnarray}
\sigma_v(i)= \sigma_v(i \bmod \bar{\delta}(v)) \left(\equiv \sigma_v\left(i - \bar{\delta}(v) \cdotp \left\lfloor \frac{i}{\bar{\delta}(v)} \right\rfloor \right) \right). 
\end{eqnarray}

 For the rotor router on a multidigraph, we have $\left \lfloor \frac{z'-z}{{\bar \delta}(v)}\right \rfloor\cdotp {\bar \delta}(v)P_{v,u}\leq \I_{v, u}[z, z')\leq \left(\left \lfloor \frac{z'-z}{{\bar \delta}(v)}\right \rfloor +1\right)\cdotp {\bar \delta}(v)P_{v,u}$, hence
  it is not difficult to observe the following. 
\begin{observation}\label{bound:rotorz}
For any transition matrix $P$, 
\begin{eqnarray*}
\left| \I_{v, u}[z, z')-(z'-z)P_{v, u}\right|\leq \bar{\delta}(v)P_{v,u}
\end{eqnarray*}
 holds for any $v, u\in V$, and for any $z, z'\in \mathbb{Z}_{\geq 0}$ satisfying $z'>z$. 
\end{observation}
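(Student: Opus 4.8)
The plan is to read the statement off the periodicity of the rotor-router $\sigma_v$ directly; indeed the observation is essentially the flanking inequality displayed just before it, together with one line of arithmetic. For convenience I would write $n \defeq \bar{\delta}(v)$, $L \defeq z'-z$, $q \defeq \lfloor L/n \rfloor$, and $r \defeq L - qn$, so that $0 \le r < n$. The idea is to partition the index window $\{z, \ldots, z'-1\}$ into $q$ consecutive blocks of length $n$ followed by a single tail block of length $r$, and to control the count of indices mapped to $u$ in each part.

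First I would count the full blocks. Since $\sigma_v(i) = \sigma_v(i \bmod n)$ and the base period $\{0, \ldots, n-1\}$ contains exactly $n P_{v,u}$ indices $j$ with $\sigma_v(j) = u$, every block of $n$ consecutive integers contributes exactly $n P_{v,u}$ such indices. Hence the $q$ full blocks contribute precisely $q\, n P_{v,u}$ to $\I_{v, u}[z, z')$. Next I would bound the tail block of length $r < n$: its contribution is nonnegative, and because it is (cyclically) a sub-window of a single period, the number of indices in it mapped to $u$ cannot exceed the number in a whole period, namely $n P_{v,u}$. Combining the two estimates yields exactly the flanking inequality stated in the excerpt, $q\, n P_{v,u} \le \I_{v, u}[z, z') \le (q+1)\, n P_{v,u}$.

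Finally I would subtract $L P_{v,u} = (qn + r) P_{v,u}$ from each part. The lower bound gives $\I_{v, u}[z, z') - L P_{v,u} \ge -r P_{v,u} \ge -n P_{v,u}$, while the upper bound gives $\I_{v, u}[z, z') - L P_{v,u} \le (n - r) P_{v,u} \le n P_{v,u}$; since $0 \le r < n$, in both cases the magnitude is at most $n P_{v,u} = \bar{\delta}(v) P_{v,u}$, which is the claim.

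I do not expect a genuine obstacle here, consistent with the ``it is not difficult to observe'' remark preceding the statement. The only point requiring a moment's care is the tail-block bound, that a partial window of length $r < n$ cannot contain more occurrences of $u$ than a full period does; this is immediate from periodicity once the tail is viewed as lying inside one cyclic copy of $\{0, \ldots, n-1\}$. Everything else is block-counting and elementary arithmetic with the remainder $r$.
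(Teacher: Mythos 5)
Your proposal is correct and follows essentially the same route as the paper: the flanking inequality $\lfloor (z'-z)/\bar{\delta}(v)\rfloor\,\bar{\delta}(v)P_{v,u}\leq \I_{v,u}[z,z')\leq \left(\lfloor (z'-z)/\bar{\delta}(v)\rfloor+1\right)\bar{\delta}(v)P_{v,u}$ you derive by block-counting is exactly the one the paper states before the observation, and your remainder arithmetic is the intended ``not difficult to observe'' step. Your explicit treatment of the tail block via distinct residues modulo $\bar{\delta}(v)$ just spells out what the paper leaves implicit.
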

Using Observation~\ref{bound:rotorz}, we obtain the following lemma. 
\begin{lemma}\label{bound:rotor}
 $\dsig =\bar{\Delta}$ holds for the rotor-router model on a multidigraph, where $\bar{\Delta}=\max_v\bar{\delta}(v)$. 
\end{lemma}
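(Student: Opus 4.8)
The plan is to bound $\dsig$ for the rotor-router model by combining Observation~\ref{bound:rotorz} with the definition \eqref{def:dsigma} of $\dsig$, exactly as was done for the SRT and billiard routers. Recall that \eqref{eq:dsigmabound} gives
\begin{eqnarray*}
 \dsig
 \leq
 \max_{\substack{v\in V, \ u\in {\cal N}(v), \\ z, z' \in \mathbb{Z}_{\geq 0} \, \mbox{s. t. }\, z'>z }}
 \left| \I_{v, u}[z, z')-(z'-z)  P_{v,u}\right|,
\end{eqnarray*}
so it suffices to control the right-hand side using the per-edge estimate from the Observation.

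First I would invoke Observation~\ref{bound:rotorz}, which asserts $\left| \I_{v, u}[z, z')-(z'-z)P_{v, u}\right|\leq \bar{\delta}(v)P_{v,u}$ for every $v,u$ and every $z'>z$. Substituting this into \eqref{eq:dsigmabound} yields
\begin{eqnarray*}
 \dsig
 \leq
 \max_{v\in V, \ u\in {\cal N}(v)} \bar{\delta}(v)P_{v,u}.
\end{eqnarray*}
Since $\bar{\delta}(v) P_{v,u}$ is a nonnegative integer (by the choice of the common denominator $\bar{\delta}(v)$) and $\sum_{u\in{\cal N}(v)} P_{v,u}=1$, each individual term $\bar{\delta}(v)P_{v,u}$ is at most $\bar{\delta}(v)$, with equality being possible only in degenerate cases; in any event $\max_{u}\bar{\delta}(v)P_{v,u}\le\bar{\delta}(v)\le\bar{\Delta}=\max_v\bar{\delta}(v)$, giving $\dsig\le\bar{\Delta}$.

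For the reverse inequality (to establish equality $\dsig=\bar{\Delta}$ as claimed), I would exhibit a single time step where the discrepancy actually attains $\bar{\Delta}$. The natural candidate is a vertex $v^*$ achieving $\bar{\delta}(v^*)=\bar{\Delta}$ together with an interval of length one positioned at an index $j$ where $\sigma_{v^*}(j)=u$; over a single served token the deterministic route sends the whole token to one neighbor, so $\Z_{v^*,u}^{(t)}-\chi_{v^*}^{(t)}P_{v^*,u}$ can be made to equal $1-P_{v^*,u}$, and iterating the argument over a block of length $\bar{\Delta}$ produces a gap of exactly $\bar{\Delta}P_{v^*,u}$ for a suitable edge, matching the upper bound. \textbf{The main obstacle} is precisely this matching lower bound: the upper argument is a one-line substitution, whereas showing tightness requires pinning down a configuration $\chi^{(t)}$ and a token index alignment that simultaneously realize the worst-case periodicity of the rotor sequence on the edge carrying the largest integer weight $\bar{\delta}(v)P_{v,u}$, and verifying that the floor/ceiling slack in the two-sided bound preceding Observation~\ref{bound:rotorz} is saturated rather than lost.
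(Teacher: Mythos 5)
Your upper-bound half is exactly the argument the paper intends (the paper gives no explicit proof; the lemma is meant to follow in one line from Observation~\ref{bound:rotorz} via \eqref{eq:dsigmabound}): substituting $\left|\I_{v,u}[z,z')-(z'-z)P_{v,u}\right|\leq \bar{\delta}(v)P_{v,u}\leq\bar{\delta}(v)\leq\bar{\Delta}$ into \eqref{eq:dsigmabound} gives $\dsig\leq\bar{\Delta}$, and that inequality is all that Theorem~\ref{thm:mixupper-vertexrr} actually uses. So on the direction that matters, your proposal matches the paper.

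The genuine gap is in your attempt at the reverse inequality, and it cannot be repaired, because the equality $\dsig=\bar{\Delta}$ as literally stated is not attainable. Your own sketch already shows the problem: the construction produces a deviation of $\bar{\Delta}P_{v^*,u}$, which equals $\bar{\Delta}$ only if $P_{v^*,u}=1$; but if $P_{v^*,u}=1$ then $\delta(v^*)=1$, the router sends every token to $u$, and $\Z^{(t)}_{v^*,u}-\chi^{(t)}_{v^*}P_{v^*,u}=0$ identically. More decisively, Observation~\ref{bound:rotorz} itself caps the per-edge deviation by $\bar{\delta}(v)P_{v,u}$, which is strictly less than $\bar{\delta}(v)$ whenever $\delta(v)\geq 2$ (indeed, since each full period of length $\bar{\delta}(v)$ contributes zero deviation, the worst interval deviation on an edge with $c=\bar{\delta}(v)P_{v,u}$ hits per period is at most $c(\bar{\delta}(v)-c)/\bar{\delta}(v)\leq\bar{\delta}(v)/4$). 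For example, a vertex with two out-neighbors of probability $1/2$ each has $\bar{\delta}(v)=2$ but per-edge deviation at most $1/2$. So no choice of configuration or token alignment can saturate $\bar{\Delta}$; the ``$=$'' in the lemma should be read as sloppiness for ``$\leq$'', your hedge about degenerate cases is correct, and you should simply drop the tightness half rather than try to complete it.
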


 By Theorem \ref{thm:mixupper-vertexds}, and the above lemma, 
  we obtain the following theorem. 
\begin{theorem}
\label{thm:mixupper-vertexrr}
 Let $P \in \mathbb{Q}_{\geq 0}^{N \times N}$ be a transition matrix of 
  a reversible and ergodic Markov chain with a state space $V$, where $\pi$ denotes the stationary distribution of $P$ and $t^*$ denotes the mixing rate of $P$. 
  For a rotor router model, the discrepancy between $\chi^{(T)}$ and $\mu^{(T)}$ satisfies
\begin{eqnarray*}
\left|\chi_w^{(T)}-\mu_w^{(T)}\right|
\leq \frac{3\pi_w}{\pi_{\min}} t^* \Delta \bar{\Delta}
\end{eqnarray*}
for any $w\in V$ and $T\geq 0$, where $\Delta$ denotes the maximum degree of the transition diagram of $P$, {\rm i.e.} $\Delta=\max_{v\in V}\delta(v)$, and $\bar{\Delta}=\max_v\bar{\delta}(v)$. 
\end{theorem}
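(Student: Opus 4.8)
The plan is to obtain Theorem~\ref{thm:mixupper-vertexrr} as an immediate corollary of the general discrepancy bound in Theorem~\ref{thm:mixupper-vertexds}, exactly as was done for the SRT router in Theorem~\ref{thm:mixupper-vertexgreedy}. The only model-specific ingredient required is a bound on the quantity $\dsig$ defined in \eqref{def:dsigma}, and this is furnished by Lemma~\ref{bound:rotor}, which states that $\dsig = \bar{\Delta}$ for the rotor-router on a multidigraph. Since Theorem~\ref{thm:mixupper-vertexds} already controls $\left|\chi_w^{(T)}-\mu_w^{(T)}\right|$ in terms of $\dsig$, the mixing time $\tau(\gamma)$, and the spectral/structural factors $\pi_w/\pi_{\min}$ and $\Delta$, nothing further about the rotor dynamics is needed.

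Concretely, I would first invoke Theorem~\ref{thm:mixupper-vertexds}, giving $\left|\chi_w^{(T)}-\mu_w^{(T)}\right| \leq \dsig \frac{2(1-\gamma)}{1-2\gamma}\tau(\gamma)\frac{\pi_w}{\pi_{\min}}\Delta$ for every $\gamma \in (0,1/2)$. Next I would substitute $\dsig = \bar{\Delta}$ from Lemma~\ref{bound:rotor}. Finally I would specialize to $\gamma = 1/4$, so that $\tau(1/4) = t^*$ by the definition of the mixing rate, and evaluate the scalar prefactor $\frac{2(1-\gamma)}{1-2\gamma}$ at $\gamma = 1/4$, which equals $3$. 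This produces $\left|\chi_w^{(T)}-\mu_w^{(T)}\right| \leq \frac{3\pi_w}{\pi_{\min}} t^* \Delta \bar{\Delta}$, the claimed inequality. The constant here is half that of the SRT bound precisely because the SRT router only guarantees the strict bound $\dsig < 2$, whereas here $\dsig$ is pinned exactly to $\bar{\Delta}$.

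There is essentially no obstacle in this closing step: it is a one-line substitution into an already-proved inequality, with the value $\gamma = 1/4$ forced by the definition of $t^*$. The genuine content lies upstream, in Lemma~\ref{bound:rotor}. That lemma rests on Observation~\ref{bound:rotorz}, whose estimate $\left|\I_{v,u}[z,z')-(z'-z)P_{v,u}\right| \leq \bar{\delta}(v)P_{v,u}$ follows from the periodicity $\sigma_v(i) = \sigma_v(i \bmod \bar{\delta}(v))$ together with the sandwiching $\lfloor (z'-z)/\bar{\delta}(v)\rfloor \bar{\delta}(v)P_{v,u} \leq \I_{v,u}[z,z') \leq (\lfloor (z'-z)/\bar{\delta}(v)\rfloor+1)\bar{\delta}(v)P_{v,u}$ recorded just before the observation. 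Feeding this bound into \eqref{eq:dsigmabound} and using $P_{v,u}\le 1$ controls $\dsig$ by $\bar{\Delta}$, after which the theorem follows mechanically.
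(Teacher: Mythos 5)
Your proposal is correct and takes essentially the same route as the paper, which likewise obtains Theorem~\ref{thm:mixupper-vertexrr} as an immediate consequence of Theorem~\ref{thm:mixupper-vertexds} together with Lemma~\ref{bound:rotor}, i.e.\ by substituting $\dsig = \bar{\Delta}$ and choosing $\gamma = 1/4$ so that $\frac{2(1-\gamma)}{1-2\gamma}\tau(\gamma) = 3t^*$. Your accompanying justification of Observation~\ref{bound:rotorz} via the periodicity of $\sigma_v$ and the sandwich bound, fed into \eqref{eq:dsigmabound} with $P_{v,u}\leq 1$, is exactly the reasoning the paper records (or leaves implicit) just before Lemma~\ref{bound:rotor}.
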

Analyzing carefully, we obtain the following upper bound for the weighted rotor router model. See appendix~\ref{appendix:rotor} for the proof.  
\begin{theorem}
\label{thm:mixupper-vertexrr2}
 Let $P \in \mathbb{Q}_{\geq 0}^{N \times N}$ be a transition matrix of 
  a reversible and ergodic Markov chain with a state space $V$, where $\pi$ denotes the stationary distribution of $P$ and $t^*$ denotes the mixing rate of $P$. 
  For a rotor router model, the discrepancy between $\chi^{(T)}$ and $\mu^{(T)}$ satisfies
\begin{eqnarray*}
\left|\chi_w^{(T)}-\mu_w^{(T)}\right|
\leq \frac{3\pi_w}{\pi_{\min}}t^* \bar{\Delta}
\end{eqnarray*}
or any $w\in V$ and $T\geq 0$, where $\bar{\Delta}=\max_v\bar{\delta}(v)$. 
\end{theorem}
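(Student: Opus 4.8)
The plan is to return to the exact identity of Lemma~\ref{lemm:maindisc} and to repeat the opening steps of the proof of Theorem~\ref{thm:mixupper-vertexds}, but to avoid the uniform estimate $|\Z^{(t)}_{v,u}-\chi^{(t)}_v P_{v,u}|\le\dsig$, which discards the dependence on the edge $(v,u)$. The stray factor $\Delta$ in Theorem~\ref{thm:mixupper-vertexrr} arises precisely from replacing the inner sum $\sum_{v\in{\cal N}(u)}$ by $\delta(u)$ copies of a uniform bound. Instead I would invoke the sharper, \emph{edge-weighted} estimate of Observation~\ref{bound:rotorz}, namely $|\Z^{(t)}_{v,u}-\chi^{(t)}_v P_{v,u}|\le\bar\delta(v)P_{v,u}\le\bar\Delta\,P_{v,u}$, which keeps the crucial factor $P_{v,u}$ inside the summation.

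Starting from $|\chi^{(T)}_w-\mu^{(T)}_w|\le\sum_{t=0}^{T-1}\sum_{u\in V}\sum_{v\in{\cal N}(u)}|\Z^{(t)}_{v,u}-\chi^{(t)}_v P_{v,u}|\,|P^{T-t-1}_{u,w}-\pi_w|$ and substituting this estimate, I pull out $\bar\Delta$ and am left with the inner \emph{column} sum $\sum_{v\in{\cal N}(u)}P_{v,u}$. This is not a row sum, so it need not equal $1$; the key step is to apply reversibility twice. First, detailed balance $\pi_v P_{v,u}=\pi_u P_{u,v}$ gives $\sum_{v\in{\cal N}(u)}P_{v,u}=\pi_u\sum_{v\in{\cal N}(u)}P_{u,v}/\pi_v\le(\pi_u/\pi_{\min})\sum_{v\in{\cal N}(u)}P_{u,v}=\pi_u/\pi_{\min}$, where the last equality is the genuine row sum $\sum_v P_{u,v}=1$. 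Second, exactly as in \eqref{eq:mdisc3}, reversibility (Proposition~\ref{prop:reversible}) rewrites the kernel term as $|P^{T-t-1}_{u,w}-\pi_w|=(\pi_w/\pi_u)|P^{T-t-1}_{w,u}-\pi_u|$.

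The two reversibility steps now produce the decisive cancellation: the factor $\pi_u/\pi_{\min}$ from the column sum multiplies the factor $\pi_w/\pi_u$ from the kernel to give $\pi_w/\pi_{\min}$, which no longer depends on $u$. This is exactly the mechanism that removes the factor $\Delta$: rather than counting the $\delta(u)$ incoming edges, the column sum is absorbed into the stationary weights. After the cancellation the estimate collapses to $\bar\Delta\,(\pi_w/\pi_{\min})\sum_{t=0}^{T-1}\sum_{u\in V}|P^{T-t-1}_{w,u}-\pi_u|=2\bar\Delta\,(\pi_w/\pi_{\min})\sum_{t=0}^{T-1}\dtv(P^{t}_{w,\cdot},\pi)$ after reindexing, using the definition~\eqref{def:dtv} of $\dtv$. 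Applying Lemma~\ref{lemm:dtsum} with $\gamma=1/4$ turns the tail sum into $\tfrac{3}{2}t^*$, and the factor $2$ together with $\tfrac{3}{2}$ yields the constant $3$, giving $|\chi^{(T)}_w-\mu^{(T)}_w|\le 3(\pi_w/\pi_{\min})t^*\bar\Delta$ as claimed.

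I expect the only genuinely delicate point to be the double use of reversibility and verifying that the $\pi_u$ factors cancel \emph{exactly}; in particular one must bound the column sum $\sum_{v\in{\cal N}(u)}P_{v,u}$ before, not after, the kernel term is rewritten, so that the two $\pi_u$'s meet and cancel rather than accumulate. Everything else is the same bookkeeping as in the proof of Theorem~\ref{thm:mixupper-vertexds}.
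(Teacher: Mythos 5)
Your proposal is correct and is essentially identical to the paper's proof in Appendix~\ref{appendix:rotor}: both start from \eqref{eq:mdisc1}, insert the edge-weighted bound $\bar\delta(v)P_{v,u}\le\bar\Delta P_{v,u}$ from Observation~\ref{bound:rotorz}, use detailed balance to convert the column sum $\sum_{v\in{\cal N}(u)}P_{v,u}$ into $\pi_u/\pi_{\min}$ times a row sum, apply Proposition~\ref{prop:reversible} to trade $|P^{T-t-1}_{u,w}-\pi_w|$ for $(\pi_w/\pi_u)|P^{T-t-1}_{w,u}-\pi_u|$ so the $\pi_u$'s cancel, and finish with Lemma~\ref{lemm:dtsum} at $\gamma=1/4$ (the paper's \eqref{eq:mixbi}). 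The only delicate point you flag is in fact harmless, since the two reversibility manipulations act on different factors and commute.
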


\section{Bounds For Rapidly Mixing Chains}\label{sec:applications}
 This section shows some examples of 
   bounds suggested by Theorems~\ref{thm:mixupper-vertexgreedy} and \ref{thm:mixupper-vertexbi2} 
  for some celebrated Markov chains known to be rapidly mixing,  
   namely ones for $0$-$1$ knapsack solutions (Section~\ref{sec:knapsack}), 
  linear extensions (Section~\ref{sec:linear_extensions}), and 
  matchings (Section~\ref{sec:matching}). 

\subsection{$0$-$1$ knapsack solutions}\label{sec:knapsack}
 Given $\Vec{a} \in \mathbb{Z}_{> 0}^n$ and $b\in \mathbb{Z}_{> 0}$, 
  the set of  $0$-$1$ knapsack solutions is defined by 
  $\Omega_{\rm Kna}=\{\Vec x \in \{0, 1\}^n \mid \sum_{i=1}^{n}a_i x_i\leq b\}$. 
 We define a transition matrix 
   $P_{\rm Kna} \in \mathbb{R}^{|\Omega_{\rm Kna}| \times |\Omega_{\rm Kna}|}$ 
  by
\begin{align}
\label{eq:knapsackP}
P_{\rm Kna}(\Vec x, \Vec y) = \left\{
\begin{array}{ll}
1/2n & (\mbox{if } \Vec y \in {\cal N}_{\rm Kna}(\Vec x))\\
1-|{\cal N}_{\rm Kna}(\Vec x)|/2n & (\mbox{if } \Vec y=\Vec x)\\
0 & (\mbox{otherwise})
\end{array}
\right. 
\end{align}
for $\Vec x, \Vec y\in \Omega_{\rm Kna}$, where ${\cal N}_{\rm Kna}(\Vec x)=\{\Vec y\in \Omega_{\rm Kna}\mid \|\Vec x-\Vec y\|_{1}=1\}$. 
Note that the stationary distribution of $P_{\rm Kna}$ is uniform distribution since $P_{\rm Kna}$ is symmetric. 
 The following theorem is due to Morris and Sinclair~\cite{MS04}. 
\begin{theorem}
\label{thm:knapsackmix}
\cite{MS04}
 The mixing time $\tau(\gamma)$ of $P_{\rm Kna}$ is 
$\Order ( n^{\frac{9}{2}+\alpha}\log \gamma^{-1} )$
 for any $\alpha>0$ and for any $\gamma>0$. 
\end{theorem}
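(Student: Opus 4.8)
The statement is the Morris--Sinclair bound, so the plan is to follow the spectral-gap (multicommodity-flow) route for the reversible chain $P_{\rm Kna}$. First I would record the cheap ingredients. Since $P_{\rm Kna}$ is symmetric, its stationary distribution is uniform on $\Omega_{\rm Kna}$, so $\pi_{\min} = 1/|\Omega_{\rm Kna}| \ge 2^{-n}$ and hence $\log \pi_{\min}^{-1} = \Order(n)$; moreover the holding probability $1-|{\cal N}_{\rm Kna}(\Vec x)|/2n \ge 1/2$ (as $|{\cal N}_{\rm Kna}(\Vec x)| \le n$) makes the chain lazy, forcing the spectrum to be nonnegative so that the gap controls the mixing time via $\tau(\gamma) = \Order\!\bigl(\mathrm{gap}^{-1}(\log \pi_{\min}^{-1} + \log \gamma^{-1})\bigr)$. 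With $\log \pi_{\min}^{-1}=\Order(n)$ this reduces the whole theorem to the single estimate $\mathrm{gap}^{-1} = \Order(n^{7/2+\alpha})$, i.e.\ to showing that the truncated cube $\Omega_{\rm Kna}$ has no bottleneck.

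The heart of the matter, and the main obstacle, is that lower bound on the gap. I would invoke Sinclair's flow method: routing one unit of demand between every ordered pair $(\Vec x,\Vec y)$ of feasible solutions along canonical paths of length $\ell = \Order(n)$, one gets $\mathrm{gap} \ge 1/(\rho\,\ell)$, where $\rho$ is the maximum congestion over hypercube edges (total routed $\pi$-weight through an edge, divided by its capacity $\pi(x)P_{\rm Kna}(x,y)$). The difficulty peculiar to knapsack is that a naive bit-fixing routing piles flow up against the feasibility facet $\sum a_i x_i = b$, so the congestion of edges crossing a given weight level can be huge. Morris and Sinclair's device, which I would reproduce, is to route each path through a uniformly random intermediate solution, which spreads the load and reduces the congestion bound to a statement purely about the sizes of the ``slabs'' $\{\Vec x \in \Omega_{\rm Kna} : \sum_i a_i x_i = c\}$ as $c$ sweeps across $[0,b]$: no single weight level may carry a disproportionate fraction of solutions.

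The technically hardest step is therefore that slab-count estimate, bounding the ratio of the number of solutions at nearby weight levels \emph{uniformly} in the arbitrary, possibly wildly scaled, integer weights $\Vec a$. I would attack it by passing to the continuous relaxation, comparing the discrete slab counts to the volumes of parallel cross-sections of the solid cube cut by the halfspace, and invoking a Brunn--Minkowski / Pr\'ekopa-type log-concavity of cross-sectional volume, then controlling the rounding error in the discretization. The non-integer exponent $9/2+\alpha$ (rather than a clean integer power) is exactly the price of this comparison: $\alpha$ absorbs the polylogarithmic and rounding slack, and the $9/2$ comes from multiplying the $\Order(n^{7/2+\alpha})$ congestion-times-length bound by the path length and the $\Order(n)$ factor $\log\pi_{\min}^{-1}$. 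Substituting $\mathrm{gap}^{-1}=\Order(n^{7/2+\alpha})$ back into the spectral estimate of the first paragraph then yields $\tau(\gamma)=\Order\!\bigl(n^{9/2+\alpha}\log\gamma^{-1}\bigr)$, as claimed in Theorem~\ref{thm:knapsackmix}.
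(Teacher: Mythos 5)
You should note first that the paper contains no proof of this statement to compare against: Theorem~\ref{thm:knapsackmix} is imported verbatim from Morris and Sinclair~\cite{MS04}, and the present authors use it only as a black box to instantiate Theorem~\ref{thm:mixupper-vertexgreedy}. So the only meaningful comparison is with the argument in~\cite{MS04} itself. Your outer shell is sound and matches the standard reduction: $P_{\rm Kna}$ is symmetric, so $\pi$ is uniform and $\pi_{\min}\geq 2^{-n}$; the holding probability is at least $1/2$, so the chain is lazy and the spectral gap governs mixing via $\tau(\gamma)=\Order\bigl(\mathrm{gap}^{-1}(\log \pi_{\min}^{-1}+\log\gamma^{-1})\bigr)$; and the exponent bookkeeping ($\mathrm{gap}^{-1}=\Order(n^{7/2+\alpha})$ times the $\Order(n)$ factor $\log\pi_{\min}^{-1}$) does yield $\Order(n^{9/2+\alpha}\log\gamma^{-1})$. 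Routing flow through a uniformly random intermediate solution is also genuinely a device of~\cite{MS04}.

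The genuine gap is in what you yourself identify as the hardest step. You propose to control the ratios of slab counts $|\{\Vec x\in\Omega_{\rm Kna} : \sum_i a_i x_i = c\}|$ at nearby levels $c$ by comparing with volumes of parallel cross-sections of the continuous relaxation and invoking Brunn--Minkowski/Pr\'ekopa log-concavity, treating the discretization error as ``polylogarithmic slack.'' This fails for arbitrary integer weights: lattice effects make thin-slab counts wildly non-smooth even though cross-sectional volumes vary smoothly. For instance, if every $a_i$ is even, all odd levels are empty while the continuous cross-sections there have positive volume, so ratios of nearby slab counts can be $0$ or $\infty$; more generally, badly scaled weights (say $a_i=2^i$) concentrate solutions on a sparse, irregular set of levels. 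The rounding error is not slack to be absorbed into $\alpha$ --- it is the entire difficulty of the theorem, and no uniform-in-$\Vec a$ continuous comparison can dispose of it. Morris and Sinclair avoid this route altogether: their analysis is purely combinatorial, ordering items by weight, spreading flow via balanced interleavings of additions and removals so that intermediate states stay near-feasible, and comparing level sets by injective encoding arguments (e.g., deleting heavy items buys room, at a polynomial cost in the count) rather than by volume. Since your whole argument was reduced to this one estimate, the proposal as it stands does not constitute a proof; it correctly reconstructs the framework of~\cite{MS04} but substitutes, for its core lemma, a method that breaks on exactly the instances the theorem must cover.
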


 Thus, Theorem~\ref{thm:mixupper-vertexgreedy} (resp.\ Theorem~\ref{thm:mixupper-vertexbi2}) suggests the following. 
\begin{theorem}\label{thm:knapsackupper}
For the SRT-router model (as well as the billiard-router model) corresponding to $P_{\rm Kna}$, 
 the discrepancy between $\chi^{(T)}$ and $\mu^{(T)}$ satisfies
\begin{eqnarray*}
\left|\chi_w^{(T)}-\mu_w^{(T)}\right| = \Order ( n^{\frac{11}{2}+\alpha})
\end{eqnarray*}
for any $w\in V$ and $T\geq 0$, where $\alpha>0$ is an arbitrary constant. 
\end{theorem}

 Let  $\widetilde{\mu}^{(t)} =  \mu^{(t)}/M$, for simplicity, 
  then clearly $\widetilde{\mu}^{(\infty)} = \pi$ holds, 
  since $P$ is ergodic (see Section~\ref{sec:MCMC}).  
 By the definition of the mixing time, 
  $\dtv(\widetilde{\mu}^{(\tau(\varepsilon))},\pi) \leq \varepsilon$ holds 
 where $\tau(\varepsilon)$ denotes the mixing time of~$P$, 
 meaning that $\widetilde{\mu}$ approximates the target distribution $\pi$ well. 
 Thus, we hope for a deterministic random walk 
  that the ``distribution'' 
  $\widetilde{\chi}^{(T)} \defeq \chi^{(T)}/M$ 
  approximates the target distribution $\pi$ well. 
 For convenience, 
  a {\em point-wise distance} $\dpw(\xi,\zeta)$ 
   between $\xi \in \mathbb{R}_{\geq 0}^{N}$ and $\zeta \in \mathbb{R}_{\geq 0}^{N}$ satisfying $\|\xi\|_1 = \|\zeta\|_1 = 1$ 
  is defined by 
\begin{eqnarray}
 \dpw(\xi,\zeta) 
\defeq \max_{v \in V} |\xi_v - \zeta_v|
= \|\xi - \zeta \|_{\infty}. 
\end{eqnarray}

\begin{corollary}
 For an arbitrary $\varepsilon$ $(0<\varepsilon<1)$, 
  let the total number of tokens $M := c_1\, n^{\frac{11}{2}+\alpha}\varepsilon ^{-1}$ 
   with some appropriate constants $c_1$ and $\alpha$. 
 Then, 
  the pointwise distance between $\widetilde{\chi}^{(T)} \defeq \chi^{(T)}/M$ and $\pi$ satisfies
\begin{eqnarray}
 \dpw\left(\widetilde{\chi}^{(T)},\pi\right) \leq \varepsilon
\label{eq:knapsack}
\end{eqnarray}
 for any $T \geq c_2\, n^{\frac{9}{2}+\alpha}\log \varepsilon ^{-1}$ 
   with an appropriate constant $c_2$, 
 where $\pi$ is the uniform distribution over~$\Omega_{\rm Kna}$. 
\end{corollary}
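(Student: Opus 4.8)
The plan is to combine the two bounds already available for the knapsack chain: the discrepancy bound of Theorem~\ref{thm:knapsackupper} and the mixing-time bound of Theorem~\ref{thm:knapsackmix}. The point-wise distance $\dpw(\widetilde{\chi}^{(T)},\pi)$ decomposes, via the triangle inequality, into the distance between $\widetilde{\chi}^{(T)}$ and $\widetilde{\mu}^{(T)}$ and the distance between $\widetilde{\mu}^{(T)}$ and $\pi$. Concretely,
\begin{eqnarray*}
\dpw\left(\widetilde{\chi}^{(T)},\pi\right)
\leq \dpw\left(\widetilde{\chi}^{(T)},\widetilde{\mu}^{(T)}\right)
 + \dpw\left(\widetilde{\mu}^{(T)},\pi\right),
\end{eqnarray*}
and I would bound each term by $\varepsilon/2$.

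First I would control the deterministic-versus-expected term. By definition $\widetilde{\chi}^{(T)} = \chi^{(T)}/M$ and $\widetilde{\mu}^{(T)} = \mu^{(T)}/M$, so $\dpw(\widetilde{\chi}^{(T)},\widetilde{\mu}^{(T)}) = \max_{w}|\chi^{(T)}_w - \mu^{(T)}_w|/M$. By Theorem~\ref{thm:knapsackupper} the numerator is $\Order(n^{\frac{11}{2}+\alpha})$, say at most $c_1\, n^{\frac{11}{2}+\alpha}$ for an appropriate constant $c_1$. Choosing $M := c_1\, n^{\frac{11}{2}+\alpha}\varepsilon^{-1}$ (possibly with a harmless factor of $2$ folded into $c_1$) makes this term at most $\varepsilon/2$ for \emph{every} $T \geq 0$, since the discrepancy bound is uniform in $T$. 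This is the step where the specific choice of $M$ in the statement is consumed; note that $\pi_w/\pi_{\min}=1$ here because $P_{\rm Kna}$ is symmetric and $\pi$ is uniform, which is why the knapsack discrepancy bound carries no $\pi_w/\pi_{\min}$ factor.

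Next I would control the expected-versus-stationary term. Since $\widetilde{\mu}^{(T)} = \widetilde{\mu}^{(0)} P^T$ is an honest probability distribution (a convex combination of the rows $P^T_{v,\cdot}$ with weights $\mu^{(0)}_v/M$), its total variation distance to $\pi$ is at most $\max_w \dtv(P^T_{w,\cdot},\pi) = h(T)$ by convexity. The point-wise distance is dominated by the $\ell_1$ distance, hence $\dpw(\widetilde{\mu}^{(T)},\pi) \leq \|\widetilde{\mu}^{(T)}-\pi\|_1 = 2\,\dtv(\widetilde{\mu}^{(T)},\pi) \leq 2h(T)$. By Theorem~\ref{thm:knapsackmix}, $\tau(\varepsilon) = \Order(n^{\frac{9}{2}+\alpha}\log\varepsilon^{-1})$, and invoking the submultiplicativity of Proposition~\ref{prop:dltimes} one sees $h(T)$ decays geometrically once $T$ exceeds a constant multiple of the mixing rate; thus for $T \geq c_2\, n^{\frac{9}{2}+\alpha}\log\varepsilon^{-1}$ with an appropriate $c_2$ the quantity $2h(T)$ falls below $\varepsilon/2$.

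Adding the two halves gives $\dpw(\widetilde{\chi}^{(T)},\pi)\leq\varepsilon$, which is \eqref{eq:knapsack}. The argument is essentially bookkeeping; the only genuine care is in tracking constants so that both halves land below $\varepsilon/2$ simultaneously, and in confirming that the constants $c_1$ (absorbing the hidden constant and the factor $2$ from Theorem~\ref{thm:knapsackupper}) and $c_2$ (absorbing the mixing-time constant together with the geometric-decay loss from Proposition~\ref{prop:dltimes}) can be chosen independently of $\varepsilon$. I expect the main obstacle, such as it is, to be purely notational: reconciling the $\ell_\infty$ flavour of $\dpw$ with the $\ell_1$/total-variation flavour of the mixing-time hypothesis, which the inequality $\|\cdot\|_\infty \leq \|\cdot\|_1 = 2\dtv(\cdot,\cdot)$ handles cleanly.
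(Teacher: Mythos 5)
Your proposal is correct and is exactly the argument the paper intends: the corollary is stated as an immediate consequence of Theorem~\ref{thm:knapsackupper} together with Theorem~\ref{thm:knapsackmix}, via the triangle inequality $\dpw(\widetilde{\chi}^{(T)},\pi)\leq\dpw(\widetilde{\chi}^{(T)},\widetilde{\mu}^{(T)})+\dpw(\widetilde{\mu}^{(T)},\pi)$, with the token count $M$ absorbing the $\Order(n^{\frac{11}{2}+\alpha})$ discrepancy and the mixing-time bound (with monotone decay of $h$ via Proposition~\ref{prop:dltimes}) handling the second term. Your additional observations --- that $\pi_w/\pi_{\min}=1$ by symmetry of $P_{\rm Kna}$, that $\widetilde{\mu}^{(T)}$ is a convex combination of rows of $P^T$ so $\dtv(\widetilde{\mu}^{(T)},\pi)\leq h(T)$, and that $\|\cdot\|_\infty\leq\|\cdot\|_1=2\dtv$ bridges the two metrics --- are precisely the bookkeeping the paper leaves implicit.
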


\subsection{Linear extensions of a poset}\label{sec:linear_extensions}
Let $S=\{1, 2, \ldots, n\}$, and $Q=(S, \preceq )$ be a partial order. 
A linear extension of $Q$ is a total order $X=(S, \sqsubseteq )$ which respects $Q$, 
i.e., for all $i, j\in S$, $i\preceq j$ implies $i\sqsubseteq j$. 
Let $\Omega_{\rm Lin}$ denote the set of all linear extensions of $Q$. 
%
 We define a relationship $X \sim_p X'$ ($p \in \{1,\ldots, n\}$) 
  for a pair of linear extensions $X$ and $X'$ $\in \Omega_{\rm Lin}$ 
  satisfying that $x_p=x'_{p+1}$, $x_{p+1}=x'_p$, and $x_i=x'_i$ for all $i\neq p, p+1$, 
 i.e., 
\begin{eqnarray*}
X&=&(x_1, x_2, \ldots, x_{p-1}, x_p, x_{p+1}, x_{p+2}, \ldots , x_n) \\
X'&=&(x_1, x_2, \ldots, x_{p-1}, x_{p+1}, x_p, x_{p+2}, \ldots , x_n)
\end{eqnarray*}
holds. 
Then, we define a transition matrix $P_{\rm Lin} \in \mathbb{R}^{|\Omega_{\rm Lin}| \times |\Omega_{\rm Lin}|}$ by
\begin{align}
\label{eq:linearP}
P_{\rm Lin}(X, X') = \left\{
\begin{array}{ll}
F(p)/2 & (\mbox{if } X'\sim_p X)\\
1-\sum_{I\in {\cal N}_{\rm Lin}(X)}P_{\rm Lin}(X, I)& (\mbox{if } X'=X)\\
0 & (\mbox{otherwise})
\end{array}
\right. 
\end{align}
for $X, X'\in \Omega_{\rm Lin}$, where ${\cal N}_{\rm Lin}(X)=\{Y\in \Omega_{\rm Lin}\mid X\sim_p Y(p\in \{1, \ldots, n-1\})\}$ and $F(p)=\frac{p(n-p)}{\frac{1}{6}(n^3-n)}$. 
Note that $P_{\rm Lin}$ is ergodic and reversible, and 
 its stationary distribution is uniform on $\Omega_{\rm Lin}$~\cite{BD99}. 
 The following theorem is due to Bubley and Dyer~\cite{BD99}. 
\begin{theorem}
\label{thm:linearmix}
\cite{BD99}
For $P_{\rm Lin}$, 
\begin{eqnarray*}
\tau(\gamma)\leq \left \lceil \frac{1}{6}(n^3-n) \ln \frac{n^2}{4\gamma } \right \rceil 
\end{eqnarray*}
holds for any $\gamma>0$. 
\end{theorem}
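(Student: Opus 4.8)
The plan is to establish this bound via the \emph{path-coupling} method of Bubley and Dyer, which reduces the whole analysis to pairs of linear extensions that are adjacent in a suitably chosen metric. First I would take $d(X,Y)$ to be the number of inverted pairs (the Kendall-tau distance): the number of unordered pairs $\{i,j\}\subseteq S$ whose relative order differs in $X$ and $Y$. Two linear extensions at distance $1$ then differ by exactly one adjacent transposition; that is, they agree in every position except that two incomparable elements occupying consecutive positions $k,k+1$ are interchanged (if a third element lay strictly between them, a second pair would be inverted, so the disagreeing elements must be adjacent). The diameter of $\Omega_{\rm Lin}$ under $d$ is at most $\binom{n}{2}$, and this count is what will enter the logarithm of the final estimate.

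Next I would describe a single-step coupling of the two copies started from adjacent $X,Y$. Both copies draw the \emph{same} position $p\in\{1,\dots,n-1\}$ according to $F(p)=\frac{p(n-p)}{\frac16(n^3-n)}$ together with a shared swap/stay coin, and I would couple the coins locally near the disagreement so as to minimise the expected distance. The clean case is $p=k$: since the two conflicting elements sit exactly at positions $k,k+1$, the coins can be coupled (pairing ``$X$ swaps'' with ``$Y$ stays'') so that the two copies coalesce with probability $1$, contributing $-1$ to the expected change with probability $F(k)$. For every $p\notin\{k-1,k,k+1\}$ the proposed swap involves elements on which $X$ and $Y$ already agree, so the moves are simultaneously valid or invalid and identical, leaving $d$ unchanged. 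The remaining contribution, and the crux of the argument, comes from the two boundary positions $p=k-1$ and $p=k+1$.

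The main obstacle is precisely this boundary analysis. At $p=k+1$ (and symmetrically at $p=k-1$) one copy proposes to move $b$ past the common neighbour $c$ while the other proposes to move $a$ past $c$, and the validity of these two proposals can differ because $a$ and $b$ need not share the same comparability relation with $c$; a careless coupling can push $d$ up by as much as $2$. I would couple the two proposals so that the distance rises by at most $1$, and then charge these increases against the guaranteed coalescence term at $p=k$. The decisive point is that the weighting $F(p)\propto p(n-p)$ is engineered so that, after normalising by $\sum_{p=1}^{n-1}p(n-p)=\frac16(n^3-n)$, the negative contribution at $p=k$ dominates the positive contributions at $p=k\pm1$ uniformly in the location $k$ of the disagreement; concretely I would aim to show $\Exp[d(X_1,Y_1)]\le (1-\tfrac{6}{n^3-n})\,d(X_0,Y_0)$ for adjacent $X_0,Y_0$. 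The uniform choice of position gives only a drift that degrades by a logarithmic factor, so the $p(n-p)$ weighting is what yields the clean $n^3$ scaling rather than $n^3\log n$.

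Finally I would invoke the path-coupling mixing-time estimate: a per-step contraction $1-\beta$ with $\beta=\tfrac{6}{n^3-n}$, extended to all pairs by composing along geodesics, gives $\tau(\gamma)\le\lceil \beta^{-1}\ln(D\gamma^{-1})\rceil$ where $D$ is the diameter. Substituting $\beta^{-1}=\tfrac16(n^3-n)$ and the diameter bound, together with the definition \eqref{def:mix} of the mixing time, reproduces the claimed estimate, the logarithmic factor recording the diameter of $\Omega_{\rm Lin}$. I expect the routine parts to be the two combinatorial identities (the normalisation of $F$ and the diameter count) and the case bookkeeping, while the genuinely delicate step is verifying that the $p(n-p)$ weighting produces a strictly negative drift uniformly over the position $k$ of the single inverted pair.
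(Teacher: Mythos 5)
The paper does not prove Theorem~\ref{thm:linearmix} --- it is quoted from Bubley and Dyer \cite{BD99} --- so your sketch has to be judged against the argument it tries to reconstruct, and there it has a genuine gap at exactly the point you flag as ``the crux.'' Your scaffolding is right: $\sum_{p=1}^{n-1}p(n-p)=\frac16(n^3-n)$, so $F$ is a distribution; crossing the coins at $p=k$ yields coalescence with probability $1$ given $p=k$; positions $p\notin\{k-1,k,k+1\}$ are neutral. But the boundary step cannot be carried out in the Kendall-tau metric, by \emph{any} coupling of the coins. Take $a,b,c$ mutually incomparable at positions $k,k+1,k+2$, with $X=(\ldots,a,b,c,\ldots)$ and $Y=(\ldots,b,a,c,\ldots)$, and condition on $p=k+1$. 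Each copy swaps with probability $1/2$ (both swaps are valid), and the four joint outcomes have Kendall distances $d(\mathrm{swap},\mathrm{swap})=3$, $d(\mathrm{swap},\mathrm{hold})=d(\mathrm{hold},\mathrm{swap})=2$, $d(\mathrm{hold},\mathrm{hold})=1$. If the joint probability of $(\mathrm{swap},\mathrm{swap})$ is $\alpha$, the marginals force the rest, and $\Exp[d']=3\alpha+2(\tfrac12-\alpha)+2(\tfrac12-\alpha)+\alpha=2$ for \emph{every} $\alpha$: the expected rise at each boundary position is exactly $1$, not the $\le 1/2$ your charging scheme needs. Since $F(k-1)+F(k+1)=2F(k)-\frac{12}{n^3-n}$, the drift is $-F(k)+F(k-1)+F(k+1)=F(k)-\frac{12}{n^3-n}>0$, i.e.\ expansion, so your claimed contraction $\Exp[d(X_1,Y_1)]\le(1-\frac{6}{n^3-n})d(X_0,Y_0)$ is false for adjacent pairs in this configuration. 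Reweighting the transposition edges cannot rescue it either: with edge weight $w_k$ at position $k$ one would need $k(n-k)w_k\ \ge\ (k+1)(n-k-1)w_{k+1}+(k-1)(n-k+1)w_{k-1}$ for all $k$, i.e.\ $g_k\ge g_{k+1}+g_{k-1}$ for $g_k=k(n-k)w_k>0$, which already fails at $k=2$ (it forces $g_3\le 0$). The failure is in the choice of metric, not in the coin bookkeeping.

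The constants in the stated bound tell you what the correct metric is: the (halved) total displacement, $\rho(X,Y)=\frac12\sum_{v\in S}\left|X^{-1}(v)-Y^{-1}(v)\right|$ (Spearman footrule). Adjacent pairs have $\rho=1$, and in the configuration above every realized-swap outcome has $\rho=2$ (under $(\mathrm{swap},\mathrm{swap})$ only $a$ and $b$ move, by $2$ each; under $(\mathrm{swap},\mathrm{hold})$ the displacements are $1,2,1$), so with the plain identical-coin coupling the rise is $+1$ only when the shared coin orders a swap: expected rise at most $\tfrac12$ per boundary position, and the drift becomes $-F(k)+\tfrac12\left(F(k-1)+F(k+1)\right)=-\frac{6}{n^3-n}$ exactly. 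Moreover the $\rho$-diameter is $\lfloor n^2/4\rfloor$ (attained by a reversal), which is precisely the $n^2/4$ inside the logarithm of Theorem~\ref{thm:linearmix}; your Kendall diameter $\binom{n}{2}$ would in any case only yield $\ln\frac{n(n-1)}{2\gamma}$, weaker than the claimed $\ln\frac{n^2}{4\gamma}$. One further caution even after switching metrics: $\rho$ is \emph{not} the path metric of the adjacent-transposition graph ($\rho$ can be strictly smaller than the Kendall distance), so the off-the-shelf path-coupling lemma you invoke in your last paragraph does not apply verbatim with $\rho$; the contraction has to be propagated to arbitrary pairs of linear extensions by a separate argument (made tractable because a step moves only the elements at positions $p,p+1$), after which $\Prob[X_t\neq Y_t]\le\Exp[\rho(X_t,Y_t)]\le \mathrm{e}^{-6t/(n^3-n)}\,n^2/4$ gives the stated $\tau(\gamma)\le\left\lceil\frac16(n^3-n)\ln\frac{n^2}{4\gamma}\right\rceil$. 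That propagation step, not the boundary case analysis, is where the real delicacy of \cite{BD99} lies.
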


 It is not difficult to see that the maximum degree $\Delta = n$ (including a self-loop) 
   of the transition diagram $P_{\rm LIN}$. 
 Thus, Theorem~\ref{thm:mixupper-vertexbi2} suggests the following\footnote{
   Theorem~\ref{thm:mixupper-vertexgreedy} also suggests that 
$|\chi_w^{(T)}-\mu_w^{(T)} | \leq n\left\lceil \frac{1}{3}(n^3-n)\ln n \right\rceil = \Order(n^4 \log n)$ 
for the SRT-router model. 
  }. 
\begin{theorem}\label{thm:linearupper-SRT}
For the billiard-router model corresponding to $P_{\rm LIN}$, 
 the discrepancy between $\chi^{(T)}$ and $\mu^{(T)}$ satisfies
\begin{eqnarray*}
\left|\chi_w^{(T)}-\mu_w^{(T)}\right| \leq (n-1)\left\lceil \frac{1}{3}(n^3-n)\ln n \right\rceil = \Order(n^4 \log n)
\end{eqnarray*}
for any $w\in V$ and $T\geq 0$. 
\end{theorem}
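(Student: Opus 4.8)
The plan is to derive the bound as a direct specialization of the refined billiard-router estimate of Theorem~\ref{thm:mixupper-vertexbi2} to the chain $P_{\rm Lin}$. That theorem bounds $|\chi_w^{(T)}-\mu_w^{(T)}|$ by $\frac{6\pi_w}{\pi_{\min}}\,t^*(\Delta-1)$, so the entire argument amounts to reading off the three quantities $\pi_w/\pi_{\min}$, $\Delta$, and $t^*$ for $P_{\rm Lin}$ and substituting.

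First I would record the two structural facts about $P_{\rm Lin}$. Since $P_{\rm Lin}$ is reversible with the uniform stationary distribution on $\Omega_{\rm Lin}$ (as noted after~\eqref{eq:linearP}, following~\cite{BD99}), every entry of $\pi$ equals $1/|\Omega_{\rm Lin}|$, so $\pi_w=\pi_{\min}$ and hence $\pi_w/\pi_{\min}=1$ for every $w$. For the degree, note from the definition of ${\cal N}_{\rm Lin}(X)$ that the only non-self transitions out of a linear extension $X$ are the order-preserving adjacent transpositions $X\sim_p Y$ with $p\in\{1,\dots,n-1\}$, so $|{\cal N}_{\rm Lin}(X)|\le n-1$; adding the self-loop gives out-degree at most $n$, whence $\Delta=n$, as stated just before the theorem.

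Next I would bound the mixing rate $t^*=\tau(1/4)$ by invoking Bubley and Dyer's estimate, Theorem~\ref{thm:linearmix}. Setting $\gamma=1/4$ in $\tau(\gamma)\le\lceil\frac16(n^3-n)\ln\frac{n^2}{4\gamma}\rceil$ collapses the argument of the logarithm to $n^2$, and since $\ln(n^2)=2\ln n$ this yields $t^*\le\lceil\frac13(n^3-n)\ln n\rceil$.

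Finally I would substitute into Theorem~\ref{thm:mixupper-vertexbi2}. With $\pi_w/\pi_{\min}=1$ and $\Delta=n$, the characteristic factor $t^*(\Delta-1)$ equals $(n-1)\lceil\frac13(n^3-n)\ln n\rceil$; since $\lceil\frac13(n^3-n)\ln n\rceil=\Order(n^3\log n)$ and $(n-1)=\Order(n)$, the resulting bound is $\Order(n^4\log n)$, proving the claim. I expect no genuine obstacle here: the computation is a substitution, and the only points needing a moment's care are the degree count $\Delta=n$ and the simplification $\ln(n^2)=2\ln n$ at $\gamma=1/4$, both of which are immediate once Theorems~\ref{thm:mixupper-vertexbi2} and~\ref{thm:linearmix} together with the uniformity of $\pi$ are in hand.
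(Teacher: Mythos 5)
Your proposal follows the paper's own proof exactly: the paper, too, obtains this theorem by substituting the uniform stationary distribution ($\pi_w/\pi_{\min}=1$), the degree bound $\Delta=n$ (the $n-1$ adjacent transpositions plus the self-loop, which has probability at least $1/2$ since $\sum_{p=1}^{n-1}F(p)/2=1/2$), and Bubley--Dyer's estimate $t^*=\tau(1/4)\leq\lceil\frac{1}{3}(n^3-n)\ln n\rceil$ into Theorem~\ref{thm:mixupper-vertexbi2}. One caveat, which your own computation makes visible: Theorem~\ref{thm:mixupper-vertexbi2} carries the prefactor $6\pi_w/\pi_{\min}$, so a faithful substitution yields $6(n-1)\left\lceil\frac{1}{3}(n^3-n)\ln n\right\rceil$ rather than the printed $(n-1)\left\lceil\frac{1}{3}(n^3-n)\ln n\right\rceil$ --- the paper's statement (like its footnote on the SRT model and Theorem~\ref{thm:matchingupper}) silently drops this factor of $6$, while the asserted $\Order(n^4\log n)$ asymptotics are unaffected. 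Your argument is therefore the intended one; to make the displayed inequality literally follow you should either carry the factor $6$ in the final bound or note that the theorem's stated constant is off by that factor.
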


\subsection{Matchings in a graph}\label{sec:matching}
 Counting all matchings in a graph, 
   related to the {\em Hosoya index}~\cite{Hosoya71}, 
  is known to be {\#}P-complete~\cite{Valiant79b}. 
 Jerrum and Sinclair~\cite{JS96} gave a rapidly mixing chain. 
 This section is concerned with a Markov chain for sampling from all matchings in a graph\footnote{ 
 Remark that 
  counting all {\em perfect} matchings in a bipartite graph, 
   related to the {\em permanent}, 
  is also well-known {\#}P-complete problem, 
  and  
   Jerrum et al.~\cite{JSV04} gave a celebrated FPRAS 
   based on an MCMC method using annealing.
 To apply our bound to a Markov chain for sampling perfect matchings, 
  we need some assumptions on the input graph (see e.g.,~\cite{Sinclair93,JS96,JSV04}). 
}. 

Let $H=(U,F)$ be an undirected graph, where $|U|=n$ and $|F|=m$. 
A matching in $H$ is a subset ${\cal M}\subseteq F$ such that no edges in ${\cal M}$ share an endpoint. 
Let $\Omega_{\rm Mat}$ denote the set of all possible matchings of $H$. 
Let $N_C({\cal M})=\{e=\{u,v\}\mid e\notin {\cal M}, {\rm both}\ u\ {\rm and}\ v\ {\rm are\ matched\ in}\ {\cal M}\}$
and let ${\cal N}_{\rm Mat}({\cal M})=\{e\mid e\notin N_C({\cal M})\}$.
Then, for $e=\{u,v\}\in {\cal N}_{\rm Mat}({\cal M})$, we define ${\cal M}(e)$ by 
\begin{eqnarray*}
{\cal M}(e)=\left\{ \begin{array}{ll}
{\cal M}-e & ({\rm if}\ e\in {\cal M}) \\
{\cal M}+e & ({\rm if}\ u\ {\rm and}\ v\ {\rm are\ unmatched\ in}\ {\cal M}) \\
{\cal M}+e-e' & (\mbox{if exactly one of $u$ and $v$ is matched in $M$, and $e'$ is the matching edge}). 
\end{array} \right.
\end{eqnarray*}
The we define the transition matrix $P_{\rm Mat}\in \mathbb{R}^{|\Omega_{\rm Mat}| \times |\Omega_{\rm Mat}|}$ by 
\begin{eqnarray*}
P_{\rm Mat}({\cal M},{\cal M}')=\left\{ \begin{array}{ll}
1/2m & ({\rm if}\ {\cal M}'={\cal M}(e)) \\
1-|{\cal N}_{\rm Mat}({\cal M})|/2m & ({\rm if}\ {\cal M}'={\cal M}) \\
0 & ({\rm otherwise}) \\ 
\end{array} \right.
\end{eqnarray*}
for any ${\cal M},{\cal M}'\in \Omega_{\rm Mat}$. 
Note that $P_{\rm Mat}$ is ergodic and reversible, and 
  its stationary distribution is uniform on $\Omega_{\rm Mat}$~\cite{JS96}. 
The following theorem is due to Jerrum and Sinclar~\cite{JS96}.  
\begin{theorem}
\cite{JS96}
\label{thm:matchmix}
For $P_{\rm Mat}$, 
\begin{eqnarray*}
\tau (\gamma )\leq 4mn(n\ln n +\ln \gamma ^{-1})
\end{eqnarray*}
holds for any $\gamma >0$. 
\end{theorem}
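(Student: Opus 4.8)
The plan is to prove the rapid mixing of $P_{\rm Mat}$ by the \emph{canonical paths} (multicommodity flow) method of Sinclair, which bounds the spectral gap $1-\lambda_2$ of a reversible chain from below and thereby bounds the mixing time from above. First I would record two structural facts. Since $P_{\rm Mat}$ is reversible with respect to the uniform distribution $\pi$ on $\Omega_{\rm Mat}$ and its holding probability $1-|{\cal N}_{\rm Mat}({\cal M})|/2m$ is at least $1/2$ at every state (because $|{\cal N}_{\rm Mat}({\cal M})| \leq m$), the chain is lazy, so all its eigenvalues are nonnegative and it suffices to control $\lambda_2$. Once a lower bound $1-\lambda_2 \geq 1/(4mn)$ is in hand, the standard reversible-chain estimate $\dtv(P^t_{{\cal M},\cdot},\pi) \leq \tfrac12\,\pi_{\min}^{-1/2}\,\lambda_2^{\,t}$ gives $\tau(\gamma) \leq \tfrac{1}{1-\lambda_2}\bigl(\tfrac12\ln \pi_{\min}^{-1}+\ln\gamma^{-1}\bigr)$, and since $\pi_{\min}^{-1}=|\Omega_{\rm Mat}|$ is at most the number of involutions on $n$ elements, $\ln \pi_{\min}^{-1}=\Order(n\ln n)$, which yields the claimed form $\Order\bigl(mn(n\ln n+\ln\gamma^{-1})\bigr)$.

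The heart of the argument is the construction of the canonical paths and the bound on their congestion. For each ordered pair $(I,F)$ of matchings I would route a path in the transition graph by examining the symmetric difference $I\oplus F$, which decomposes into vertex-disjoint alternating paths and even cycles. Fixing in advance a total order on these components (say by their smallest-indexed vertex) and, within each component, a canonical starting point, I unwind the components one at a time: each component is transformed from its $I$-edges to its $F$-edges by a short sequence of the elementary moves defining $P_{\rm Mat}$ (single-edge insertion/deletion and the endpoint rotation). Because every vertex lies in at most one edge of a matching, the total number of moves along $\gamma_{IF}$ is at most $n$, so the path length is bounded by $n$.

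To bound the congestion $\bar\rho=\max_t \tfrac{1}{Q(t)}\sum_{\gamma_{IF}\ni t}|\gamma_{IF}|\,\pi(I)\pi(F)$ through a transition $t=({\cal M},{\cal M}')$, where $Q(t)=\pi({\cal M})P_{\rm Mat}({\cal M},{\cal M}')=\tfrac{1}{2m|\Omega_{\rm Mat}|}$, I would exhibit an injective \emph{encoding}: from $t$ together with a pair $(I,F)$ whose path uses $t$, one recovers a near-matching $\eta_t(I,F)\approx (I\oplus F)\oplus({\cal M}\cup{\cal M}')$, and the pair $(I,F)$ is in turn reconstructible from $\eta_t(I,F)$ and $t$ alone. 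This makes $(I,F)\mapsto\eta_t(I,F)$ injective into a set of near-matchings of size $\Order(|\Omega_{\rm Mat}|)$, so at most $\Order(|\Omega_{\rm Mat}|)$ pairs route through $t$. Substituting $|\gamma_{IF}|\leq n$, $\pi(I)\pi(F)=|\Omega_{\rm Mat}|^{-2}$, and $Q(t)^{-1}=2m|\Omega_{\rm Mat}|$ yields $\bar\rho=\Order(mn)$, hence $1-\lambda_2\geq 1/\bar\rho=\Omega(1/(mn))$; the careful bookkeeping of the encoding overhead and the path length is exactly what pins the constant at $4mn$.

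The main obstacle is precisely this congestion estimate: designing the encoding $\eta_t$ so that the ``loose'' endpoints created by the partially unwound component are the only source of ambiguity, and proving that this ambiguity is $\Order(1)$-to-one so that $\eta_t$ is injective up to a constant factor. Everything else---the laziness observation, the spectral-gap-to-mixing-time conversion, and the $\Order(n\ln n)$ bound on $\ln|\Omega_{\rm Mat}|$---is routine once the flow with congestion $\Order(mn)$ is established.
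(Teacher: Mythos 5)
Your proposal is correct and takes essentially the same route as the source of this statement: the paper does not prove Theorem~\ref{thm:matchmix} itself but imports it from Jerrum and Sinclair~\cite{JS96}, whose proof is precisely the canonical-path argument you outline---paths routed through the components of $I\oplus F$, an injective near-matching encoding bounding the congestion by $\Order(mn)$, laziness of the chain from the holding probability $\geq 1/2$, and the standard reversible spectral conversion together with $\ln|\Omega_{\rm Mat}|=\Order(n\ln n)$. No gap to report.
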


 It is not difficult to see that the maximum degree $\Delta = m+1$ (including a self-loop) 
   of the transition diagram $P_{\rm LIN}$. 
 Thus, Theorem~\ref{thm:mixupper-vertexbi2} suggests the following\footnote{
   Theorem~\ref{thm:mixupper-vertexgreedy} also suggests that 
$|\chi_w^{(T)}-\mu_w^{(T)} | \leq 4(m+1)mn(n\ln n +\ln 4) = \Order(m^2 n^2 \log n)$ 
for the SRT-router model.  }. 
\begin{theorem}\label{thm:matchingupper}
For the billiard-router model corresponding to $P_{\rm LIN}$, 
 the discrepancy between $\chi^{(T)}$ and $\mu^{(T)}$ satisfies
\begin{eqnarray*}
\left|\chi_w^{(T)}-\mu_w^{(T)}\right| 
\leq 4m^2 n(n\ln n +\ln 4)
= \Order(m^2 n^2 \log n)
\end{eqnarray*}
for any $w\in V$ and $T\geq 0$. 
\end{theorem}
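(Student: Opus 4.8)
The plan is to obtain this bound as a direct instantiation of Theorem~\ref{thm:mixupper-vertexbi2}, the sharpened billiard-router estimate $|\chi_w^{(T)}-\mu_w^{(T)}|\le\frac{6\pi_w}{\pi_{\min}}\,t^*(\Delta-1)$, specialized to the matching chain. The transition matrix intended in the statement is $P_{\rm Mat}$ defined in this subsection; the symbol $P_{\rm LIN}$ is an evident carry-over from the linear-extension subsection, since both parameters $m=|F|$ and $n=|U|$ and the matching mixing-time estimate of Theorem~\ref{thm:matchmix} appear in the target bound. Thus the proof reduces to assembling three quantities and substituting them: the ratio $\pi_w/\pi_{\min}$, the maximum out-degree $\Delta$ of the transition diagram, and the mixing rate $t^*=\tau(1/4)$.

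First I would record that, as noted when $P_{\rm Mat}$ was introduced, this chain is ergodic, reversible, and has the \emph{uniform} stationary distribution on $\Omega_{\rm Mat}$. Consequently $\pi_w=\pi_{\min}$ for every $w$, so the prefactor $\pi_w/\pi_{\min}$ equals $1$ and disappears from the bound. Next I would count $\Delta$. For a fixed matching ${\cal M}$, the non-trivial moves are exactly the maps ${\cal M}\mapsto{\cal M}(e)$ indexed by $e\in{\cal N}_{\rm Mat}({\cal M})$, each carrying probability $1/2m$, together with a single self-loop absorbing the remaining mass. Since ${\cal N}_{\rm Mat}({\cal M})\subseteq F$ we have $|{\cal N}_{\rm Mat}({\cal M})|\le m$, whence $\delta({\cal M})\le m+1$; the extreme case in which no edge is blocked gives $\Delta=m+1$, so $\Delta-1=m$.

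It then remains to supply $t^*$ and to collect. By Theorem~\ref{thm:matchmix} the mixing time of $P_{\rm Mat}$ obeys $\tau(\gamma)\le 4mn(n\ln n+\ln\gamma^{-1})$, so at $\gamma=1/4$ the mixing rate satisfies $t^*=\tau(1/4)\le 4mn(n\ln n+\ln 4)$. The billiard router enters only through the discrepancy estimate $\dsig\le\Delta-1$ of Lemma~\ref{bound:billiard}, which is already folded into Theorem~\ref{thm:mixupper-vertexbi2}, so no fresh discrepancy computation is required. Substituting $\pi_w/\pi_{\min}=1$, $\Delta-1=m$, and the mixing-rate bound into Theorem~\ref{thm:mixupper-vertexbi2} yields $|\chi_w^{(T)}-\mu_w^{(T)}|=\Order(m\, t^*)=\Order(m^2 n^2\log n)$, matching the stated asymptotics, and collecting the leading constants produces the displayed closed form. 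The only step demanding genuine combinatorial attention is the degree count $\Delta=m+1$ through the structure of ${\cal N}_{\rm Mat}$; once that is settled, the theorem is essentially a one-line substitution, so the real work is bookkeeping rather than any conceptual difficulty, and I expect no substantive obstacle in the argument.
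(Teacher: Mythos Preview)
Your proposal is correct and follows exactly the paper's approach: the paper simply remarks that $\Delta=m+1$, notes the stationary distribution is uniform so $\pi_w/\pi_{\min}=1$, and then invokes Theorem~\ref{thm:mixupper-vertexbi2} together with the mixing-time bound $\tau(1/4)\le 4mn(n\ln n+\ln 4)$ from Theorem~\ref{thm:matchmix}. The only caveat is that the displayed constant $4m^2 n(n\ln n+\ln 4)$ in the statement omits the factor $6$ coming from Theorem~\ref{thm:mixupper-vertexbi2}, so a literal substitution yields $24m^2 n(n\ln n+\ln 4)$; this does not affect the asymptotic $\Order(m^2 n^2\log n)$, and the discrepancy is in the paper's arithmetic rather than in your argument.
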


\section{Concluding Remarks}
 This paper has been concerned with the functional-router model, 
   that is a generalization of the rotor-router model, and 
   gave an upper bound of $|\chi^{(t)}_v - \mu^{(t)}_v|$ 
   when its corresponding Markov chain is reversible. 
 We can also show a similar bound 
  for a version of functional-router model 
  with oblivious routers (see \cite{shiraga}). 
 A bound of the point-wise distance 
   independent of $\pi_{\max}/\pi_{\min}$ and/or independent of $\Delta$ 
  is a future work. 
 Development of deterministic approximation algorithms 
  based on deterministic random walks for {\#}P-hard problems is a challenge.

\bibliographystyle{abbrv}

%
\appendix
\section{Fundamental Properties of Markov Chain and Mixing Time}\label{sec:RMC}
\subsection{Proof of Proposition \ref{prop:dltimes}}
In this section, we show Proposition \ref{prop:dltimes} (see e.g., ~\cite{LPW08, MT06}). 
\paragraph{Proposition \ref{prop:dltimes}}
  For any integers $\ell$  $(\ell\geq 1)$ and 
  $k$ $(0 \leq k < \tau(\gamma))$, 
\begin{eqnarray*}
h \left(\ell \cdotp \tau(\gamma)+k \right)\leq \frac{1}{2}(2\gamma )^\ell
\end{eqnarray*}
 holds for any $\gamma$ $(0<\gamma<1/2)$. 

To begin with, we define 
\begin{eqnarray}
\label{def:dbt}
\bar{h} (t)\defeq \max_{v, w\in V}\dtv(P^t_{v, \cdot}, P^t_{w, \cdot}). 
\end{eqnarray}
Then, we show the  following. 
\begin{lemma}
\label{lemm:dbarp}
Let $\xi, \zeta\in \mathbb{R}^{|V|}$ be arbitrary probability distributions. 
Then, 
\begin{eqnarray*}
\dtv(\xi P^t, \zeta P^t) \leq \bar{h}(t)
\end{eqnarray*}
holds for any $t\geq 0$. 
\end{lemma}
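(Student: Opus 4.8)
The plan is to reduce the claim to the definition of $\bar{h}(t)$ by exploiting the linearity of the map $\xi \mapsto \xi P^t$ together with the fact that both $\xi$ and $\zeta$ are \emph{probability} distributions. Recall that $\dtv(\cdot,\cdot) = \frac{1}{2}\|\cdot - \cdot\|_1$ by \eqref{def:dtv}, so it suffices to control $\|\xi P^t - \zeta P^t\|_1 = \|(\xi-\zeta)P^t\|_1$. The natural way to bring the row vectors $P^t_{v,\cdot}$ into play is to expand $\xi P^t = \sum_{v\in V}\xi_v P^t_{v,\cdot}$ and $\zeta P^t = \sum_{w\in V}\zeta_w P^t_{w,\cdot}$.

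The key step I would carry out first is a \emph{bilinear rewriting} of the difference. Since $\sum_{v\in V}\xi_v = \sum_{w\in V}\zeta_w = 1$, one checks that
\begin{eqnarray*}
\xi P^t - \zeta P^t
= \sum_{v\in V}\sum_{w\in V} \xi_v \zeta_w \left( P^t_{v,\cdot} - P^t_{w,\cdot} \right),
\end{eqnarray*}
because multiplying $\sum_v \xi_v P^t_{v,\cdot}$ by $\sum_w \zeta_w = 1$ and $\sum_w \zeta_w P^t_{w,\cdot}$ by $\sum_v \xi_v = 1$ produces exactly the two double sums above. This is the crux of the argument, and it is precisely where the hypothesis that $\xi,\zeta$ are probability distributions is used: it guarantees that the nonnegative weights $\xi_v\zeta_w$ sum to one, so the right-hand side is an honest convex combination of the differences $P^t_{v,\cdot}-P^t_{w,\cdot}$.

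From there the remaining steps are routine. Applying the triangle inequality for $\|\cdot\|_1$ to the double sum and then halving gives
\begin{eqnarray*}
\dtv\left(\xi P^t, \zeta P^t\right)
\leq \sum_{v\in V}\sum_{w\in V} \xi_v \zeta_w\, \dtv\left(P^t_{v,\cdot}, P^t_{w,\cdot}\right),
\end{eqnarray*}
and bounding each $\dtv(P^t_{v,\cdot}, P^t_{w,\cdot})$ by $\bar{h}(t)$ via its definition \eqref{def:dbt}, together with $\sum_{v,w}\xi_v\zeta_w = 1$, yields $\dtv(\xi P^t,\zeta P^t)\leq \bar{h}(t)$ as desired. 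I do not anticipate a genuine obstacle here; the only point requiring care is the bilinear rewriting, and making sure the weights are nonnegative and normalized so that both the convexity of $\|\cdot\|_1$ and the final normalization go through cleanly.
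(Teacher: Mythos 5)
Your proposal is correct and matches the paper's own proof essentially step for step: the same bilinear rewriting $\xi P^t - \zeta P^t = \sum_{v\in V}\sum_{w\in V}\xi_v\zeta_w\bigl(P^t_{v,\cdot}-P^t_{w,\cdot}\bigr)$, obtained by multiplying each sum by the unit total mass of the other distribution, followed by the triangle inequality for $\|\cdot\|_1$ and the normalization $\sum_{v,w}\xi_v\zeta_w = 1$. Nothing is missing, and the point you flag as requiring care (nonnegativity and normalization of the weights $\xi_v\zeta_w$) is exactly the point the paper's proof invokes as well.
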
 
\begin{proof}
By \eqref{def:TV}, 
\begin{eqnarray}
\label{eq:ddbar1}
\dtv(\xi P^t-\zeta P^t)
&=&\frac{1}{2}\left \|\sum_{v\in V}\xi_v P^t_{v, \cdot} -\sum_{w\in V}\zeta_w P^t_{w, \cdot} \right \|_{1} \nonumber \\
&=&\frac{1}{2}\left \|\sum_{v\in V}\xi_v P^t_{v, \cdot}\sum_{w\in V}\zeta_w -\sum_{w\in V}\zeta_w P^t_{w, \cdot}\sum_{v\in V}\xi_v \right \|_{1} \nonumber \\
&=&\frac{1}{2}\left \|\sum_{v\in V}\sum_{w\in V}\xi_v \zeta_w\left( P^t_{v, \cdot}- P^t_{w, \cdot}\right)\right \|_{1}
\end{eqnarray}
holds. Notice that 
  $\sum_{u\in V}\xi_u=\sum_{u\in V}\zeta_u=1$, since $\xi$ and $\zeta$ are probability distributions. 
Thus, 
\begin{eqnarray*}
(\ref{eq:ddbar1})
&\leq &\frac{1}{2}\sum_{v\in V}\sum_{w\in V}\xi_v \zeta_w\left \|P^t_{v, \cdot}- P^t_{w, \cdot}\right \|_{1} \nonumber \\
&\leq &\frac{1}{2}\max_{v, w\in V}\left \|P^t_{v, \cdot}- P^t_{w, \cdot}\right \|_{1}\sum_{v\in V}\sum_{w\in V}\xi_v \zeta_w \\
&=& \frac{1}{2}\max_{v, w\in V}\left \|P^t_{v, \cdot}- P^t_{w, \cdot}\right \|_{1}=\bar{h}(t)
\end{eqnarray*}
holds, where the second last equality follows $\sum_{v\in V}\sum_{w\in V}\xi_v \zeta_w=\sum_{v\in V}\xi_v\sum_{w\in V}\zeta_w=\sum_{v\in V}\xi_v=1$, and we obtain the claim. 
\end{proof}

\begin{lemma}
\label{lemm:ddbarineq}
\begin{eqnarray*}
h(t)\leq \bar{h}(t)\leq 2h(t)
\end{eqnarray*}
holds for any $t\geq 0$. 
\end{lemma}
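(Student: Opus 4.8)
The plan is to establish the two inequalities separately, each relying on a standard property of the total variation distance together with the stationarity of $\pi$. Throughout I keep in mind that, since $P$ is ergodic with stationary distribution $\pi$, we have $\pi P^t = \pi$ for every $t \geq 0$.

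For the left inequality $h(t) \leq \bar{h}(t)$, I would first rewrite the stationary distribution as a convex combination of the rows of $P^t$: from $\pi = \pi P^t$ we get $\pi = \sum_{v \in V} \pi_v P^t_{v, \cdot}$. Fixing any $w \in V$ and using $\sum_{v \in V} \pi_v = 1$ to write $P^t_{w, \cdot} = \sum_{v \in V} \pi_v P^t_{w, \cdot}$, I would then bound
\begin{eqnarray*}
\dtv(P^t_{w, \cdot}, \pi) = \frac{1}{2}\left\| \sum_{v \in V} \pi_v \left( P^t_{w, \cdot} - P^t_{v, \cdot} \right) \right\|_{1} \leq \sum_{v \in V} \pi_v \, \dtv(P^t_{w, \cdot}, P^t_{v, \cdot}) \leq \bar{h}(t),
\end{eqnarray*}
where the first inequality is the triangle inequality for $\|\cdot\|_1$ (combined with the definition \eqref{def:dtv}) and the last step replaces each term by the maximum defining $\bar{h}(t)$. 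Taking the maximum over $w \in V$ yields $h(t) \leq \bar{h}(t)$.

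For the right inequality $\bar{h}(t) \leq 2h(t)$, I would simply apply the triangle inequality for $\dtv$ with $\pi$ inserted as an intermediate point: for any $v, w \in V$,
\begin{eqnarray*}
\dtv(P^t_{v, \cdot}, P^t_{w, \cdot}) \leq \dtv(P^t_{v, \cdot}, \pi) + \dtv(\pi, P^t_{w, \cdot}) \leq 2h(t).
\end{eqnarray*}
Maximizing the left-hand side over $v, w \in V$ gives the claim.

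The argument is entirely routine and I do not anticipate a genuine obstacle. The only point worth flagging is that the left inequality genuinely uses the stationarity $\pi = \pi P^t$ to express $\pi$ as a convex combination of the rows of $P^t$; this is what lets me compare $P^t_{w, \cdot}$ against $\pi$ in terms of the pairwise row distances. One should therefore invoke stationarity explicitly rather than treating $h(t) \leq \bar{h}(t)$ as self-evident.
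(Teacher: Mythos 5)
Your proposal is correct and takes essentially the same route as the paper: the paper establishes $h(t)\leq \bar{h}(t)$ by applying its Lemma~\ref{lemm:dbarp} (namely $\dtv(\xi P^t,\zeta P^t)\leq \bar{h}(t)$, proved by exactly your convex-combination argument) with $\xi=\Vec e_v$ and $\zeta=\pi$ together with stationarity $\pi P^t=\pi$, and it proves $\bar{h}(t)\leq 2h(t)$ by the same triangle inequality through $\pi$. The only cosmetic difference is that you inline the convex-combination computation (with weights $\pi_v$) where the paper invokes the general lemma.
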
 
\begin{proof}
Let $\Vec e_v\in \mathbb{R}^{|V|}$ denote the $v$-th unit vector. 
By Lemma \ref{lemm:dbarp}, 
\begin{eqnarray*}
\dtv(P^t_{v, \cdot}, \pi)
=\dtv(\Vec e_v P^t, \pi P^t)
\leq \bar{h}(t) 
\end{eqnarray*}
holds for any $v\in V$, and we obtain $h(t)\leq \bar{h}(t)$. 

By the definition of the total variation distance, 
\begin{eqnarray*}
\dtv(P^t_{v, \cdot}, P^t_{w, \cdot} )
&=&\frac{1}{2}\sum_{u\in V} \left| P^t_{v, u}-\pi_u+\pi_u-P^t_{w, u}\right|\\
&\leq &\frac{1}{2}\sum_{u\in V} \left| P^t_{v, u}-\pi_u\right| +\frac{1}{2}\sum_{u\in V} \left| \pi_u-P^t_{w, u}\right|
\leq  2h(t)
\end{eqnarray*}
holds for any $v, w\in V$. We obtain $\bar{h}(t)\leq 2h(t)$. 
\end{proof}

\begin{lemma}
\label{lemm:dbarineq}
Suppose a vector $\xi\in \mathbb{R}^{|V|}$ satisfies $\sum_{i\in V}\xi_i=0$, then
\begin{eqnarray*}
\left \|\xi P^t \right \|_{1}\leq \left \|\xi \right \|_{1}\bar{h}(t)
\end{eqnarray*}
holds for any $t\geq 0$. 
\end{lemma}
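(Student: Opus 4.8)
The plan is to reduce the statement to Lemma~\ref{lemm:dbarp} by splitting $\xi$ into its positive and negative parts and renormalizing them into probability distributions. First I would write $\xi = \xi^+ - \xi^-$, where $\xi^+_i = \max\{\xi_i, 0\}$ and $\xi^-_i = \max\{-\xi_i, 0\}$ are the two nonnegative vectors supported respectively on the coordinates where $\xi$ is positive and where it is negative. The hypothesis $\sum_{i\in V}\xi_i = 0$ is precisely what forces $\|\xi^+\|_1 = \|\xi^-\|_1$; I denote this common value by $c$ and record that $\|\xi\|_1 = \|\xi^+\|_1 + \|\xi^-\|_1 = 2c$.

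If $c = 0$ then $\xi$ is the zero vector and the inequality holds trivially, so I may assume $c > 0$. In that case $p \defeq \xi^+/c$ and $q \defeq \xi^-/c$ are genuine probability distributions on $V$, and $\xi = c(p - q)$. By linearity of the right action of $P^t$ we obtain $\xi P^t = c\,(p P^t - q P^t)$, and hence, by the definition \eqref{def:dtv} of the total variation distance,
\begin{eqnarray*}
\|\xi P^t\|_1 = c\,\|p P^t - q P^t\|_1 = 2c\,\dtv(p P^t, q P^t).
\end{eqnarray*}

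Finally I would apply Lemma~\ref{lemm:dbarp} to the two probability distributions $p$ and $q$, which gives $\dtv(p P^t, q P^t) \leq \bar{h}(t)$. Combined with $2c = \|\xi\|_1$, this yields $\|\xi P^t\|_1 \leq 2c\,\bar{h}(t) = \|\xi\|_1\,\bar{h}(t)$, as claimed. I do not expect a genuine obstacle here: the only point needing care is the observation that the zero-sum hypothesis is exactly what equalizes the masses of $\xi^+$ and $\xi^-$, so that both $p$ and $q$ qualify as probability distributions and Lemma~\ref{lemm:dbarp} is applicable; the remaining steps are routine bookkeeping with the $\ell_1$ norm.
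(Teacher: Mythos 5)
Your proposal is correct and takes essentially the same route as the paper: the paper's proof likewise writes $\xi=\xi^+-\xi^-$, uses the zero-sum hypothesis to get $\|\xi^+\|_1=\|\xi^-\|_1=\frac{1}{2}\|\xi\|_1$, normalizes both parts into probability distributions, and applies Lemma~\ref{lemm:dbarp}. Your explicit treatment of the degenerate case $\xi=0$ is a small point of care that the paper's proof implicitly skips (there the normalization would divide by zero), but it does not change the argument.
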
 
\begin{proof}
For convenience, let $\xi^+, \xi^-\in \mathbb{R}^{|V|}$ be defined by $\xi^+_i=\max\{\xi_i, 0\}$ and $\xi^-_i=\max\{-\xi_i, 0\}$. 
Then, 
\begin{eqnarray}
\label{eq:num}
\xi^+_i-\xi^-_i=\max\{\xi_i, 0\}-\max\{-\xi_i, 0\}=\xi_i
\end{eqnarray}
holds, meaning that 
\begin{eqnarray}
\label{eq:nupnum}
\xi=\xi^+-\xi^-. 
\end{eqnarray}
Since $\sum_{i\in V}\xi_i=0$, 
\begin{eqnarray}
\label{eq:nueqpm}
\sum_{i\in V}\xi^+_i=\sum_{i\in V}\xi^-_i
\end{eqnarray}
holds by \eqref{eq:num}. By the definition of $\xi^+$ and $\xi^-$, 
\begin{eqnarray*}
\xi^+_i+\xi^-_i=\max\{\xi_i, 0\}+\max\{-\xi_i, 0\}=|\xi_i|
\end{eqnarray*}
holds. Hence
\begin{eqnarray}
\label{eq:npsum}
\sum_{i\in V}\xi^+_i+\sum_{i\in V}\xi^-_i=\sum_{i\in V}|\xi_i|
\end{eqnarray}
holds. 
Thus, by (\ref{eq:nueqpm}) and (\ref{eq:npsum}), 
\begin{eqnarray}
\label{eq:vpvmv}
\sum_{i\in V}\xi^+_i=\sum_{i\in V}\xi^-_i=\frac{1}{2}\sum_{i\in V}|\xi_i|=\frac{1}{2}\|\xi\|_{1}
\end{eqnarray}
holds, hence $\frac{\xi^+}{\frac{1}{2}\|\xi\|_{1}}$ and $\frac{\xi^-}{\frac{1}{2}\|\xi\|_{1}}$ are probabilistic distribution, respectively. 
Finally, by Lemma \ref{lemm:dbarp} and (\ref{eq:nupnum}), 
\begin{eqnarray*}
\left \|\xi P^t \right \|_{1}
&=&\left \|\xi^+ P^t-\xi^- P^t\right \|_{1}
=\frac{1}{2}\|\xi\|_{1} \cdotp \left \| \frac{\xi^+}{\frac{1}{2}\|\xi\|_{1}} P^t-\frac{\xi^-}{\frac{1}{2}\|\xi\|_{1}} P^t \right \|_{1}\\
&\leq &\left \|\xi \right \|_{1}\bar{h}(t)
\end{eqnarray*}
 holds, and we obtain the claim. 
\end{proof}

\begin{lemma}
\label{lemm:smp}
\begin{eqnarray*}
h(s+t) &\leq& h(s)\bar{h}(t), \quad\mbox{and}\\
\bar{h}(s+t)&\leq& \bar{h}(s)\bar{h}(t)  
\end{eqnarray*}
hold for any $s, t\geq 0$. 
\end{lemma}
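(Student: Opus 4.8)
The plan is to derive both inequalities from Lemma~\ref{lemm:dbarineq}, the contraction estimate $\|\xi P^t\|_1 \leq \|\xi\|_1\,\bar h(t)$ valid for any zero-sum vector $\xi$. The essential observation is that each distance on the left-hand side, taken after $s+t$ steps, factors as $P^t$ applied to a difference of the distributions obtained after $s$ steps, and that such a difference of probability vectors automatically has entries summing to $0$, so that Lemma~\ref{lemm:dbarineq} applies.

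For the first inequality I would fix $w\in V$ and use the stationarity identity $\pi = \pi P^t$ to write
\[
P^{s+t}_{w,\cdot} - \pi = P^s_{w,\cdot}P^t - \pi P^t = \left(P^s_{w,\cdot} - \pi\right)P^t,
\]
and set $\xi := P^s_{w,\cdot} - \pi$, which satisfies $\sum_{i\in V}\xi_i = 1-1 = 0$. Combining the definition~\eqref{def:dtv} of $\dtv$ as $\tfrac12\|\cdot\|_1$ with Lemma~\ref{lemm:dbarineq} then gives
\[
\dtv\left(P^{s+t}_{w,\cdot},\pi\right) = \tfrac12\|\xi P^t\|_1 \leq \tfrac12\|\xi\|_1\,\bar h(t) = \dtv\left(P^s_{w,\cdot},\pi\right)\bar h(t) \leq h(s)\,\bar h(t).
\]
Taking the maximum over $w\in V$ yields $h(s+t)\leq h(s)\bar h(t)$.

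For the second inequality I would argue identically, replacing the stationary row $\pi$ by a second transition row. Fixing $v,w\in V$, I would write $P^{s+t}_{v,\cdot} - P^{s+t}_{w,\cdot} = \left(P^s_{v,\cdot} - P^s_{w,\cdot}\right)P^t$ and apply Lemma~\ref{lemm:dbarineq} to the zero-sum vector $\xi := P^s_{v,\cdot} - P^s_{w,\cdot}$, obtaining $\dtv(P^{s+t}_{v,\cdot},P^{s+t}_{w,\cdot})\leq \dtv(P^s_{v,\cdot},P^s_{w,\cdot})\,\bar h(t)\leq \bar h(s)\bar h(t)$; maximizing over $v,w$ gives $\bar h(s+t)\leq \bar h(s)\bar h(t)$.

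There is no genuine obstacle beyond spotting the correct zero-sum vector to feed into Lemma~\ref{lemm:dbarineq}: once the factorization through $P^t$ is made explicit, both bounds are immediate. The only point requiring a little care is verifying $\sum_{i\in V}\xi_i = 0$ in each case, which holds precisely because $\xi$ is a difference of two probability vectors (or of a probability vector and $\pi$), and noting that the first bound alone needs the stationarity $\pi P^t = \pi$.
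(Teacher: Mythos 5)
Your proposal is correct and follows exactly the paper's own argument: the paper likewise factors $\Vec e_v P^{s+t}-\pi = (\Vec e_v P^{s}-\pi)P^t$ and $\Vec e_v P^{s+t}-\Vec e_w P^{s+t} = (\Vec e_v P^{s}-\Vec e_w P^{s})P^t$, then applies Lemma~\ref{lemm:dbarineq} to these zero-sum vectors and maximizes. Your added care in checking the zero-sum hypothesis and the use of $\pi P^t=\pi$ are the same (implicit) ingredients of the paper's proof, so there is nothing to correct.
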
 
\begin{proof}
By Lemma \ref{lemm:dbarineq}, 
\begin{eqnarray*}
\frac{1}{2}\left \| \Vec e_v P^{s+t}-\pi\right\|_{1}
=\frac{1}{2}\left \| \left( \Vec e_v P^{s}-\pi \right) P^t\right\|_{1}
&\leq &\frac{1}{2}\left \| \Vec e_v P^{s}-\pi \right\|_{1}\bar{h}(t)
\end{eqnarray*}
holds for any $v\in V$, and we get $h(s+t)\leq h(s)\bar{h}(t)$. 
Similarly, 
\begin{eqnarray*}
\frac{1}{2}\left \| \Vec e_v P^{s+t}-\Vec e_w P^{s+t} \right\|_{1}
=\frac{1}{2}\left \| \left( \Vec e_v P^{s}-\Vec e_w P^{s} \right) P^t\right\|_{1}
&\leq &\frac{1}{2}\left \| \Vec e_v P^{s}-\Vec e_w P^{s} \right\|_{1}\bar{h}(t)
\end{eqnarray*}
holds for any $v, w\in V$, and we get $\bar{h}(s+t)\leq \bar{h}(s)\bar{h}(t)$. 
\end{proof}

\begin{proof}[Proof of Proposition \ref{prop:dltimes}]
Using Lemma \ref{lemm:smp}, 
\begin{eqnarray}
h\left(\ell \cdotp \tmix(\gamma )+k \right)
&\leq &h\left(\ell \cdotp \tmix(\gamma )\right){\bar h}(k)
\leq h\left(\ell \cdotp \tmix(\gamma )\right) \nonumber \\
&\leq &h\left(\tmix(\gamma )\right) \cdotp \bar{h} \left((\ell-1)\cdotp \tmix(\gamma )\right) 
\leq h\left(\tmix(\gamma )\right) \cdotp \left(\bar{h} \left(\tmix(\gamma )\right)\right)^{\ell-1}
\label{eq:db1}
\end{eqnarray}
holds. 
By Lemma \ref{lemm:ddbarineq}, 
\begin{eqnarray*}
(\ref{eq:db1})
\leq h\left(\tmix(\gamma )\right) \cdotp \left(2 h\left(\tmix(\gamma )\right)\right)^{\ell-1}
\leq \gamma \cdotp (2\gamma )^{\ell-1}
\leq \frac{1}{2}(2\gamma)^\ell
\end{eqnarray*}
holds, and we obtain the claim. 
\end{proof}

\subsection{Supplemental proof of Theorem~\ref{thm:mixupper-vertexds}}
We show the following proposition, appearing in the proof of Theorem~\ref{thm:mixupper-vertexds}.  
\begin{proposition}
\label{prop:reversible}
If $P$ is reversible, then 
\begin{eqnarray*}
 \pi_uP^t_{u, v}=\pi_vP^t_{v, u}
\end{eqnarray*}
holds for any $u, v\in V$ and for any $t\geq 1$. 
\end{proposition}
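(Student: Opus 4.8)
The plan is to prove the identity by induction on $t$, using the one-step detailed balance equation as the base case and propagating it through a single multiplication by $P$ at each step. Recall that reversibility of $P$ together with the normalization $\pi = f/\|f\|_1$ turns the detailed balance equation~\eqref{eq:db} into the statement $\pi_u P_{u, v} = \pi_v P_{v, u}$ for all $u, v \in V$. This is exactly the claimed identity for $t = 1$, so the base case requires nothing more than this reformulation.

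For the inductive step I would assume $\pi_u P^t_{u, v} = \pi_v P^t_{v, u}$ for all $u, v \in V$ and expand $P^{t+1} = P^t P$ entrywise:
\begin{eqnarray*}
\pi_u P^{t+1}_{u, v} = \pi_u \sum_{w \in V} P^t_{u, w} P_{w, v}.
\end{eqnarray*}
The key manipulation is to rewrite the two factors by two different rules: apply the induction hypothesis to $\pi_u P^t_{u, w} = \pi_w P^t_{w, u}$, and then apply the one-step detailed balance equation to $\pi_w P_{w, v} = \pi_v P_{v, w}$. This gives
\begin{eqnarray*}
\pi_u P^{t+1}_{u, v}
&=& \sum_{w \in V} \pi_w P^t_{w, u} P_{w, v}
 = \sum_{w \in V} P^t_{w, u}\, \pi_v P_{v, w} \\
&=& \pi_v \sum_{w \in V} P_{v, w} P^t_{w, u}
 = \pi_v P^{t+1}_{v, u},
\end{eqnarray*}
where the last equality recognizes $\sum_{w \in V} P_{v, w} P^t_{w, u}$ as the $(v, u)$ entry of $P \cdot P^t = P^{t+1}$. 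This closes the induction and yields the claim for all $t \geq 1$.

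This statement is a standard fact, so I do not expect a genuine obstacle; the only point requiring care is pairing the two rewriting rules with the correct factors so that the resulting sum collapses back into an entry of $P^{t+1}$ rather than of $P^t P = P^{t+1}$ read in the wrong order. As an alternative (and perhaps cleaner) route, I could argue in matrix form: detailed balance says precisely that $\mathrm{diag}(\pi)\, P$ is symmetric, i.e.\ $\mathrm{diag}(\pi)\, P = P^{\top} \mathrm{diag}(\pi)$; iterating this relation gives $\mathrm{diag}(\pi)\, P^t = (P^{\top})^t \mathrm{diag}(\pi) = (P^t)^{\top} \mathrm{diag}(\pi)$, and reading off the $(u, v)$ entry of this symmetric matrix is exactly $\pi_u P^t_{u, v} = \pi_v P^t_{v, u}$.
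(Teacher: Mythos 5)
Your proof is correct and is essentially identical to the paper's own argument: induction on $t$ with the detailed balance equation as the base case, followed by exactly the same chain of rewritings in the inductive step (induction hypothesis applied to $\pi_u P^t_{u,w}$, then detailed balance applied to $\pi_w P_{w,v}$, collapsing the sum into $\pi_v P^{t+1}_{v,u}$). Your alternative remark that detailed balance means $\mathrm{diag}(\pi)\,P$ is symmetric is a clean bonus, but the main proof matches the paper's route step for step.
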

\begin{proof}
 We show the claim by an induction of $t$. 
 For $t=1$, the claim is clear by the definition of reversible. 
 Assuming that $\pi_{u'}P^t_{u', v'}=\pi_{v'}$ holds for any $u', v'\in V$, 
 we show that $\pi_uP^{t+1}_{u, v}=\pi_vP^{t+1}_{v, u}$ holds for any $u, v\in V$ as follows;  
\begin{eqnarray*}
\pi_uP^{t+1}_{u, v}
&=&  \pi_u\sum_{w \in V}P^t_{u, w}P_{w, v}
\ =\ \sum_{w \in V}\pi_uP^t_{u, w}P_{w, v} \\
&=&  \sum_{w \in V}\pi_w P^t_{w, u}P_{w, v}
\ =\ \sum_{w \in V}P^t_{w, u} \pi_wP_{w, v}\\
&=&  \sum_{w \in V}P^t_{w, u} \pi_v P_{v, w}
\ =\ \pi_v \sum_{w \in V} P_{v, w}P^t_{w, u} \\
&=& \pi_vP^{t+1}_{v, u}. 
\end{eqnarray*}
We obtain the claim.
\end{proof}

\section{Supplemental proofs in Section~\ref{sec:routingmodel}}
\subsection{Proof of Theorem~\ref{thm:mixupper-vertexbi2}}\label{appendix:billiard}
Remark that
\begin{eqnarray}
\label{eq:reverseP}
\sum_{t=0}^{T-1} \sum_{u\in V} \left| P^{T-t-1}_{u, w}-\pi_w \right|
&\leq & \frac{\pi_w}{\pi_{\min}} \sum_{t=0}^{T-1} \sum_{u\in V} \left| P^{T-t-1}_{w, u}-\pi_u \right|
\end{eqnarray}
holds using Proposition~\ref{prop:reversible} (cf. recall that the arguments on \eqref{eq:mdisc4} in Section~\ref{sec:main}). 
We also remark that
\begin{eqnarray}
\label{eq:mixbi}
\sum_{t=0}^{T-1} \sum_{u\in V} \left| P^{T-t-1}_{w, u}-\pi_u \right|=\sum_{t=0}^{T-1} 2\dtv (P^t_{w,\cdot},\pi)\leq \frac{2(1-\gamma)}{1-2\gamma}\tau( \gamma)=3t^*
\end{eqnarray}
holds by Lemma~\ref{lemm:dtsum}. 
\begin{proof}[Proof of Theorem~\ref{thm:mixupper-vertexbi2}]
By (\ref{eq:mdisc1}) and Lemma~\ref{bound:billiardz}, we obtain 
\begin{eqnarray}
\label{eq:apbi1}
\lefteqn{ \left| \chi ^{(T)}_w-\mu ^{(T)}_w\right| \leq  \sum_{t=0}^{T-1} \sum_{u\in V} \sum_{v\in {\cal N}(u)}\left| Z_{v, u}^{(t)}-\chi_v^{(t)}P_{v, u}\right|  \left| P^{T-t-1}_{u, w}-\pi_w \right| } \nonumber \\
&\leq & \sum_{t=0}^{T-1} \sum_{u\in V} \sum_{v\in {\cal N}(u)}(1+P_{v,u}(\delta^+(v)-2)) \left| P^{T-t-1}_{u, w}-\pi_w \right| \nonumber \\
&= & \sum_{t=0}^{T-1} \sum_{u\in V} \left| P^{T-t-1}_{u, w}-\pi_w \right| \sum_{v\in {\cal N}(u)}1 
+ \sum_{t=0}^{T-1} \sum_{u\in V} \left| P^{T-t-1}_{u, w}-\pi_w \right| \sum_{v\in {\cal N}(u)} P_{v,u}(\delta(v)-2). 
\end{eqnarray}
By combining (\ref{eq:reverseP}) and (\ref{eq:mixbi}), we obtain
\begin{eqnarray}
\label{eq:apbi2}
\sum_{t=0}^{T-1} \sum_{u\in V} \left| P^{T-t-1}_{u, w}-\pi_w \right| \sum_{v\in {\cal N}(u)}1 
\leq \frac{\Delta\pi_w}{\pi_{\min}} \sum_{t=0}^{T-1} \sum_{u\in V} \left| P^{T-t-1}_{w, u}-\pi_u \right|
\leq \frac{3\pi_w}{\pi_{\min}}t^*\Delta. 
\end{eqnarray}
Since $P$ is reversible, we obtain 
\begin{eqnarray}
\label{eq:apbi3}
\sum_{v\in {\cal N}(u)} P_{v,u}(\delta(v)-2) 
\leq (\Delta-2)\sum_{v\in {\cal N}(u)} P_{v,u}
=(\Delta-2)\sum_{v\in {\cal N}(u)} \frac{\pi_u P_{u,v}}{\pi_v}
\leq (\Delta-2) \frac{\pi_u}{\pi_{\min}}, 
\end{eqnarray}
and hence
\begin{eqnarray}
\label{eq:apbi4}
\lefteqn { \sum_{t=0}^{T-1} \sum_{u\in V} \left| P^{T-t-1}_{u, w}-\pi_w \right| \sum_{v\in {\cal N}(u)} P_{v,u}(\delta(v)-2)
\leq \frac{\Delta -2}{\pi_{\min}} \sum_{t=0}^{T-1} \sum_{u\in V} \left| P^{T-t-1}_{u, w}-\pi_w \right| \pi_u } \nonumber \\
&=& \frac{\Delta -2}{\pi_{\min}} \sum_{t=0}^{T-1} \sum_{u\in V} \left| \pi_u P^{T-t-1}_{u, w}-\pi_u\pi_w \right| 
= \frac{\Delta -2}{\pi_{\min}} \sum_{t=0}^{T-1} \sum_{u\in V} \left| \pi_w P^{T-t-1}_{w, u}-\pi_u\pi_w \right| \nonumber \\
&=&\frac{(\Delta -2)\pi_w}{\pi_{\min}} \sum_{t=0}^{T-1} \sum_{u\in V} \left| P^{T-t-1}_{w, u}-\pi_u \right| 
\leq \frac{3\pi_w}{\pi_{\min}}t^*(\Delta-2). 
\end{eqnarray}
Thus, we obtain the claim by (\ref{eq:apbi1}), (\ref{eq:apbi2}) and (\ref{eq:apbi4}). 
\end{proof}

\subsection{Supplemental Proofs in Section~\ref{sec:vander}}\label{appendix:vander}
 This section presents a proof of Theorem~\ref{thm:upper-logM} and Lemma~\ref{bound:vandercz}. 
To begin with, we remark two lemmas concerning the function $\psi$ defined by~(\ref{eq:def-psi}). 
\begin{lemma}\label{lemmf2}
 For any $i\in \mathbb{Z}_{\geq 0}$ and any $k \in \mathbb{Z}_{\geq 0}$, 
\begin{eqnarray*}
\psi(i)=\psi\left( i \bmod{2^k} \right) +\psi\left( \left \lfloor \frac{i}{2^k}\right \rfloor \right)\cdotp \frac{1}{2^k}
\end{eqnarray*}
 holds. 
\end{lemma}
\begin{proof}
 In case of $i<2^k$, 
   the claim is easy 
  since
   $\psi(i \bmod{2^k}) = \psi(i)$ and 
   $\psi(\lfloor i/2^k \rfloor )=\psi(0)=0$ holds. 
 Suppose that $i\geq 2^k$, and that 
  $i$ 
   is represented in binary as $i=\sum^{\lfloor \lg i \rfloor}_{j=0} \beta_j(i) \cdotp 2^j$ 
   using $\beta_j(i) \in \{0,1\}$ ($j \in \{0,1,\ldots, \lfloor \lg i \rfloor\}$). 
 Then, 
\begin{eqnarray*}
i \bmod 2^k
 &=& \left(\sum^{\lfloor \lg i \rfloor}_{j=0}\beta_j (i)  \cdotp 2^j \right) \bmod{2^k}
\ =\  \sum_{j=0}^{k-1}\beta_j (i) \cdotp 2^j, \hspace{1em} \mbox{ and}\\
\left \lfloor \frac{i}{2^k}\right \rfloor
 &=& \left \lfloor \frac{\sum^{\lfloor \lg i \rfloor}_{j=0} \beta_j (i)  \cdotp 2^j}{2^k}\right \rfloor
\ =\ \sum_{j=k}^{\lfloor \lg i \rfloor}\beta_j (i) \cdotp 2^{j-k}
\end{eqnarray*} 
  hold, respectively. 
 Let $l =j-k$ and 
 let $b_l = \beta_{l+k} (i)$ ($l = 0, 1, \ldots, \lfloor \lg i \rfloor-k)$, for convenience, 
  then 
\begin{eqnarray*}
\lefteqn{ \psi\left( i\bmod 2^k\right) +\psi\left( \left \lfloor \frac{i}{2^k}\right \rfloor \right)\cdotp\frac{1}{2^k}
 = \psi\left( \sum_{j=0}^{k-1}\beta_j (i)2^j\right) +\psi\left( \sum_{j=k}^{N}\beta_j (i) \cdotp 2^{j-k}\right) \cdotp\frac{1}{2^k} } \\ 
 &=& \sum_{j=0}^{k-1}\beta_j (i)2^{-(j+1)}+\psi\left( \sum_{l=0}^{\lfloor \lg i \rfloor-k}b_l \cdotp 2^{l}\right) \cdotp\frac{1}{2^k} 
 = \sum_{j=0}^{k-1}\beta_j (i)2^{-(j+1)}+\frac{1}{2^k} \sum_{l=0}^{\lfloor \lg i \rfloor-k}b_l \cdotp 2^{-(\ell +1)} \\
 &=& \sum_{j=0}^{k-1}\beta_j (i)2^{-(j+1)}+\frac{1}{2^k} \sum_{j=k}^{\lfloor \lg i \rfloor}\beta_j (i) \cdotp 2^{-(j-k+1)}
 = \sum_{j=0}^{k-1}\beta_j (i)2^{-(j+1)}+\sum_{j=k}^{\lfloor \lg i \rfloor}\beta_j (i) \cdotp 2^{-(j+1)}\\
 &=& \sum_{j=0}^{\lfloor \lg i \rfloor}\beta_j (i) \cdotp 2^{-(j+1)} 
 = \psi(i)
\end{eqnarray*}
 and, we obtain the claim.
\end{proof}
\begin{lemma}\label{lemmf1}
 For any $k \in \mathbb{Z}_{\geq 0}$ and $\alpha \in \{0, 1, \ldots, 2^k-1\}$, 
\begin{eqnarray*}
\psi(2^k+\alpha )=\frac{1}{2^{k+1}} +\psi(\alpha )
\end{eqnarray*}
holds. 
\end{lemma}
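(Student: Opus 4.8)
The plan is to obtain Lemma~\ref{lemmf1} as a one-line consequence of the base-decomposition identity in Lemma~\ref{lemmf2}, which has just been established. First I would apply Lemma~\ref{lemmf2} to the index $i = 2^k + \alpha$ using the very same exponent $k$. Because the hypothesis guarantees $0 \le \alpha < 2^k$, the two arguments appearing on the right-hand side collapse to explicit values: the residue is $i \bmod 2^k = (2^k + \alpha) \bmod 2^k = \alpha$, and the quotient is $\lfloor i / 2^k \rfloor = \lfloor (2^k + \alpha)/2^k \rfloor = 1$. Substituting these gives $\psi(2^k + \alpha) = \psi(\alpha) + \psi(1)\cdot 2^{-k}$.

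Second, I would evaluate $\psi(1)$. Directly from the definition \eqref{eq:def-psi}, the binary expansion of $1$ has the single digit $\beta_0(1) = 1$, so $\psi(1) = 2^{-1} = 1/2$; this is also one of the worked examples listed just after \eqref{eq:def-psi}. Plugging $\psi(1) = 1/2$ into the previous expression yields $\psi(2^k + \alpha) = \psi(\alpha) + \tfrac{1}{2}\cdot 2^{-k} = \psi(\alpha) + 2^{-(k+1)}$, which is exactly the asserted identity.

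I do not anticipate a genuine obstacle here, since the statement simply records that adding $2^k$ forces the $k$-th binary digit to $1$ while leaving the lower $k$ digits untouched; the only point requiring care is the boundary case $\alpha = 0$, where $\psi(0) = 0$ by convention and the identity reduces to $\psi(2^k) = 2^{-(k+1)}$, which is consistent with the computation above. As a self-contained alternative that avoids invoking Lemma~\ref{lemmf2}, I could instead argue directly from the binary digits: since $\alpha < 2^k$ one has $\beta_j(2^k+\alpha) = \beta_j(\alpha)$ for $0 \le j \le k-1$, together with $\beta_k(2^k+\alpha) = 1$ and $\beta_j(2^k+\alpha) = 0$ for $j > k$ (as $2^k + \alpha < 2^{k+1}$); summing the reversed-digit series in \eqref{eq:def-psi} then splits into the lower block, which reproduces $\psi(\alpha)$, plus the single new term $2^{-(k+1)}$ contributed by the $k$-th digit.
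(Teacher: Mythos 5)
Your proposal is correct and follows the paper's own proof exactly: the paper likewise applies Lemma~\ref{lemmf2} with $i = 2^k + \alpha$, observes that $(2^k+\alpha) \bmod 2^k = \alpha$ and $\lfloor (2^k+\alpha)/2^k \rfloor = 1$, and concludes via $\psi(1) = 1/2$. Your additional direct digit-counting argument is a fine sanity check but adds nothing beyond the paper's route.
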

\begin{proof}
 By Lemma\ref{lemmf2}, 
\begin{eqnarray*}
 \psi(2^k+\alpha )
 &=&  \psi\left( (2^k+\alpha ) \bmod 2^k\right) 
     +\psi\left( \left \lfloor \frac{2^k+\alpha }{2^k}\right \rfloor \right)\cdotp \frac{1}{2^k} \\
 &=& \psi(\alpha )+\frac{\psi(1)}{2^k} \\
 &=& \frac{1}{2^{k+1}} +\psi(\alpha) 
\end{eqnarray*}
  holds, where the last equality follows  $\psi(1)=1/2$ by the definition. 
\end{proof}
 Now, we define 
\begin{eqnarray}
 \Phi[z, z') \defeq \{\psi(i)\mid i \in \{ z, z+1, \ldots, z'-1 \} \}
 \label{def:Psi}
\end{eqnarray}
  for $z,z' \in \mathbb{Z}_{\geq 0}$ satisfying $z < z'$. 
 For convenience, we define $\Phi[z, z) = \emptyset$. 
 It is not difficult to see that 
\begin{eqnarray}
 \Phi [0, 2^k)=\left \{ \frac{i}{2^{k}}\mid i \in \{0, 1, \ldots, 2^k-1 \} \right \}
\label{eq:coloF}
\end{eqnarray}
 holds for any $k \in \mathbb{Z}_{\geq 0}$. 
 In general, we can show the following lemma, using Lemmas~\ref{lemmf2} and \ref{lemmf1}. 
\begin{lemma}\label{lemmfxy2}
For any $z \in \mathbb{Z}_{\geq 0}$ and for any $k \in \mathbb{Z}_{\geq 0}$
\begin{eqnarray*}
\left| \Phi[z, z+2^k)\cap \left[ \frac{i}{2^k}, \frac{i+1}{2^k}\right) \right|=1
\end{eqnarray*}
 holds for any $i\in \{0, 1, \ldots, 2^k-1\}$. 
\end{lemma}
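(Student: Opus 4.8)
The plan is to reduce everything to the base case $z=0$, already recorded in \eqref{eq:coloF}, by exploiting the self-similar structure of $\psi$ captured in Lemma~\ref{lemmf2}. First I would fix $z$ and $k$, and for each index $i\in\{z,\ldots,z+2^k-1\}$ apply Lemma~\ref{lemmf2} to write $\psi(i)=\psi(i\bmod 2^k)+\psi(\lfloor i/2^k\rfloor)\cdot 2^{-k}$. Since $\psi$ takes values in $[0,1)$, the second summand lies in $[0,2^{-k})$; and since $i\bmod 2^k<2^k$, the first summand $\psi(i\bmod 2^k)$ is, by \eqref{eq:coloF}, one of the points $j/2^k$ with $j\in\{0,\ldots,2^k-1\}$. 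Consequently $\psi(i)$ lies in the single dyadic interval $[\,\psi(i\bmod 2^k),\,\psi(i\bmod 2^k)+2^{-k})=[\,j/2^k,\,(j+1)/2^k)$, so the interval index $j$ containing $\psi(i)$ is determined solely by the residue $i\bmod 2^k$.

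Next I would observe that as $i$ ranges over the $2^k$ consecutive integers $z,\ldots,z+2^k-1$, the residues $i\bmod 2^k$ run through $\{0,1,\ldots,2^k-1\}$ exactly once each. Combined with \eqref{eq:coloF}, which exhibits $r\mapsto\psi(r)$ as a bijection from $\{0,\ldots,2^k-1\}$ onto $\{0,1/2^k,\ldots,(2^k-1)/2^k\}$, this shows that the assignment $i\mapsto j=2^k\psi(i\bmod 2^k)$ is a bijection from $\{z,\ldots,z+2^k-1\}$ onto $\{0,\ldots,2^k-1\}$. Therefore each dyadic interval $[\,i/2^k,(i+1)/2^k)$ receives exactly one value $\psi(\cdot)$ from the block, which is precisely the claim.

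I expect the main obstacle to be a careful bookkeeping matter rather than a conceptual one: I must verify that the perturbation term $\psi(\lfloor i/2^k\rfloor)\cdot 2^{-k}$ never pushes $\psi(i)$ out of the dyadic interval pinned down by $\psi(i\bmod 2^k)$. This hinges on the strict bound $\psi(\cdot)<1$, which guarantees the perturbation is strictly less than $2^{-k}$, so that no value ever reaches the right endpoint $(j+1)/2^k$ and the half-open intervals partition the relevant range cleanly. Once this is confirmed, the bijection argument together with \eqref{eq:coloF} delivers the count of exactly one.
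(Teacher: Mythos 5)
Your proposal is correct and follows essentially the same route as the paper's own proof: both decompose $\psi(i)=\psi(i\bmod 2^k)+\psi(\lfloor i/2^k\rfloor)\cdot 2^{-k}$ via Lemma~\ref{lemmf2}, use the strict bound $\psi(\cdot)<1$ to confine each value to the half-open dyadic interval $\left[\,j/2^k,(j+1)/2^k\right)$ determined by its residue, and invoke \eqref{eq:coloF} together with the fact that $2^k$ consecutive integers hit every residue exactly once. No gaps; the bookkeeping concern you flag (the perturbation staying below $2^{-k}$) is exactly the point the paper also checks.
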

\begin{proof}
 By Lemma~\ref{lemmf2}, 
\begin{eqnarray*}
\Phi[z, z+2^k) 
 &=& \left\{\psi(z+i) \mid i \in \{0,1,\ldots,2^k-1\} \right\}  \\
 &=& \left\{\psi\left((z+i) \bmod 2^k\right) +\frac{\psi\left(\left\lfloor \frac{z+i}{2^k} \right\rfloor\right)}{2^k}
  \mid i \in \{0,1,\ldots,2^k-1\} \right\}  
\end{eqnarray*}
  holds. 
 Since $0 \leq \psi(z') < 1$ holds for any $z' \in \mathbb{Z}_{\geq 0}$, 
\begin{eqnarray}
\lefteqn{\psi\left( (z + i) \bmod 2^k\right) + \frac{\psi\left( \left \lfloor \frac{z+i}{2^k}\right \rfloor \right)}{2^k}} \nonumber\\
&\in& \left[\psi\left( (z + i) \bmod 2^k\right),\psi\left( (z + i) \bmod 2^k\right) +\frac{1}{2^k} \right)
\label{eq:lemmfxy2-a}
\end{eqnarray}
  holds for each $i \in \{0,1,\ldots,2^k-1\}$. 
 The observation (\ref{eq:coloF}) implies that 
\begin{eqnarray}
\lefteqn{ \left \{ \psi\left( (z + i) \bmod 2^k\right) \mid i \in \{0,1,\ldots,2^k-1\} \right\}  } \nonumber\\
&=& 
 \left \{ \frac{j}{2^k} \mid j \in \{0,1,\ldots,2^k-1\} \right\}  
\label{eq:lemmfxy2-b}
\end{eqnarray}
  holds. 
 Notice that (\ref{eq:lemmfxy2-b}) implies 
 $\psi\left( (z + i) \bmod 2^k\right) \neq \psi\left( (z + i') \bmod 2^k\right) $
  for any distinct $i,i' \in  \{0,1,\ldots,2^k-1\} $. 
 Now, the claim is clear by (\ref{eq:lemmfxy2-a}) and (\ref{eq:lemmfxy2-b}). 
\end{proof}

 Note that Lemma~\ref{lemmfxy2} implies that 
\begin{eqnarray}
 \Phi [z, z+2^k) 
 = \left \{ \frac{i}{2^{k}} + \frac{\theta(i)}{2^{k}} \mid i \in \{0, 1, \ldots, 2^k-1 \} \right \}
\label{eq:FPSI-explicit}
\end{eqnarray}
  using appropriate $\theta(i) \in [0,1)$ for $i = 0,1,\ldots,2^k-1$. 

\begin{lemma}\label{lemmIvu1}
 Let $z_0,k \in \mathbb{Z}_{\geq 0}$, and let $x,y \in [0,1)$ satisfy $x<y$. 
 Then, 
\begin{eqnarray*}
\left|  | \Phi[z_0, z_0+2^k) \cap [x,y) |  - 2^k \cdotp (y-x) \right| < 2 
\end{eqnarray*}
holds. 
\end{lemma}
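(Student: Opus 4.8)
The plan is to exploit the explicit description \eqref{eq:FPSI-explicit} of $\Phi[z_0, z_0+2^k)$, which is the essential content already supplied by Lemma~\ref{lemmfxy2}: writing $m = 2^k$, each of the $m$ dyadic cells $I_i = [i/m, (i+1)/m)$ ($i = 0,1,\ldots,m-1$) contains exactly one element $p_i = (i+\theta(i))/m$ of $\Phi[z_0, z_0+2^k)$, with $\theta(i) \in [0,1)$. Thus $\Phi[z_0, z_0+2^k)$ is a perturbation of the regular grid $\{i/m\}$ in which no point leaves its own cell, and the quantity $N \defeq |\Phi[z_0, z_0+2^k)\cap[x,y)|$ is exactly the number of cells whose representative point falls in $[x,y)$.

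First I would classify the cells $I_0,\ldots,I_{m-1}$ according to their intersection with $[x,y)$: those contained in $[x,y)$ (say $F$ of them), those disjoint from it, and those meeting it only partially. Because $[x,y)$ is an interval, at most two cells are of this last, ``boundary'', type. A contained cell contributes its unique point to $N$, a disjoint cell contributes nothing, and each boundary cell contributes $0$ or $1$; this yields the sandwich $F \le N \le F+2$. Next I would compare $F$ with $m(y-x)$ by lengths: the $F$ contained cells are disjoint and lie inside $[x,y)$, so $F/m \le y-x$, while $[x,y)$ is covered by the contained cells together with the at most two boundary cells, so $y-x \le (F+2)/m$. Together these give $m(y-x)-2 \le F \le m(y-x)$, and combined with the sandwich the crude estimate $|N-m(y-x)| \le 2$.

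The only real difficulty is upgrading this to the strict inequality $<2$ asserted, since the sandwich is a priori tight. I would resolve it by showing the two extremal situations cannot occur. Let $b \le 2$ denote the number of boundary cells. If $b=0$, the endpoints $x,y$ are grid-aligned, $[x,y)$ is exactly the union of the contained cells, and $N = F = m(y-x)$, so the discrepancy is $0$. If $b \ge 1$, then each boundary cell meets $[x,y)$ in a proper subinterval of length strictly less than $1/m$, whence $F/m < y-x < (F+2)/m$ with both inequalities strict; this gives $F < m(y-x)$, so $N \le F+2 < m(y-x)+2$, and also $m(y-x) < F+2 \le N+2$. In every case both $N-m(y-x) < 2$ and $m(y-x)-N < 2$ hold, which is the claim.

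The bookkeeping of the boundary cells, together with the degenerate subcases in which $x$ or $y$ is itself a multiple of $1/m$, is the one place where care is genuinely needed, but it is entirely elementary once the structural fact from Lemma~\ref{lemmfxy2} is in hand. I expect that step, rather than the length comparison, to be where the proof must be written carefully.
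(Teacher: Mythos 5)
Your proof is correct, and it rests on exactly the same key input as the paper's --- the one-point-per-dyadic-cell structure of Lemma~\ref{lemmfxy2}, equivalently the representation \eqref{eq:FPSI-explicit} --- but the bookkeeping is genuinely different. The paper works with indices rather than cells: writing $p_i=(i+\theta(i))/2^k$, it chooses $s\le t$ with $p_s\le x<p_{s+1}$ and $p_t\le y<p_{t+1}$ (with the conventions $p_{-1}=0$ and $p_{2^k}=1$), identifies the cardinality of $\Phi[z_0,z_0+2^k)\cap[x,y)$ with $t-s$, solves the sandwich inequalities for $s$ and $t$, and subtracts; strictness then falls out in a single chain from $0\le\theta(\cdot)<1$, with no case analysis at all. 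Your route --- contained/boundary/disjoint cells, the sandwich $F\le N\le F+b$ with $b\le 2$, the length comparison $m(y-x)-b\le F\le m(y-x)$, and the split $b=0$ versus $b\ge 1$ to upgrade $\le 2$ to $<2$ --- is a geometric recasting of the same estimate. What your version buys is care at the boundary: the paper's identification of the intersection as $\{p_i\mid i\in\{s,\ldots,t-1\}\}$ is actually slightly imprecise at ties (e.g.\ when $p_s=x$ the point $p_s$ itself lies in $[x,y)$ and the true count shifts by one); the final strict bound survives those degenerate cases, but the paper does not check them, whereas your accounting absorbs them automatically (a boundary cell contributes $0$ or $1$, full stop) and your $b=0$ case even pinpoints when the discrepancy is exactly $0$. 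What the paper's version buys is brevity: one subtraction of inequalities yields strictness uniformly, where you need the explicit case split. Both arguments are complete modulo Lemma~\ref{lemmfxy2}, which you correctly treat as given.
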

\begin{proof}
 Lemma~\ref{lemmfxy2} and Equation~(\ref{eq:FPSI-explicit}) implies that 
  there exists $s, t \in \{0,1,\ldots,2^k-1\}$ such that $s \leq t$ and 
\begin{eqnarray}
 \frac{s}{2^k}+\frac{\theta(s)}{2^k} \leq &x& < \frac{s+1}{2^k}+\frac{\theta(s+1)}{2^k} \label{eq4} \\
 \frac{t}{2^k}+\frac{\theta(t)}{2^k} \leq &y& < \frac{t+1}{2^k}+\frac{\theta(t+1)}{2^k} \label{eq5} 
\end{eqnarray}
 where we assume 
\begin{eqnarray*}
 \frac{-1}{2^k}+\frac{\theta(-1)}{2^k} = 0 &\mbox{and}& 
 \frac{2^k}{2^k}+\frac{\theta(2^k)}{2^k} = 1 
\end{eqnarray*}
  for convenience. 
 Then, it is clear that 
\begin{eqnarray*}
 \Phi[z_0, z_0+2^k) \cap [x,y) 
 &=& \left \{ \frac{i}{2^k}+\frac{\theta(i)}{2^k} \mid i \in \{s, s+1, \ldots, t-1 \} \right \}
\end{eqnarray*} 
 where $s=t$ means that $\Phi[z_0, z_0+2^k) \cap [x,y) = \emptyset$. 
 By (\ref{eq4}) and (\ref{eq5}), we obtain 
\begin{eqnarray*}
 2^k \cdotp x - \theta(s+1)-1  < &s& \leq 2^k \cdotp x -\theta(s), \\
 2^k \cdotp y - \theta(t+1)-1  < &t& \leq 2^k \cdotp y -\theta(t)
\end{eqnarray*}
 respectively, and hence we obtain that 
\begin{eqnarray*}
2^k \cdotp (y-x) -\theta(t+1)-1+\theta(s) \ <\ 
t-s 
\ <\ 2^k \cdotp (y-x) -\theta(t)+\theta(s+1)+1. 
\end{eqnarray*}
 Since $| \Phi[z_0, z_0+2^k) \cap [x,y)| = t-s$ and 
  $0 \leq \theta(i) < 1$ ($i \in \{0,1,\ldots,2^k-1\}$), 
\begin{eqnarray*}
 2^k \cdotp (y-x) -2 < \left|\Phi[z_0, z_0+2^k) \cap [x,y) \right| < 2^k \cdotp(y-x) +2
\end{eqnarray*}
 holds, and we obtain the claim. 
\end{proof}
 Using Lemma~\ref{lemmIvu1}, 
  we obtain the following.  
\begin{lemma}\label{theoIvu3}
 Let $z_0 \in \mathbb{Z}_{\geq 0}$, $z \in \mathbb{Z}_{> 0}$, and let $x,y \in [0,1)$ satisfy $x<y$. 
 Then, 
\begin{eqnarray*}
\left|\frac{\left| \Phi[z_0, z_0+z) \cap [x,y)\right|}{z} - (y-x) \right| < \frac{2 \lfloor \lg z \rfloor + 2}{z} 
\end{eqnarray*}
  holds. 
\end{lemma}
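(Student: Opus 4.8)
The plan is to reduce the arbitrary block $\Phi[z_0, z_0+z)$ to dyadic blocks whose lengths are powers of two, apply Lemma~\ref{lemmIvu1} to each, and then control the accumulated error by the number of dyadic pieces, which is $\Order(\lg z)$.

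First I would decompose $z$ using its binary representation, writing $z = \sum_{j=1}^{m} 2^{k_j}$ as a sum of $m$ distinct powers of two. Since the nonzero bits of $z$ lie among the positions $0, 1, \ldots, \lfloor \lg z \rfloor$, we have $m \le \lfloor \lg z \rfloor + 1$. Using this, I would partition the integer range $\{z_0, z_0+1, \ldots, z_0+z-1\}$ into $m$ consecutive blocks, the $j$-th block being $\{a_j, \ldots, a_j + 2^{k_j}-1\}$ with $a_1 = z_0$ and $a_{j+1} = a_j + 2^{k_j}$, so that each block has exactly the form $[a_j, a_j + 2^{k_j})$ required by Lemma~\ref{lemmIvu1}.

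Next I would observe that $\psi$ is injective on $\mathbb{Z}_{\geq 0}$: distinct integers have distinct binary expansions, hence distinct bit-reversed dyadic values $\psi(\cdot)$ by the definition~\eqref{eq:def-psi}. Consequently, over any disjoint union of integer ranges the corresponding sets $\Phi$ are disjoint and their cardinalities add, so that $\left|\Phi[z_0, z_0+z) \cap [x,y)\right| = \sum_{j=1}^{m} \left|\Phi[a_j, a_j+2^{k_j}) \cap [x,y)\right|$, while $\sum_{j=1}^m 2^{k_j} = z$. Applying Lemma~\ref{lemmIvu1} to each block gives $\bigl|\, \left|\Phi[a_j, a_j+2^{k_j}) \cap [x,y)\right| - 2^{k_j}(y-x) \bigr| < 2$, and summing these estimates via the triangle inequality yields
\[
\Bigl|\, \left|\Phi[z_0, z_0+z) \cap [x,y)\right| - z(y-x) \Bigr| < 2m \le 2(\lfloor \lg z \rfloor + 1).
\]
Dividing through by $z$ then produces the claimed bound $\frac{2\lfloor \lg z \rfloor + 2}{z}$.

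The binary decomposition and the triangle-inequality summation are routine; the one point that genuinely needs care is the additivity of the counts across the blocks, which rests on the injectivity of $\psi$. I expect this to be the main thing to pin down, although it follows directly from the bit-reversal description of $\psi$ in~\eqref{eq:def-psi} and was already implicitly used in establishing~\eqref{eq:coloF}.
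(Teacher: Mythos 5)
Your proposal is correct and follows essentially the same route as the paper: the paper likewise decomposes $z$ by its binary representation into at most $\lfloor \lg z \rfloor + 1$ dyadic blocks, applies Lemma~\ref{lemmIvu1} to each, and sums the per-block errors. The only difference is cosmetic: you make explicit the injectivity of $\psi$ needed for the cardinalities to add across blocks, a point the paper uses silently in its splitting identity \eqref{eq:sep-Psi}.
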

\begin{proof}
 For simplicity, 
  let 
\begin{eqnarray*}
  \Phi^*[z_0, z_0+ \ell ) \defeq \Phi[z_0, z_0+ \ell ) \cap [x,y)  
\end{eqnarray*}
 for any $ \ell  \in \mathbb{Z}_{>0}$. 
 Then, notice that 
\begin{eqnarray}
|\Phi^*[z_0, z_0+z)| 
= \left|\Phi^*[z_0, z_0+z') \right| + \left|\Phi^*[z_0+z', z_0+z) \right|
\label{eq:sep-Psi}
\end{eqnarray}
  holds for any $z' \in \mathbb{Z}_{> 0}$ satisfying $z' < z$. 
 Now, suppose $z$ is represented as 
  $z=\sum_{j=0}^{\lfloor \lg y \rfloor} \beta_j(z) \cdotp 2^j$ in binary, 
  where $\beta_j(z) \in \{0,1\}$. 
 Using Lemma~\ref{lemmIvu1}, we obtain that 
\begin{eqnarray*}
|\Phi^*[z_0, z_0+z)|&=& 
  \left|\Phi^*\left[z_0, z_0+\sum_{j=0}^{\lfloor \lg z \rfloor} \beta_j(z) \cdotp 2^j \right)\right| \\
 &=& 
    \left|\Phi^*\left[ z_0, z_0+\beta_0(z) \cdotp 2^0\right) \right| 
    +\sum_{k=0}^{\lfloor \lg z \rfloor-1} 
     \left|\Phi^*\left[ z_0+\sum_{j=0}^{k}\beta_j(z) \cdotp 2^j, z_0+\sum_{j=0}^{k+1}\beta_j(z) \cdotp 2^j\right) \right|\\
 &<& \beta_0(z) \cdotp 2^0 \cdotp (y-x)+2 
    +\sum_{k=0}^{\lfloor \lg z \rfloor-1}  \left( \beta_{k+1}(z) \cdotp 2^{k+1} \cdotp (y-x) + 2 \right) \\
 &=& \sum_{k=0}^{\lfloor \lg z \rfloor}  \left( \beta_k(z) \cdotp 2^k \cdotp (y-x) + 2 \right) \\
 &=& 2 (\lfloor \lg z \rfloor+1) + (y-x) \sum_{k=0}^{\lfloor \lg z \rfloor} \beta_k(z) \cdotp 2^k\\
 &=& 2 (\lfloor \lg z \rfloor+1) + z \cdotp (y-x). 
\end{eqnarray*}
 In a similar way, 
  we also have 
\begin{eqnarray*}
|\Phi^*[z_0, z_0+z)| > 2 (\lfloor \lg z \rfloor-1) + z \cdotp (y-x), 
\end{eqnarray*}
 and we obtain the claim. 
\end{proof}
 By Lemma~\ref{theoIvu3}, it is not difficult to see 
 Theorem~\ref{thm:upper-logM} and Lemma~\ref{bound:vandercz} holds. 

\subsection{Proof of Theorem~\ref{thm:mixupper-vertexrr2}}\label{appendix:rotor}
By (\ref{eq:mdisc1}) and Observation~\ref{bound:rotorz}, we obtain 
\begin{eqnarray}
\label{eq:appro1}
\left| \chi ^{(T)}_w-\mu ^{(T)}_w\right| 
&\leq &\sum_{t=0}^{T-1} \sum_{u\in V} \sum_{v\in {\cal N}(u)}\left| Z_{v, u}^{(t)}-\chi_v^{(t)}P_{v, u}\right|  \left| P^{T-t-1}_{u, w}-\pi_w \right| \nonumber \\
&\leq &\sum_{t=0}^{T-1} \sum_{u\in V} \left| P^{T-t-1}_{u, w}-\pi_w \right| \sum_{v\in {\cal N}(u)}{\bar \delta}(v)P_{v,u} \nonumber \\
&\leq &\sum_{t=0}^{T-1} \sum_{u\in V} \left| P^{T-t-1}_{u, w}-\pi_w \right| \sum_{v\in {\cal N}(u)}{\bar \delta}(v)\frac{\pi_u P_{u,v}}{\pi_v} \nonumber \\
&\leq &\frac{\max_{v\in V}{\bar \delta}(v)}{\pi_{\min}}\sum_{t=0}^{T-1} \sum_{u\in V} \left| \pi_u P^{T-t-1}_{u, w}-\pi_u \pi_w \right| \sum_{v\in {\cal N}(u)}P_{u,v} \nonumber \\
&=&\frac{\max_{v\in V}{\bar \delta}(v)\pi_w}{\pi_{\min}}\sum_{t=0}^{T-1} \sum_{u\in V} \left| P^{T-t-1}_{w,u}-\pi_u \right|. 
\end{eqnarray}
Thus, we obtain the claim by combining (\ref{eq:mixbi}) and (\ref{eq:appro1}).

\end{document}